\title{When Maximum Stable Set can be solved in FPT time} 
\titlerunning{When Maximum Stable can be solved in FPT time}
\author{\'{E}douard Bonnet}{Univ Lyon, CNRS, ENS de Lyon, Université Claude Bernard Lyon 1, LIP UMR5668, France}{edouard.bonnet@ens-lyon.fr}{https://orcid.org/0000-0002-1653-5822}{}
\author{Nicolas Bousquet}{CNRS, G-SCOP laboratory, Grenoble-INP, France}{nicolas.bousquet@grenoble-inp.fr}{}{}
\author{St\'ephan Thomass\'e}{Univ Lyon, CNRS, ENS de Lyon, Université Claude Bernard Lyon 1, LIP UMR5668, France \and
Institut Universitaire de France}{stephan.thomasse@ens-lyon.fr}{}{}
\author{R\'emi Watrigant}{Univ Lyon, CNRS, ENS de Lyon, Université Claude Bernard Lyon 1, LIP UMR5668, France}{remi.watrigant@ens-lyon.fr}{https://orcid.org/0000-0002-6243-5910}{}
\authorrunning{\'E. Bonnet, N. Bousquet, S. Thomass\'e, R. Watrigant}
\keywords{Parameterized Algorithms, Independent Set, H-Free Graphs}
\newcommand*\Let[2]{\State #1 $\gets$ #2}
\algrenewcommand\algorithmicrequire{\textbf{Precondition:}}
\algrenewcommand\algorithmicensure{\textbf{Postcondition:}}
\newcommand\footnoteref[1]{\protected@xdef\@thefnmark{\ref{#1}}\@footnotemark}
\newcommand{\indepset}{\textsc{Maximum Independent Set}\xspace}
\newcommand{\mis}{\textsc{MIS}\xspace}
\theoremstyle{plain}
\newtheorem{conjecture}[theorem]{Conjecture}
\newtheorem*{classicConj}{Classical MIS Dichotomy Conjecture(H)}
\newtheorem*{paramDichotomy}{Parameterized MIS Dichotomy(H)}
\newtheorem{observation}[theorem]{Observation}
\newcommand{\ie}{\textit{i.e.}\xspace}
\newcommand{\etal}{\textit{et al.}\xspace}
\newcommand{\wrt}{\textit{w.r.t.}\xspace}
\newcommand{\C}{\mathcal{C}}
\newcommand{\G}{\mathcal{G}}
\newcommand{\B}{\mathcal{B}}
\newcommand{\Ramsey}{Ram}
\newcommand{\pk}[1]{$P(#1)$\xspace}
\newcommand{\wc}{weakly connected\xspace}
\newcommand{\wcness}{weakly connectedness\xspace}
\begin{document}

\maketitle

\begin{abstract}
  \textsc{Maximum Independent Set} (\textsc{MIS} for short) is in general graphs the paradigmatic $W[1]$-hard problem.
  In stark contrast, polynomial-time algorithms are known when the inputs are restricted to structured graph classes such as, for instance, perfect graphs (which includes bipartite graphs, chordal graphs, co-graphs, etc.) or claw-free graphs.
  In this paper, we introduce some variants of co-graphs with \emph{parameterized noise}, that is, graphs that can be made into disjoint unions or complete sums by the removal of a certain number of vertices and the addition/deletion of a certain number of edges per incident vertex, both controlled by the parameter.
  We give a series of FPT Turing-reductions on these classes and use them to make some progress on the parameterized complexity of \textsc{MIS} in $H$-free graphs.
We show that for every fixed $t \geqslant 1$, \textsc{MIS} is FPT in $P(1,t,t,t)$-free graphs, where $P(1,t,t,t)$ is the graph obtained by substituting all the vertices of a four-vertex path but one end of the path by cliques of size $t$.
We also provide randomized FPT algorithms in dart-free graphs and in cricket-free graphs.
This settles the FPT/W[1]-hard dichotomy for five-vertex graphs $H$.
\end{abstract}

\section{Introduction}\label{sec:intro}
A \emph{stable set} or \emph{independent set} in a graph is a subset of vertices which are pairwise non-adjacent. 
Finding an independent set of maximum cardinality, called \indepset (or \mis for short), is not only NP-hard to solve \cite{GJ79} but also to approximate within ratio $n^{1-\varepsilon}$ \cite{Hastad96,Zuk07}.
One can then wonder if efficient algorithms exist with the additional guarantee that $k$, the size of the maximum stable set, is fairly small compared to $n$, the number of vertices of the input (think, for instance, $k \leqslant \log n$).
It turns out that, for any computable function $h = \omega(1)$ (but whose growth can be arbitrarily slow), \mis is unlikely to admit a polynomial-time algorithm even when $k \leqslant h(n)$.
In parameterized complexity terms, \mis is $W[1]$-hard \cite{DF13}.
More quantitatively, \mis cannot be solved in time $f(k)n^{o(k)}$ for any computable function $f$, unless the Exponential Time Hypothesis fails.
This is quite a statement when a trivial algorithm for \mis runs in time $n^{k+2}$, and a simple reduction to triangle detection yields a $n^{\frac{\omega k}{3}+O(1)}$-algorithm, where $\omega$ is the best exponent known for matrix multiplication.

It thus appears that \mis on general graphs is totally impenetrable.
This explains why efforts have been made on solving \mis in subclasses of graphs.
The most emblematic result in that line of works is a polynomial-time algorithm in perfect graphs \cite{Grotschel81}.
Indeed, perfect graphs generalize many graph classes for which \mis is in P: bipartite graphs, chordal graphs, co-graphs, etc.
In this paper, we put the focus on classes of graphs for which \mis can be solved in FPT time (rather than in polynomial-time). 
For graphs with bounded degree $\Delta$, the simple branching algorithm has FPT running time $(\Delta+1)^kn^{O(1)}$.
The same observation also works more generally for graphs with bounded average degree, or even $d$-degenerate graphs.
A non-trivial result is that \mis remains FPT in arguably the most general class of sparse graphs, \emph{nowhere dense graphs}.
Actually, deciding first-order formulas of size $k$ can be done in time $f(k)n^{1+\varepsilon}$ on any nowhere dense class of graphs \cite{GroheKS17}.
Since \mis and the complement problem, \textsc{Maximum Clique}, are both expressible by a first-order formula of length $O(k^2)$, $\exists v_1, \ldots, v_k \bigwedge_{i,j} (\neg)E(v_i,v_j)$, there is an FPT algorithm on nowhere dense graphs and also on complements of nowhere dense graphs.
A starting point of the present paper is to design FPT Turing-reductions in classes containing both very dense and very sparse graphs.

\subparagraph*{Co-graphs with parameterized noise.}
If $G$ and $H$ are two graphs, we can define two new graphs: $G \cup H$, their disjoint union, and $G \oplus H$ their (complete) sum, obtained from the disjoint union by adding all the edges from a vertex of $G$ to a vertex of $H$. 
Then, the hereditary class of co-graphs can be inductively defined by: $K_1$ (an isolated vertex) is a co-graph, and $G \cup H$ and $G \oplus H$ are co-graphs, if $G$ and $H$ are themselves co-graphs.
So the construction of a co-graph can be seen as a binary tree whose internal nodes are labeled by $\cup$ or $\oplus$, and leaves are $K_1$.
Finding the tree of operations building a given co-graph, sometimes called \emph{co-tree}, can be done in linear time \cite{CoPeSt85}.
This gives a simple algorithm to solve \mis on co-graphs: $\alpha(K_1)=1$, $\alpha(G \cup H)=\alpha(G)+\alpha(H)$, and $\alpha(G \oplus H)=\max(\alpha(G),\alpha(H))$.

We add a \emph{parameterized noise} to the notion of co-graphs.
More precisely, we consider graphs that can be made disjoint unions or complete sums by the deletion of $g(k)$ vertices and the edition (\ie, addition or deletion) of $d(k)$ edges per incident vertex.
We design a series of FPT Turing-reductions on several variants of these classes using the so-called \emph{iterative expansion} technique \cite{ChLiLuSzZh12,BonnetBCTW18}, Cauchy-Schwarz-like inequalities, and K\H{o}v\'ari-S\'os-Tur\'an's theorem.
This serves as a crucial foundation for the next part of the paper, where we explore the parameterized complexity of \mis in $H$-free graphs (\ie, graphs not containing $H$ as an induced subgraph).
However, we think that the FPT routines developed on \emph{co-graphs with parameterized noise} may also turn out to be useful outside the realm of $H$-free graphs.

\subparagraph*{Classical and parameterized dichotomies in $H$-free graphs.}
The question of whether \mis is in P or NP-complete in $H$-free graphs, for each connected graph $H$, goes back to the early eighties.
However, a full dichotomy is neither known nor does it seem within reach in the near future.
For three positive integers $i, j, k$, let $S_{i,j,k}$ be the tree with exactly one vertex of degree three, from which start three paths with $i$, $j$, and $k$ edges, respectively.
The claw is the graph $S_{1,1,1}$, thus $\{S_{i,j,k}\}_{i \leqslant j \leqslant k}$ is the set of all the subdivided claws.
We denote by $P_\ell$ the path on $\ell$ vertices.

If $G'$ is the graph obtained by subdividing each edge of a graph $G$ exactly $2t$ times, Alekseev observed that $\alpha(G') = \alpha(G) + t|E(G)|$ \cite{Ale82}.
This shows that \mis remains NP-hard on graphs which locally look like paths or subdivided claws (one can perform the subdivision on sub-cubic graphs $G$, on which \mis remains NP-complete).
In other words, if a connected graph $H$ is not a path nor a subdivided claw then \mis is NP-complete on $H$-free graphs \cite{Ale82}.
The \mis problem is easy on $P_4$-free graphs, which are exactly the co-graphs.
Already on $P_5$-free graphs, a polynomial algorithm is much more difficult to obtain.
This was done by Lokshtanov \etal \cite{LoVaVi14} using the framework of potential maximal cliques.
A quasi-polynomial algorithm was proposed for $P_6$-free graphs \cite{LokshtanovPL18}, and recently, a polynomial-time algorithm was found by Grzesik~\etal \cite{GrzesikKPP19}.
Brandst\"{a}dt and Mosca showed how to solve \mis in polynomial-time on ($P_7$, triangle)-free graphs \cite{BrandstadtM18}.
This result was then generalized by the same authors on ($S_{1,2,4}$, triangle)-free graphs \cite{abs-1806-09472}, and by Maffray and Pastor on ($P_7$, bull\footnote{the bull is obtained by adding a pendant neighbor to two distinct vertices of the triangle ($K_3$)})-free graphs (as well as ($S_{1,2,3}$, bull)-free graphs) \cite{MaffrayP18}.
Bacs\'o \etal \cite{Bacso19} presented a subexponential-time $2^{O(\sqrt{t n \log n})}$-algorithm in $P_t$-free graphs, for every integer $t$.
Nevertheless, the classical complexity of \mis remains wide open on $P_t$-free graphs, for $t \geqslant 7$.

On claw-free graphs \mis is known to be polynomial-time solvable \cite{Minty80,Sbihi80}.
Recently, this result was generalized to $\ell$claw-free graphs \cite{BrandstadtM18a} (where $\ell$claw is the disjoint union of $\ell$ claws).
On fork-free graphs (the fork is $S_{1,1,2}$) \mis admits a polynomial-time algorithm \cite{Alekseev04}, and so does its weighted variant \cite{Lozin08}.
The complexity of \mis is open for $S_{1,1,3}$-free graphs and $S_{1,2,2}$-free graphs, and there is no triple $i \leqslant j \leqslant k$, for which we know that \mis is NP-hard on $S_{i,j,k}$-free graphs.
Some subclasses of $S_{i,j,k}$-free graphs are known to admit polynomial algorithms for \mis: for instance $(S_{1,1,3},K_{t,t})$-free graphs \cite{DabrowskiLWZ16}, subcubic $S_{2,t,t}$-free graphs \cite{Harutyunyan18} (building upon \cite{LozinMR15}, and generalizing results presented in \cite{Malyshev13,Malyshev17} for subcubic \emph{planar} graphs), bounded-degree $tS_{1,t,t}$-free graphs \cite{LozinM07}, for any fixed positive integer $t$.
This leads to the following conjecture:
\begin{classicConj}
  For every connected graph $H$,

  \indepset in $H$-free graphs is in P iff $H \in \{P_\ell\}_\ell \cup \{S_{i,j,k}\}_{i \leqslant j \leqslant k}$. 
\end{classicConj}

An even stronger conjecture is postulated by Lozin (see Conjecture 1 in \cite{Lozin17}). 
Dabrowski \etal initiated a systematic study of the parameterized complexity of \mis on $H$-free graphs \cite{DabrowskiThesis,Dabrowski12}.
In a nutshell, parameterized complexity aims to design $f(k)n^{O(1)}$-algorithms (FPT algorithm, for Fixed-Parameter Tractable), where $n$ is the size of the input, and $k$ is the size of the solution (or another well-chosen parameter), for most often NP-hard problems.
The so-called $W$-hierarchy (and in particular, $W[1]$-hardness) and the Exponential Time Hypothesis (ETH) both provide a framework to rule out such a running time.
We refer the interested reader to two recent textbooks \cite{DF13,CyFoKoLoMaPiPiSa15} and to a survey on the ETH and its consequences \cite{surveyETH}.
In the language of parameterized complexity, the dichotomy problem is the following:
\begin{paramDichotomy}
  Is \mis (randomized) FPT or $W[1]$-hard in $H$-free graphs?  
\end{paramDichotomy}
This question may be even more challenging than its classical counterpart.
Indeed, there is no FPT algorithm known for the classical open cases: $P_7$-, $S_{1,1,3}$-, and $S_{1,2,2}$-free graphs.
Besides, the reduction of Alekseev \cite{Ale82} that we mentioned above does not show $W[1]$-hardness.
Thus, there are \emph{a priori} more candidates $H$ for which the parameterized status of \mis is open.
For instance, by Ramsey's theorem, \mis is FPT on $K_t$-free graphs, for any fixed $t$. Observe that a randomized FPT algorithm for a W[1]-hard problem is highly unlikely, as it would imply a randomized algorithm solving \textsc{$3$-SAT} in subexponential time. 

Dabrowski \etal showed that \mis is FPT\footnote{here and in what follows, the parameter is the size of the solution} in $H$-free graphs, for all $H$ on four vertices, except $H=C_4$ (the cycle on four vertices).
Thomassé \etal presented an FPT algorithm on bull-free graphs~\cite{ThTrVu17}, whose running time was later improved by Perret du Cray and Sau \cite{PedCrSa18}.
Bonnet \etal provided three variants of a parameterized counterpart of Alekseev's reduction \cite{BonnetBCTW18,BoBoChThWa18}.
Although the description of the open cases (see Figure~\ref{fig:FPTcandidates}) is not nearly as nice and compact as for the classical dichotomy, it is noteworthy that they almost correspond to paths and subdivided claws where vertices are blown-up into cliques.

\begin{figure}[h!]
  \centering
  \begin{tikzpicture}
    \begin{scope}[xshift=-6cm]
      \node[draw, circle] (a1) at (0,0) {} ;
      \node[draw, circle] (a2) at (0.75,0) {} ;
      \node[draw, circle] (a3) at (0.75,0.75) {} ;
      \node[draw, circle] (a4) at (0,0.75) {} ;
      \draw (a1) -- (a2) -- (a3) -- (a4) ;
      \draw[very thick, dotted] (a1) -- (a4) ;
    \end{scope}

      \begin{scope}[xshift=-4.6cm]
      \node[draw, circle] (c1) at (0.75,0.75) {} ;
      \node[draw, circle] (c2) at (0,0) {} ;
      \node[draw, circle] (c3) at (0.5,0) {} ;
      \node[draw, circle] (c4) at (1,0) {} ;
      \node[draw, circle] (c5) at (1.5,0) {} ;
      \draw (c3) -- (c1) -- (c2) ;
      \draw (c5) -- (c1) -- (c4) ;
      \end{scope}

    \begin{scope}[xshift=-2.5cm]
      \node[draw, circle] (b1) at (0,0) {} ;
      \node[draw, circle,fill = black] (b2) at (0,0.5) {} ;
      \node[draw, circle] (b3) at (0,1) {} ;
      \node[draw, circle] (b4) at (0.75,0) {} ;
      \node[draw, circle,fill = black] (b5) at (0.75,0.5) {} ;
      \node[draw, circle] (b6) at (0.75,1) {} ;
      \draw (b1) -- (b2) -- (b3) ;
      \draw (b4) -- (b5) -- (b6) ;
      \draw (b2) -- (b5) ;
    \end{scope}

    \draw[very thick,red] (-6.5,-0.3) -- (-1.2,1.3) ;
    \draw[very thick,red] (-1.2,-0.3) -- (-6.5,1.3) ;

    \node at (-0.75,0.45) {And} ;
    \node at (3.2,0.45) {Or} ;
    \draw [decorate,decoration={brace,amplitude=10pt}] (0,-1.6) -- (0,2.4) ;
    \draw [decorate,decoration={brace,amplitude=10pt}] (3.8,-1.5) -- (3.8,2.3) ;
    \draw [decorate,decoration={brace,amplitude=10pt}] (6.2,2.3) --  (6.2,-1.5) ;
    \draw [decorate,decoration={brace,amplitude=10pt}] (6.5,2.4) --  (6.5,-1.6) ;
    
    \def\t{0.22}
    \begin{scope}[xshift=0.2cm,yshift=0.45cm]
      \node[draw,rectangle,rounded corners,inner sep=\t cm] (v1) at (0,0) {} ;
      \node[draw,rectangle,rounded corners,inner sep=\t cm] (v2) at (0.75,0) {} ;
      \node[draw,rectangle,rounded corners,inner sep=\t cm] (v3) at (1.5,0) {} ;
      \node[draw,rectangle,rounded corners,inner sep=\t cm] (v4) at (2.25,0) {} ;
      \draw[very thick,red] (v1) -- (v2) ;
      \draw[very thick,blue] (v2) -- (v3) ;
      \draw[very thick,red] (v3) -- (v4) ;
    \end{scope}

      \begin{scope}[xshift=0cm,yshift=1.5cm]
      \node[draw,rectangle,rounded corners,inner sep=\t cm] (w1) at (5,0.5) {} ;
      \node[draw,rectangle,rounded corners,inner sep=\t cm] (w2) at (4.5,-0.25) {} ;
      \node[draw,rectangle,rounded corners,inner sep=\t cm] (w3) at (5,-0.25) {} ;
      \node[draw,rectangle,rounded corners,inner sep=\t cm] (w4) at (5.5,-0.25) {} ;

      \draw[very thick,red] (w2) -- (w1) -- (w3) ;
      \draw[very thick,red] (w1) -- (w4) ;
      \end{scope}

      \node at (5,0.76) {And} ;
      \begin{scope}[xshift=0cm,yshift=-1cm]
      \node[draw,rectangle,rounded corners,inner sep=\t cm] (z1) at (5,0.5) {} ;
      \node[draw,rectangle,rounded corners,inner sep=\t cm] (z2) at (4.6,-0.25) {} ;
      \node[draw,rectangle,rounded corners,inner sep=\t cm] (z3) at (5.4,-0.25) {} ;

      \node[draw,rectangle,rounded corners,inner sep=\t cm] (z0) at (5,1.2) {} ;
      \node[draw,rectangle,rounded corners,inner sep=\t cm] (z4) at (4,-0.25) {} ;
      \node[draw,rectangle,rounded corners,inner sep=\t cm] (z5) at (6,-0.25) {} ;
      \draw[very thick,red] (z1) -- (z2) -- (z3) -- (z1) ;
      \draw[very thick,red] (z0) -- (z1) ;
      \draw[very thick,red] (z2) -- (z4) ;
      \draw[very thick,red] (z3) -- (z5) ;
       \end{scope}
\end{tikzpicture}
  \caption{The dotted edge represents a path with at least one edge. The filled vertices emphasize two vertices with degree at least three in a tree. The rounded boxes are cliques. A red edge corresponds to a complete bipartite minus at most one edge. A blue edge correspond to a $2K_2$-free bipartite graph. The FPT connected candidates $H$ have to be chordal, without induced $K_{1,4}$ or trees with two branching vertices (\ie, vertices of degree at least $3$), and have to fit on a path with at most one blue edge (and the rest of red edges) or both in a subdivided claw and a line-graph of a subdivided claw with red edges only. A further restriction in the line-graph of subdivided claw is that three vertices each in a different clique of the triangle of red edges cannot induce a $K_1 \cup K_2$ (see \cite{BonnetBCTW18}).}
\label{fig:FPTcandidates}
\end{figure}
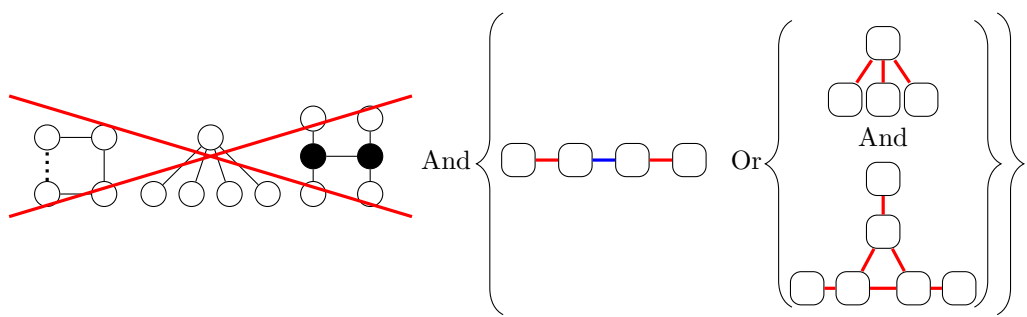

Let us make that idea more formal.
Substituting a graph $H$ at a vertex $v$ of a graph $G$ gives a new graph with vertex set $(V(G) \setminus \{v\}) \cup V(H)$, and the same edges as in $G$ and $H$, plus all the edges $xy$ where $x \in V(H)$, $y \in V(G)$, and $vy \in E(G)$. 
For a sequence of positive integers $a_1,a_2,\ldots,a_\ell$, we denote by \pk{a_1, a_2,\ldots, a_\ell} the graph obtained by substituting a clique $K_{a_i}$ at the $i$-th vertex of a path $P_\ell$, for every $i \in [\ell]$.
We also denote by $T(a,b,c)$ the graph obtained by substituting a clique $K_a$, $K_b$, and $K_c$ to the first, second, and third leaves, respectively, of a claw.
Thus, $T(1,1,1)$ is the claw and $T(1,1,2)$ is called the cricket (see Figure~\ref{fig:cricket}).

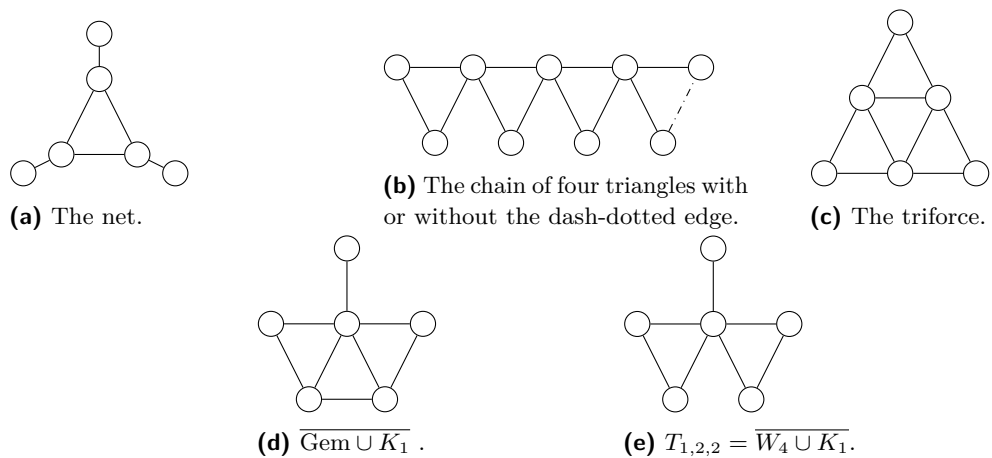
\begin{figure}
  \centering
   \begin{subfigure}[b]{0.3\textwidth}
  \begin{tikzpicture}
    \node[draw,circle] (a) at (0,-0.15) {} ;
    \node[draw,circle] (b) at (0,-0.75) {} ;
    \node[draw,circle] (c) at (-0.5,-1.75) {} ;
    \node[draw,circle] (d) at (0.5,-1.75) {} ;
    \node[draw,circle] (e) at (-1,-2) {} ;
    \node[draw,circle] (f) at (1,-2) {} ;

    \draw (a) -- (b) -- (c) -- (d) -- (b) ;
    \draw (c) -- (e) ;
    \draw (d) -- (f) ;
  \end{tikzpicture}
  \caption{The net.}
  \label{fig:net}
  \end{subfigure}
   \begin{subfigure}[b]{0.3\textwidth}
  \begin{tikzpicture}
    \node[draw,circle] (a) at (0,0) {} ;
    \node[draw,circle] (b) at (1,0) {} ;
    \node[draw,circle] (c) at (2,0) {} ;
    \node[draw,circle] (d) at (3,0) {} ;
    \node[draw,circle] (e) at (4,0) {} ;
    
    \node[draw,circle] (f) at (0.5,-1) {} ;
    \node[draw,circle] (g) at (1.5,-1) {} ;
    \node[draw,circle] (h) at (2.5,-1) {} ;
    \node[draw,circle] (i) at (3.5,-1) {} ;

    \draw (e) -- (d) -- (c) -- (b) -- (a) -- (f) -- (b) -- (g) -- (c) -- (h) -- (d) -- (i) ;
    \draw[dashdotted] (i) -- (e) ;
  \end{tikzpicture}
  \caption{The chain of four triangles with or without the dash-dotted edge.}
  \label{fig:triangles}
   \end{subfigure}
   \qquad
  \begin{subfigure}[b]{0.3\textwidth}
  \begin{tikzpicture}
    \node[draw,circle] (a) at (0,-2) {} ;
    \node[draw,circle] (b) at (1,0) {} ;
    \node[draw,circle] (c) at (2,-2) {} ;
    \node[draw,circle] (d) at (0.5,-1) {} ;
    \node[draw,circle] (e) at (1.5,-1) {} ;
    \node[draw,circle] (f) at (1,-2) {} ;

    \draw (a) -- (f) -- (c) -- (e) -- (d) -- (b) ;
    \draw (e) -- (b) ;
    \draw (e) -- (f) -- (d) -- (a) ;
  \end{tikzpicture}
  \caption{The triforce.}
  \label{fig:unnamed}
  \end{subfigure}
  \qquad
   \begin{subfigure}[b]{0.25\textwidth}
  \begin{tikzpicture}
    \node[draw,circle] (a) at (0,0) {} ;
    \node[draw,circle] (b) at (1,0) {} ;
    \node[draw,circle] (c) at (2,0) {} ;
    \node[draw,circle] (d) at (0.5,-1) {} ;
    \node[draw,circle] (e) at (1.5,-1) {} ;
    \node[draw,circle] (f) at (1,1) {} ;

    \draw (a) -- (b) -- (c) -- (e) -- (d) ;
    \draw (b) -- (f) ;
    \draw (e) -- (b) -- (d) -- (a) ;
  \end{tikzpicture}
  \caption{$\overline{\text{Gem} \cup K_1}$ .}
  \label{fig:net}
   \end{subfigure}
    \qquad
   \begin{subfigure}[b]{0.25\textwidth}
  \begin{tikzpicture}
    \node[draw,circle] (a) at (0,0) {} ;
    \node[draw,circle] (b) at (1,0) {} ;
    \node[draw,circle] (c) at (2,0) {} ;
    \node[draw,circle] (d) at (0.5,-1) {} ;
    \node[draw,circle] (e) at (1.5,-1) {} ;
    \node[draw,circle] (f) at (1,1) {} ;

    \draw (a) -- (b) -- (c) -- (e) ;
    \draw (b) -- (f) ;
    \draw (e) -- (b) -- (d) -- (a) ;
  \end{tikzpicture}
  \caption{$T_{1,2,2}=\overline{W_4 \cup K_1}$.}
  \label{fig:t122}
  \end{subfigure}
  \caption{Some connected chordal $K_{1,4}$-free graphs $H$ for which $H$-free \mis is $W[1]$-hard (see \cite{BonnetBCTW18}). These graphs do not fit the candidate forms of Figure~\ref{fig:FPTcandidates} for subtle reasons and illustrate how delicate the parameterized dichotomy promises to be. In particular, observe that \mis is $W[1]$-hard on $T(1,2,2)$-graphs, whereas we will show in this paper that it is FPT in $T(1,1,2)$-graphs (a.k.a. cricket-free graphs).}
  \label{fig:tough-customers}
\end{figure}

We show in this paper that \mis is (randomized) FPT in $T(1,1,2)$-free graphs (or cricket-free graphs).
This is in sharp contrast with the $W[1]$-hardness for $T(1,2,2)$-free graphs \cite{BoBoChThWa18} (see Figure~\ref{fig:t122}).
It disproves a seemingly reasonable conjecture that FPTness is preserved by adding a true twin to a vertex of $H$.
We thus have a fairly good understanding of the parameterized complexity of \mis when $H$ is obtained by substituting cliques on a claw.
We therefore turn towards the graphs $H$ obtained by substituting cliques on a path.
\mis was shown FPT on \pk{t,t,t}-free graphs \cite{BonnetBCTW18}.
A natural next step is to attack the following conjecture.

\begin{conjecture}\label{conj:ptttt}
  For any integer $t$, \mis can be solved in FPT time in \pk{t,t,t,t}-free graphs.
\end{conjecture}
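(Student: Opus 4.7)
The plan is to bootstrap the FPT algorithm for $P(1,t,t,t)$-free graphs, established earlier in this paper, into one for $P(t,t,t,t)$-free graphs, by guessing a suitable \emph{clique anchor} that plays the role of the single endpoint of the $P(1,t,t,t)$. First, by a Ramsey-type preprocessing I would reduce to the case where $G$ contains many copies of $K_t$: otherwise $G$ is $K_r$-free for $r = R(t,k)$, and MIS is directly FPT either by the first-order model-checking machinery on nowhere dense classes or by elementary Ramsey extraction into an independent set of size $k$.

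Next, I would set up an iterative expansion in the spirit of \cite{ChLiLuSzZh12,BonnetBCTW18}: guess a vertex $v$ of an optimum independent set $S$ together with a clique $Q$ of size $t-1$ that is intended to play the role of the pendant $K_{t-1}$ which would promote an induced $P(1,t,t,t)$ to a forbidden $P(t,t,t,t)$. The key structural observation to prove is that, once $Q$ is fixed, the subgraph $G'$ obtained by restricting to vertices of $G \setminus N[Q]$ that are completely adjacent to $Q$ on the correct side inherits $P(1,t,t,t)$-freeness: any induced $P(1,t,t,t)$ in $G'$ would, combined with $Q$, produce an induced $P(t,t,t,t)$ in $G$. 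One can then recursively invoke the $P(1,t,t,t)$-free MIS subroutine on $G'$, paying only an $f(k)$ factor for the guess of the pair $(v, Q)$.

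The principal obstacle is that the clique anchor $Q$ is not uniquely determined, and its members can have wildly different external neighborhoods, so the vertices of $S$ do not in general align with a single anchor in a clean way. To overcome this, I would lean on the K\H{o}v\'ari--S\'os--Tur\'an theorem to bound the number of vertices that are only partially adjacent to a given anchor clique, and combine it with a Cauchy--Schwarz style averaging argument, exactly in the flavour of the co-graphs-with-parameterized-noise machinery developed in the earlier section. This should let us replace the ``perfect'' anchor $Q$ by an \emph{approximate} anchor, up to the deletion of $g(k)$ vertices and edition of $d(k)$ edges per incident vertex, which is precisely the setting in which our reductions operate.

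At worst, I would expect this programme to yield only a randomised FPT algorithm, obtained by sampling the anchor uniformly from all $K_{t-1}$ copies in $G$, in the same spirit as the dart-free and cricket-free algorithms of this paper; a deterministic algorithm would presumably require an additional derandomisation step, perhaps by enumerating a carefully chosen family of candidate anchors hitting every potentially optimal independent set.
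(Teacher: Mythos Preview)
The statement you are attempting to prove is \emph{Conjecture~\ref{conj:ptttt}}, which the paper explicitly leaves open. There is no proof in the paper to compare your proposal against; the authors only remark that the machinery of Sections~\ref{sec:almostUnionOrSum}--\ref{sec:dartAndCricket} ``may help in settling'' it, and that \pk{t,t,t,t,t}-free graphs might further require potential maximal cliques. So any correct argument here would be a new result, not a reconstruction of something the paper already does.

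Your proposal, as written, has a concrete gap at the central structural step. You want a subgraph $G'$ (depending on an anchor clique $Q$ of size $t-1$) with the property that every induced \pk{1,t,t,t} in $G'$ extends, via $Q$, to an induced \pk{t,t,t,t} in $G$. For that extension to go through, $Q$ must be fully adjacent to the singleton and to the second clique of the \pk{1,t,t,t}, and \emph{completely non-adjacent} to the third and fourth cliques. No single induced subgraph $G'$ defined by adjacency to $Q$ achieves this: if $G'$ consists of vertices fully adjacent to $Q$, then the third and fourth cliques of any \pk{1,t,t,t} inside $G'$ are also adjacent to $Q$ and the extension fails; if $G'$ consists of vertices non-adjacent to $Q$, then the singleton cannot combine with $Q$ into a $K_t$. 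Your description (``vertices of $G \setminus N[Q]$ that are completely adjacent to $Q$ on the correct side'') is self-contradictory for exactly this reason. What you would really need is a \emph{bipartition} of $G'$ into a ``near-$Q$'' part and a ``far-from-$Q$'' part, and then only forbid \pk{1,t,t,t}'s that are correctly aligned with that bipartition --- but that is a much weaker hypothesis than \pk{1,t,t,t}-freeness, and the algorithm of Theorem~\ref{thm:p1ttt} does not apply to it.

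Your fallback paragraph (K\H{o}v\'ari--S\'os--Tur\'an plus averaging to tolerate an approximate anchor) does not address this obstruction: those tools in Section~\ref{sec:almostUnionOrSum} handle noisy \emph{disjoint unions or complete sums}, not a noisy alignment of an induced path pattern to a fixed clique. The difficulty in going from \pk{1,t,t,t} to \pk{t,t,t,t} is precisely that Lemma~\ref{lem:tImpliesAll} and its consequences in the proof of Theorem~\ref{thm:p1ttt} rely on the pendant vertex being a \emph{single} vertex, so that ``$t$-seeing two cliques forces full adjacency''; with a pendant $K_t$ this inference breaks, and no amount of anchor-guessing restores it. In short, your programme identifies the right wish (reduce to the \pk{1,t,t,t} case) but not a mechanism that makes the reduction sound, which is consistent with the problem being open.
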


We denote by $P_\ell(t)$ the graph \pk{t,t,\ldots,t} where the sequence $t,t, \ldots, t$ is of length $\ell$.
We further conjecture the following, which is a far more distant milestone.
\begin{conjecture}\label{conj:ptGen}
For any integers $t$ and $\ell$, \mis is FPT in $P_\ell(t)$-free graphs.
\end{conjecture}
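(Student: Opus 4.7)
The natural strategy is induction on $\ell$. For $\ell \leqslant 3$ a $P_\ell(t)$-free graph has Ramsey-bounded independence number, so \mis is trivially FPT. The critical base case of the induction is $\ell = 4$, which is precisely Conjecture~\ref{conj:ptttt}. Granting it, we tackle the step $\ell-1 \Rightarrow \ell$ for $\ell \geqslant 5$.

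Following the iterative-expansion paradigm of \cite{ChLiLuSzZh12,BonnetBCTW18} already deployed in this paper, we reduce to the one-extra-vertex version: given a current independent set $S = \{s_1, \ldots, s_k\}$, find an independent set of size $k+1$ extending $S$ in the prescribed way, or certify that none exists. We partition $V(G) \setminus N[S]$ into branches $B_1, \ldots, B_k$, each $B_i$ anchored at $s_i$, and apply a Ramsey-style cleaning to the bipartite graph between $N(s_i)$ and $B_i$ in order to extract a clique $K_i \subseteq N(s_i)$ of size $t$ together with a large subset $B_i' \subseteq B_i$ on which the bipartite pattern $(K_i, B_i')$ is twin-like, i.e.\ complete-bipartite modulo parameterized noise.

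Inside each $B_i'$ we then perform a further distance layering from $K_i$ and isolate the subset $B_i'' \subseteq B_i'$ of vertices \emph{non-adjacent} to $K_i$. The key structural claim is that $G[B_i'']$ is $P_{\ell-1}(t)$-free: any induced \pk{t,t,\ldots,t} on $\ell-1$ cliques in $B_i''$ could be prepended with $K_i$ to yield an induced \pk{t,t,\ldots,t} on $\ell$ cliques in $G$, since $K_i$ is fully adjacent (by the cleaning) to the boundary layer of $B_i'$ that must host the ``first'' clique of the path while being non-adjacent to the remaining cliques. We then invoke the inductive FPT algorithm on each $G[B_i'']$, handle the bounded number of ``interface'' vertices via the Cauchy--Schwarz and K\H{o}v\'ari--S\'os--Tur\'an estimates developed earlier for co-graphs with parameterized noise, and recombine the partial solutions.

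The principal obstacle is twofold. First, Conjecture~\ref{conj:ptttt} is itself wide open: the techniques available for \pk{1,t,t,t}-free graphs exploit the trivial $K_1$ at the pendant in an essential way, and without it we lack any mechanism to control $K_t$-blowups at \emph{both} endpoints of a $P_4$ simultaneously, so every step of the induction above is gated by this base case. Second, even granting the base, the layering step is delicate, because the first clique of a purported induced $P_{\ell-1}(t)$ inside $B_i''$ need not automatically be fully adjacent to the extracted $K_i$ in the required pattern; arranging this will require iterating the Ramsey extraction in lockstep with the distance layering, and the resulting tower of K\H{o}v\'ari--S\'os--Tur\'an bounds is what will govern the final FPT dependence on $k$, $t$, and $\ell$.
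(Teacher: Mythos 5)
This statement is Conjecture~\ref{conj:ptGen} in the paper, not a theorem: the paper offers no proof, explicitly calls it ``a far more distant milestone,'' and remarks that ``to solve \cref{conj:ptGen}, let alone the full parameterized dichotomy, some new ideas will be needed.'' So there is no proof in the paper to compare yours against, and you are correct (and commendably candid) that what you have written is a plan, not a proof. The gaps you identify are real and fatal as things stand. The base case $\ell=4$ (\cref{conj:ptttt}) is open, and the only $P_4$-with-clique-substitution result the paper actually proves is \pk{1,t,t,t}-free (\cref{thm:p1ttt}), whose argument leans hard on the pendant vertex being a single vertex: the kernel is built by finding a \emph{vertex} $u$ and two cliques $D_i, D$ so that $uD_iD\cdot$ forms the forbidden pattern. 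With cliques at both endpoints there is no analogous one-vertex anchor, which is exactly why that case is still open.

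Beyond the base case, the inductive step as written is internally inconsistent. You define $B_i''$ as the vertices of $B_i'$ \emph{non-adjacent} to $K_i$, and then claim that any induced $P_{\ell-1}(t)$ contained in $B_i''$ can be prepended by $K_i$ to produce an induced $P_\ell(t)$. But for that to be an induced $P_\ell(t)$, $K_i$ must be complete to the first clique of the path and anticomplete to the rest; if the whole path lies in $B_i''$, then $K_i$ is anticomplete to all of it and no $P_\ell(t)$ is formed, so no contradiction arises and no $P_{\ell-1}(t)$-freeness of $G[B_i'']$ follows. You gesture toward the first clique living in a ``boundary layer'' instead, but then the object on which you want to recurse is not $G[B_i'']$, and the structural claim would have to be reformulated on a mixed set spanning two layers whose interactions with $K_i$ are not controlled by a single Ramsey cleaning. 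In other words, even granting \cref{conj:ptttt} as an oracle, the layering step is not merely ``delicate'' --- it does not currently yield a $P_{\ell-1}(t)$-free subgraph to recurse on, and some genuinely new decomposition idea (plausibly along the lines of potential maximal cliques, as the authors speculate for $\ell=5$) seems necessary before the K\H{o}v\'ari--S\'os--Tur\'an accounting could even be set up.
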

Let us recall though that the parameterized complexity of \mis is open in $P_7$-free graphs, and no \emph{easy} FPT algorithm is known on $P_5$-free graphs.
In general, we believe that there will be very few connected candidates (as described by Figure~\ref{fig:FPTcandidates}) which will not end up in (randomized) FPT.
As a first empirical evidence, we show that the four candidates remaining among the 34 graphs on five vertices indeed all lead to (randomized) FPT algorithms.

\begin{figure}[h!]
  \centering
\begin{subfigure}[b]{0.24\textwidth}
 \begin{tikzpicture} [scale=1]
 \node[draw, circle] (a) at (0,0) {} ;
 \node[draw, circle] (b) at (0,1) {} ;
 \node[draw, circle] (c) at (1,0) {} ;
 \node[draw, circle] (d) at (1,1) {} ;
 \node[draw, circle] (e) at (2,0.5) {} ;
 \draw (a) -- (b);
 \draw (d) -- (a);
 \draw (a) -- (c);
 \draw (d) -- (b);
 \draw (e) -- (c);
  
   \end{tikzpicture} 
   \caption{$\bar{P}$}
\end{subfigure}
\begin{subfigure}[b]{0.24\textwidth}
 \begin{tikzpicture} [scale=1]
 \node[draw, circle] (a) at (0,0) {} ;
 \node[draw, circle] (b) at (0,1) {} ;
 \node[draw, circle] (c) at (1,0) {} ;
 \node[draw, circle] (d) at (1,1) {} ;
 \node[draw, circle] (e) at (2,0.5) {} ;
 \draw (d) -- (b);
 \draw (d) -- (a);
 \draw (a) -- (b);
 \draw (c) -- (a);
 \draw (e) -- (c);
 \draw (d) -- (c);
  
    \end{tikzpicture}
    \caption{Kite}
\end{subfigure}
\begin{subfigure}[b]{0.24\textwidth}
 \begin{tikzpicture} [scale=1]
 \node[draw, circle] (a) at (0,0) {} ;
 \node[draw, circle] (b) at (0,1) {} ;
 \node[draw, circle] (c) at (1,0) {} ;
 \node[draw, circle] (d) at (1,1) {} ;
 \node[draw, circle] (e) at (2,0.5) {} ;
 \draw (d) -- (b);
 \draw (d) -- (a);
 \draw (a) -- (b);
 \draw (c) -- (a);
 \draw (e) -- (d);
 \draw (d) -- (c);
  
   \end{tikzpicture} 
   \caption{Dart}
\end{subfigure}
\begin{subfigure}[b]{0.24\textwidth}
 \begin{tikzpicture} [scale=1]
 \node[draw, circle] (a) at (0,0) {} ;
 \node[draw, circle] (b) at (0,1) {} ;
 \node[draw, circle] (c) at (1,0) {} ;
 \node[draw, circle] (d) at (1,1) {} ;
 \node[draw, circle] (e) at (2,0.5) {} ;
 \draw (a) -- (b);
 \draw (d) -- (a);
 \draw (d) -- (c);
 \draw (d) -- (b);
 \draw (e) -- (d);
  
   \end{tikzpicture}
 \caption{Cricket}
 \label{fig:cricket}
\end{subfigure}
\caption{The four (out of $34$) remaining cases on five vertices for the FPT/$W[1]$-hard dichotomy (see \cite{BoBoChThWa18}). In this paper, we come up with new tools and solve all of them in (randomized) FPT.}
\label{fig:remainingfour}
\end{figure}
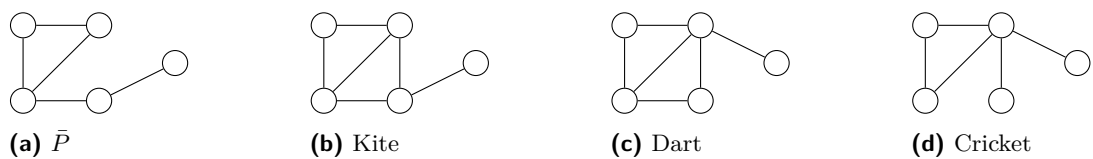

\subparagraph*{Organization of our results.}
The rest of the paper is organized as follows.
In Section~\ref{sec:prelim}, we introduce FPT Turing-reductions relevant to the subsequent section.
In Section~\ref{sec:almostUnionOrSum}, we give a series of FPT algorithms in far-reaching generalizations of co-graphs: graphs where the deletion of $g(k)$ vertices leads to a separation which is either very sparse or very dense, in a way that is controlled by the parameter. 
In Section~\ref{sec:p1ttt}, we use these results to obtain an FPT algorithm on \pk{1,t,t,t}-free graphs for any positive integer $t$, taking a stab at \cref{conj:ptttt}.
Observe that this result settles at the same time $\overline P$ ($=$\pk{1,1,1,2}) and the kite ($=$\pk{1,1,2,1}). 
In Section~\ref{sec:dartAndCricket}, we finish the FPT/$W[1]$-hard classification for five-vertex graphs by designing randomized FPT algorithms on dart-free graphs and cricket-free graphs.

We believe that the results of Section~\ref{sec:almostUnionOrSum} as well as the techniques developed in~\cref{sec:p1ttt,sec:dartAndCricket} may help in settling \cref{conj:ptttt}.
For \pk{t,t,t,t,t}-free graphs, it is possible that one will have to combine the framework of potential maximal cliques with our techniques.
To solve \cref{conj:ptGen}, let alone the full parameterized dichotomy, some new ideas will be needed. 
The FPT algorithms of the current paper merely serve for classification purposes, and are not practical.
A possible line of work is to get improved running times for the already established FPT cases.
We also hope that the results of Section~\ref{sec:almostUnionOrSum} will prove useful in a context other than $H$-free graphs.

\section{Preliminaries}\label{sec:prelim}

Here, we introduce some basics about graph notations, Ramsey numbers, and FPT algorithms.

\subsection{Notations}
For any pair of integers $i \leqslant j$, we denote by $[i,j]$ the set of integers $\{i,i+1,\ldots,j-1,j\}$, and for any positive integer $i$, $[i]$ is a shorthand for $[1,i]$.
We use the standard graph terminology and notations \cite{Die12}.
All our graphs are finite and simple, \ie, they have no multiple edge nor self-loop.
For a vertex $v$, we denote by $N(v)$ the set of neighbors of $v$, and $N[v] := N(v) \cup \{v\}$.
For a subset of vertices $S$, we set $N(S) := \bigcup_{v \in S}N(v) \setminus S$ and $N[S] := N(S) \cup S$. The \emph{degree} (resp. \emph{co-degree}) of a vertex $v$ is $|N(v)|$ (resp. $|V \setminus N[v]|$).
If $G$ is a graph and $X$ is a subset of its vertices, $G[X]$ is the subgraph induced by $X$ and $G-X$ is a shorthand for $G[V(G) \setminus X]$.
We denote by $\alpha(G)$ the independence number, that is the size of a maximum independent set.
If $H$ and $G$ are two graphs, we write $H \subseteq_i G$ to mean that $H$ is an induced subgraph of $G$, and $H \subset_i G$ if $H$ is a proper induced subgraph of $G$.
We denote by $K_\ell$, $P_\ell$, $C_\ell$, the clique, path, cycle, respectively, on $\ell$ vertices, and by $K_{s,t}$ the complete bipartite graph with $s$ vertices on one side and $t$, on the other.
The claw is $K_{1,3}$, and the paw is the graph obtained by adding one edge to the claw. 
If $H$ is a graph and $t$ is a positive integer, we denote by $tH$ the graph made of $t$ disjoint copies of $H$.
For instance, $2K_2$ corresponds to the disjoint union of two edges.
We say that a class of graphs $\mathcal C$ is \emph{hereditary} if it is closed by induced subgraph, \ie, $\forall H, G$, $(G \in \mathcal C \land H \subseteq_i G) \Rightarrow H \in \mathcal C$.

\subsection{Ramsey numbers}
For two positive integers $a$ and $b$, $R(a,b)$ is the smallest integer such that any graph with at least that many vertices has an independent set of size $a$ or a clique of size $b$.
By Ramsey's theorem, $R(a,b)$ always exists and is no greater than ${a+b \choose a}$.
For the sake of convenience, we set $\Ramsey(a,b) := {a+b \choose a} = {a+b \choose b}$.
We will use repeatedly a constructive version of Ramsey's theorem.
\begin{lemma}[folklore]\label{lem:constructiveRamsey}
Let $a$ and $b$ be two positive integers, and let $G$ be a graph on at least $\Ramsey(a, b)$ vertices. Then an independent set of size $a$ or a clique of size $b$ can be found in linear time.
\end{lemma}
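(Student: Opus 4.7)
The plan is to follow the classical inductive proof of Ramsey's theorem, rendered as a recursive algorithm, and then argue that the bookkeeping never costs more than linear time. I proceed by induction on $a+b$. The base cases $a=1$ or $b=1$ are immediate: any single vertex forms both a singleton independent set and a singleton clique, and $\Ramsey(a,b) \geqslant 1$ in these cases.

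For the inductive step, assume $a,b \geqslant 2$ and invoke Pascal's identity
\[
\binom{a+b}{a} \,=\, \binom{a+b-1}{a-1} + \binom{a+b-1}{a},
\]
which in the lemma's notation reads $\Ramsey(a,b) = \Ramsey(a-1,b) + \Ramsey(a,b-1)$. Pick any vertex $v \in V(G)$. Since $|V(G)|-1 \geqslant \Ramsey(a-1,b) + \Ramsey(a,b-1) - 1$, by pigeonhole either $v$ has at least $\Ramsey(a,b-1)$ neighbors or at least $\Ramsey(a-1,b)$ non-neighbors. In the first case, recursively search $G[N(v)]$: an independent set of size $a$ there is returned as is, while a clique of size $b-1$ can be extended by $v$ into a clique of size $b$. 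In the second case, symmetrically, recurse on $G[V \setminus N[v]]$, returning any clique of size $b$ directly, or extending an independent set of size $a-1$ by $v$.

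For the running time, we may start by fixing an arbitrary subset $X \subseteq V(G)$ of exactly $\Ramsey(a,b)$ vertices and apply the recursion to $G[X]$. Each recursive call selects a vertex and partitions the current set into its neighbors and non-neighbors, which takes time linear in the size of the current set, assuming $O(1)$ edge queries (adjacency-matrix model, or after an $O(n+m)$ preprocessing of adjacency lists). The recursion depth is at most $a+b-1$ and the active set size drops by at least one at each level, so the total internal work depends only on $a$ and $b$. Hence, after a single linear-time pass to set up the representation, the algorithm terminates in time linear in the size of $G$.

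The only delicate point is what ``linear time'' means; for fixed $a,b$ the recursion itself is $O(1)$, while the linear factor comes from reading/indexing $G$. In the applications of this lemma later in the paper, $a$ and $b$ will typically depend on the parameter $k$, and the factor $\Ramsey(a,b) = \binom{a+b}{a}$ is absorbed into the FPT dependence $f(k)$, so the polynomial factor in $n$ remains genuinely linear. No obstacle of substance arises beyond this accounting.
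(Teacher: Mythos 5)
Your proof matches the paper's almost line for line: same induction on $a+b$, same base case, same use of Pascal's identity to argue that one of $G-N[v]$ or $G[N(v)]$ is large enough to recurse on, and the same extension-by-$v$ step. The extra care you take with the running-time accounting (restricting attention to a $\Ramsey(a,b)$-sized subset and doing a single linear preprocessing pass) is a welcome elaboration the paper leaves implicit, but it is not a different argument.
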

\begin{proof}
  We show this lemma by induction on $a+b$.
  For $a=1$ (or $b=1$), any vertex of $G$ works (it is a clique and an independent set at the same time).
  And $G$ is non-empty since it has at least ${1+b \choose 1}$ (or ${a+1 \choose 1}$) vertices.
  We assume $a, b \geqslant 2$ and consider any vertex $v$ of $G$.
  Let $G_1 := G - N[v]$ and $G_2 := G[N(v)]$, so $|V(G)|=1+|V(G_1)|+|V(G_2)|$.

  Since $|V(G)| \geqslant {a+b \choose a} = {a+b-1 \choose a-1} + {a+b-1 \choose a}$, it cannot be that both $|V(G_1)| \leqslant \Ramsey(a-1,b) - 1$ and $|V(G_2)| \leqslant \Ramsey(a,b-1) - 1$.
  If $G_1$ has at least $\Ramsey(a-1,b)$ vertices, we find by induction an independent set $I$ of size $a-1$ or a clique of size $b$.
  Thus $I \cup \{v\}$ is an independent set of size $a$ in $G$.
  If instead $G_2$ has at least $\Ramsey(a,b-1)$ vertices, we find by induction an independent set of size $a$ or a clique $C$ of size $b-1$.
  Thus $C \cup \{v\}$ is an independent set of size $b$ in $G$.
\end{proof}
For two positive integers $a$ and $b$, we denote by $\Ramsey_a(b)$ the smallest integer $n$ such that any edge-coloring of $K_n$ with $a$ colors has a monochromatic clique of size $b$.
In particular, $\Ramsey_2(b)=\Ramsey(b,b)$ (one color for the edges, and one color for the non-edges).
Again, $\Ramsey_a(b)$ always exists and a monochromatic clique of size $b$ in an $a$-edge-colored clique of size at least $\Ramsey_a(b)$ can be found in polynomial-time (whose exponent does \emph{not} depend on $a$ and $b$). 

\subsection{FPT Turing-reductions}
For an instance $(I, k)$ of \mis, let $\text{yes}(I,k)$ be the Boolean function which equals $True$ if and only if $(I, k)$ is a positive instance.
\begin{definition}\label{def:TuringReduction}
 A \emph{decreasing FPT $g$-Turing-reduction} is an FPT algorithm which, given an instance $(I,k)$, produces $\ell := g(k)$ instances $(I_1,k_1), \ldots, (I_\ell,k_\ell)$, for some computable function $g$, such that:
  \begin{itemize}
  \item(i) $\text{yes}(I,k) \Leftrightarrow \phi(\text{yes}(I_1,k_1), \ldots, \text{yes}(I_\ell,k_\ell))$, where $\phi$ is a fixed FPT-time checkable formula\footnote{By \textit{FPT-time checkable formula}, we mean that there exists an algorithm which takes as input $\ell$ Booleans $b_1$, $\dots$, $b_{\ell}$ and tests whether $\phi(b_1, \dots, b_\ell)$ is true in FPT time parameterized by $\ell$.}, 
  \item(ii) $|I_j| \leqslant |I|$ for every $j \in [\ell]$, and
  \item(iii) $k_j \leqslant k-1$ for every $j \in [\ell]$. 
  \end{itemize}
\end{definition}
Note that conditions (ii) and (iii) prevent the instance size from increasing and force the parameter to strictly decrease, respectively.

\begin{lemma}\label{lem:TuringHered}
  Assume there is a decreasing FPT $g$-Turing-reduction for \mis on every input $(G \in \mathcal C, k)$, running in time $h(k)|V(G)|^\gamma$ (this includes the time to check $\phi$).
  Let $f : [k-1] \rightarrow \mathbb{N}$ be a non-decreasing function.
  If any instance $(H,k')$ with $k' < k$ can be solved in time $f(k')|V(H)|^c$ with $c \geqslant \gamma$,
  then \mis can be solved in FPT time $f(k)|V(G)|^c$ in $\mathcal C$, with $f(k) := h(k)+g(k)f(k-1)$.
\end{lemma}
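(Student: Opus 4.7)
The plan is to unwind the definition of a decreasing FPT $g$-Turing-reduction and perform a direct running-time calculation, treating the hypothesis as an induction hypothesis on $k$ (with a trivial base case at $k=1$, returning any vertex as an independent set of size one). Given an instance $(G,k)$ with $G \in \mathcal{C}$, the algorithm proceeds in three stages: (a) run the Turing reduction, which in time $h(k)|V(G)|^\gamma$ yields instances $(I_1,k_1),\dots,(I_{g(k)},k_{g(k)})$ with $k_j \leqslant k-1$ and $|V(I_j)| \leqslant |V(G)|$ by conditions (iii) and (ii); (b) recursively invoke the assumed $f(k_j)|V(I_j)|^c$ algorithm on each sub-instance, which is applicable because $k_j < k$; (c) evaluate the FPT-checkable formula $\phi$ on the returned Booleans, which by condition (i) yields $\text{yes}(G,k)$.

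For the running-time bound, I would dispatch the recursive cost first. Since $f$ is non-decreasing, $k_j \leqslant k-1$, and $|V(I_j)| \leqslant |V(G)|$, each recursive call costs at most $f(k-1)|V(G)|^c$, so the $g(k)$ calls together cost at most $g(k)\,f(k-1)\,|V(G)|^c$. Adding the Turing reduction together with the $\phi$-evaluation cost (both absorbed into $h(k)|V(G)|^\gamma$ by assumption) gives a total of
\[
h(k)|V(G)|^\gamma \;+\; g(k)\,f(k-1)\,|V(G)|^c \;\leqslant\; \bigl(h(k) + g(k)\,f(k-1)\bigr)|V(G)|^c,
\]
where the inequality uses $c \geqslant \gamma$ so that $|V(G)|^\gamma \leqslant |V(G)|^c$. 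This matches exactly the stated bound $f(k)|V(G)|^c$ with the given recurrence $f(k) = h(k) + g(k)\,f(k-1)$.

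There is no genuine obstacle; the lemma is essentially bookkeeping. The one subtlety worth a sentence is that the hypothesis on smaller instances must in fact apply to the produced $(I_j,k_j)$: this is clean if the sub-instances remain in $\mathcal{C}$ (which is typically automatic when $\mathcal{C}$ is hereditary and the Turing reduction produces induced subgraphs, as is the case for all reductions used later in the paper), or if one reads the hypothesis as a universal statement over all graphs $H$. A secondary sanity check is that the condition $c \geqslant \gamma$ is used essentially to hide the cost of the reduction step inside the overall $|V(G)|^c$ budget; without it, one would need an extra polynomial factor, which is why $\gamma$ is kept separate from $c$ in the statement.
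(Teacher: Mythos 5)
Your proof is correct and follows essentially the same route as the paper's: run the reduction, solve the $g(k)$ sub-instances (each with parameter $\leqslant k-1$ and size $\leqslant |V(G)|$) in time $f(k-1)|V(G)|^c$ each, and bound the total by $h(k)|V(G)|^\gamma + g(k)f(k-1)|V(G)|^c \leqslant f(k)|V(G)|^c$ using $c \geqslant \gamma$. The paper's proof is simply a terser version of this same calculation.
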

\begin{proof}
  We show the lemma by induction.
  If $k=1$, this is immediate.
  We therefore assume that $k \geqslant 2$.
  We apply the decreasing FPT $g$-Turing-reduction to $(G,k)$.
  That creates at most $g(k)$ instances with parameter at most $k-1$.
  We solve each instance in time $f(k-1)n^c$ with $n := |V(G)|$.
 The overall running time is bounded by $h(k)n^\gamma+g(k)f(k-1)n^c \leqslant f(k)n^c$ by extending the partial function $f$ with $f(k) := h(k)+g(k)f(k-1)$.
\end{proof}
This corollary follows by induction on the parameter $k$.
\begin{corollary}\label{cor:TuringHered}
  If \mis admits a decreasing FPT $g$-Turing-reduction on a hereditary class, then \mis can be solved in FPT time in $\mathcal C$.
\end{corollary}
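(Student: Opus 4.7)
The plan is to prove the corollary by induction on the parameter $k$, using Lemma~\ref{lem:TuringHered} as the inductive step and the hereditary assumption to ensure the recursive reductions stay inside the class $\mathcal{C}$.

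First I would set up the induction. For the base case $k=1$, \mis reduces to checking whether the graph is non-empty, which takes constant time; define $f(1)$ to be a constant. For the inductive step, assume that for some $k \geqslant 2$ we have an algorithm solving every $(H, k')$ with $H \in \mathcal{C}$ and $k' \leqslant k-1$ in time $f(k-1) |V(H)|^c$ (using the inductive hypothesis's non-decreasing $f$). Given an input $(G, k)$ with $G \in \mathcal{C}$, apply the decreasing FPT $g$-Turing-reduction guaranteed by hypothesis. This produces at most $g(k)$ instances $(I_1, k_1), \dots, (I_\ell, k_\ell)$ with $k_j \leqslant k-1$ and $|I_j| \leqslant |V(G)|$.

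The only genuine point to address is that we need each $I_j$ to lie in $\mathcal{C}$ so that the inductive hypothesis applies. This is where \emph{heredity} is used: the Turing-reductions produced in the paper work by removing vertices, hence the $I_j$'s are induced subgraphs of $G$, and since $\mathcal{C}$ is hereditary they remain in $\mathcal{C}$. Once this is granted, I invoke Lemma~\ref{lem:TuringHered} directly: its conclusion gives a total running time bounded by $f(k) |V(G)|^c$, where $f(k) := h(k) + g(k) f(k-1)$, and this $f$ is computable because $g$ and $h$ are. Taking the closure of $f$ along this recurrence yields an FPT algorithm for every value of $k$.

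The main (minor) obstacle is a bookkeeping one: making sure $f$ stays non-decreasing so that Lemma~\ref{lem:TuringHered}'s hypothesis is satisfied at each step, and ensuring the exponent $c$ in $n^c$ is fixed (independent of $k$) rather than accumulating through the recursion. Both are immediate from the statement of Lemma~\ref{lem:TuringHered}: the exponent $c$ is inherited unchanged from the base case (for $k=1$ we can take $c = \gamma$, the exponent appearing in the Turing-reduction's running time), and $f(k) = h(k) + g(k) f(k-1) \geqslant f(k-1)$ so monotonicity is preserved. Putting these pieces together, the induction closes and \mis is solvable in time $f(k) n^c$ on inputs from $\mathcal{C}$, which is exactly an FPT algorithm.
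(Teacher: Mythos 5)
Your proof is correct and is exactly the argument the paper intends: the paper's entire justification is the sentence ``This corollary follows by induction on the parameter $k$'' placed just before the corollary, and your write-up simply unfolds that induction through Lemma~\ref{lem:TuringHered}, correctly identifying that heredity is what keeps the produced sub-instances inside $\mathcal{C}$ so the inductive hypothesis applies. The bookkeeping you check at the end (monotonicity of $f$, a fixed exponent $c \geqslant \gamma$) is precisely what Lemma~\ref{lem:TuringHered} already guarantees, so there is nothing more to add.
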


\begin{definition}\label{def:TuringReduction2}
  An \emph{improving FPT $g$-Turing-reduction} is an FPT time $h(k)|V(G)|^\gamma$ algorithm which, given an instance $(I,k)$, produces some instances $(I_1,k_1), \ldots, (I_\ell,k_\ell)$, and can check a formula $\phi$, such that:
  \begin{itemize}
  \item(i) $\text{yes}(I,k) \Leftrightarrow \phi(\text{yes}(I_1,k_1), \ldots, \text{yes}(I_\ell,k_\ell))$, and
  \item(ii) $\exists c_0, f_0$, $\forall c \geqslant c_0, f \in \Omega(f_0)$, $h(k)|V(G)|^\gamma + \sum\limits_{j \in [\ell]} f(k_j)|I_j|^c \leqslant f(k)|I|^c$.
  \end{itemize}
\end{definition}
\begin{lemma}\label{lem:TuringHered2}
  Assume there is an improving FPT $g$-Turing-reduction for \mis on every input $(I \in \mathcal C, k)$, producing in time $h(k)|I|^\gamma$, some instances $(I_1,k_1), \ldots, (I_\ell,k_\ell)$. 
  If each instance $(I_j,k_j)$ can be solved in time $h(k_j)|I_j|^{c'}$, then \mis can be solved in FPT time in $\mathcal C$.
\end{lemma}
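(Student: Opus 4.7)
The plan is to apply the improving FPT Turing-reduction once, solve each resulting sub-instance with the assumed algorithm, combine the outputs via $\phi$, and control the total running time by directly invoking condition~(ii) of \cref{def:TuringReduction2}. Unlike \cref{lem:TuringHered}, no induction on $k$ is needed, since the improving Turing-reduction already packages the bound we need into a single inequality.

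First I would spell out the algorithm: on input $(I, k) \in \C \times \N$, run the improving Turing-reduction in time $h(k)|I|^\gamma$ to produce the sub-instances $(I_1, k_1), \ldots, (I_\ell, k_\ell)$ (with $\ell \leqslant g(k)$), invoke the assumed solver on each $(I_j, k_j)$ in time $h(k_j)|I_j|^{c'}$, and finally evaluate the FPT-checkable formula $\phi$ on the resulting Boolean answers. As in the convention of \cref{def:TuringReduction}, the check of $\phi$ runs in FPT time in $\ell$, hence in FPT time in $k$.

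For the running-time analysis, I would perform two monotone adjustments that preserve all stated bounds. Replace $c'$ by $\max(c', c_0)$; this still upper-bounds the sub-instance solving time since $|I_j|^{c'} \leqslant |I_j|^{\max(c',c_0)}$ whenever $|I_j| \geqslant 1$. Replace $h$ by $h + f_0$; this still upper-bounds both the reduction time and the sub-instance solving time, and now guarantees $h \in \Omega(f_0)$. After these harmless tweaks, the $(c, f) := (c', h)$ specialisation of condition~(ii) of \cref{def:TuringReduction2} yields
\[
  h(k)|I|^\gamma + \sum_{j \in [\ell]} h(k_j)|I_j|^{c'} \leqslant h(k)|I|^{c'}.
\]
Absorbing the FPT-time cost of checking $\phi$ into a slightly larger function $f(k) \geqslant h(k)$ then gives a total bound of $f(k)|I|^{c'}$, which is the required FPT running time on $\C$.

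The main, and essentially only, obstacle is notational: reconciling the quantifiers $(c_0, f_0)$ that come with the improving Turing-reduction with the exponent $c'$ and the function $h$ that come with the sub-instance solver. Because condition~(ii) quantifies over \emph{all} $c \geqslant c_0$ and all $f \in \Omega(f_0)$, the two monotone tweaks above are enough, and no genuinely new ingredient is needed beyond composing the two assumed algorithms. The statement is therefore essentially immediate from the definition; the value of \cref{lem:TuringHered2} lies in the flexibility it affords compared to \cref{lem:TuringHered}, namely that it does not require the parameter to strictly decrease at each reduction step.
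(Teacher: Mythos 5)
Your proposal is correct and takes essentially the same route as the paper: massage $c'$ and $h$ into admissible $c \geqslant c_0$ and $f \in \Omega(f_0)$ (the paper uses $c := \max(c_0,c')$ and $f := \max(f_0,h)$ where you use $h+f_0$, which is immaterial), then invoke item~(ii) of the definition to bound the total running time and item~(i) for correctness. The observation that no induction is needed also matches the paper's proof.
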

\begin{proof}
  Let $c := \max(c_0,c')$ and $f := \max(f_0,h)$, for the $c_0$ and $f_0$ of Definition~\ref{lem:TuringHered2}.
  \emph{A fortiori}, instances $(I_j,k_j)$ can be solved in time $f(k_j)|I_j|^c$.
  We call the Turing-reduction on $(I,k)$, solve every subinstances $(I_j,k_j)$, and check $\phi$.
  By item (ii), the overall running time $h(k)|V(G)|^\gamma + \sum\limits_{j \in [\ell]} f(k_j)|I_j|^c$ is bounded by $f(k)|I|^c$.
  By item (i), this decides $(I,k)$.
\end{proof}


When trying to compute \mis in FPT time, one can assume that there is no vertex of bounded degree or bounded co-degree (in terms of a function of $k$).

\begin{observation}\label{obs:boundedDegree}
  Let $(G,k)$ be an input of \mis with a vertex $v$ of degree $g(k)$ for some computable function $g$.
  Then the instance admits a decreasing FPT Turing-reduction.
\end{observation}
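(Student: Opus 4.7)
The plan is to perform the classical branching on the closed neighborhood of $v$. The key observation is that any \emph{maximum} independent set $S$ of $G$ must intersect $N[v]$: otherwise, $S \cap N(v) = \emptyset$ and $v \notin S$, so $S \cup \{v\}$ would be a strictly larger independent set, contradicting the maximality of $S$. Consequently, $(G,k)$ is a yes-instance if and only if there exists some $u \in N[v]$ such that $G - N[u]$ admits an independent set of size $k-1$ (to which $u$ is then added).

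Concretely, I would produce, for each $u \in N[v]$, the subinstance $(G - N[u], k-1)$. This gives $\ell := |N[v]| \leqslant g(k) + 1$ subinstances, so the number of branches is a function of $k$ alone. The combining formula $\phi$ is simply the disjunction $\bigvee_{j \in [\ell]} \text{yes}(I_j, k_j)$, which is FPT-time checkable (even polynomial in $\ell$). Computing $N[v]$ and extracting the $\ell$ induced subgraphs can be done in linear time, so the whole reduction runs in FPT time $h(k) \cdot |V(G)|^{O(1)}$ for some computable $h$.

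It remains to verify the three conditions of \cref{def:TuringReduction}. Item (i) holds by the observation above. Item (ii) holds because each $I_j = G - N[u]$ is a proper induced subgraph of $G$ (it misses at least $v$ and $u$), so in particular $|I_j| \leqslant |I|$. Item (iii) holds by construction, since every $k_j = k-1$. Thus the described procedure is a decreasing FPT $g'$-Turing-reduction with $g'(k) := g(k)+1$, concluding the observation.

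There is no real obstacle here — the whole argument is the standard branching-on-a-low-degree-vertex template, and the role of $g(k)$ is simply to bound the branching factor by a function of the parameter.
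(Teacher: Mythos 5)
Your proof is correct and matches the paper's argument exactly: both observe that any maximal independent set must meet $N[v]$, then branch on the $|N[v]| \leqslant g(k)+1$ choices of which vertex of $N[v]$ to put in the solution, each with parameter $k-1$. You have merely spelled out the verification of the three conditions of Definition~\ref{def:TuringReduction}, which the paper leaves implicit.
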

\begin{proof}
  A maximal independent set has to intersect $N[v]$.
  So, we can branch on $g(k)+1$ instances with parameter $k-1$.
\end{proof}

\begin{observation}\label{obs:boundedCoDegree}
Let $(G,k)$ be an input of \mis with a vertex $v$ of co-degree $g(k)$ for some computable function $g$.
Then the instance admits an improving FPT Turing-reduction.
\end{observation}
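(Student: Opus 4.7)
The plan is to branch on whether $v$ belongs to the maximum independent set. This produces exactly two subinstances, combined by an OR:
\begin{itemize}
\item \textbf{$v$ is in the solution:} the rest of the independent set must lie in $V(G) \setminus N[v]$, a set of size exactly $g(k)$. So we produce the subinstance $(I_1,k_1) := (G[V(G) \setminus N[v]], k-1)$.
\item \textbf{$v$ is not in the solution:} we can safely discard $v$ and produce $(I_2,k_2) := (G - v, k)$.
\end{itemize}
The formula $\phi(b_1,b_2) := b_1 \vee b_2$ is obviously FPT-time checkable, and correctness is immediate: any maximum independent set either contains $v$ (in which case it is certified by $(I_1,k_1)$) or avoids it (in which case it is certified by $(I_2,k_2)$). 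Producing these instances takes linear time $h(k) n^{\gamma}$ for $\gamma = 1$ and suitable computable $h$.

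The crux of the argument is then to verify condition (ii) of Definition~\ref{def:TuringReduction2}. Note that $|I_1| = g(k)$ depends only on $k$, so regardless of the constants $c$ and $f$ later chosen, the contribution $f(k_1)|I_1|^c \leqslant f(k-1)g(k)^c$ is a function of $k$ alone. On the other hand, $|I_2| = n - 1$ where $n := |V(G)|$, and by convexity of $x \mapsto x^c$ (for $c \geqslant 1$), $n^c - (n-1)^c \geqslant c(n-1)^{c-1} \geqslant (c/2^{c-1}) n^{c-1}$ whenever $n \geqslant 2$. Thus the inequality we need,
\begin{equation*}
h(k)n^{\gamma} + f(k-1)g(k)^c + f(k)(n-1)^c \leqslant f(k)n^c,
\end{equation*}
is implied by
\begin{equation*}
h(k)n^{\gamma} + f(k-1)g(k)^c \leqslant f(k) \cdot \frac{c}{2^{c-1}} \cdot n^{c-1}.
\end{equation*}

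Picking $c_0 := \gamma + 1$, so that for every $c \geqslant c_0$ we have $n^{\gamma} \leqslant n^{c-1}$, and choosing $f_0$ (equivalently, requiring $f$ to dominate some $f_0$ depending on $h$, $g$, $c$) large enough so that the $k$-only terms fit inside the $n^{c-1}$ slack, the inequality holds for all $n$ large enough, and the finitely many small values of $n$ can be solved by direct brute force and absorbed into $f$. This yields the required $c_0$ and $f_0$, completing an improving FPT Turing-reduction.

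The only delicate point is the tradeoff in the inequality: unlike Observation~\ref{obs:boundedDegree}, the parameter does not strictly decrease in the branch where $v$ is discarded, so we must instead exploit the strict decrease in instance size. The improving framework is tailored precisely for such shrinking-size reductions, and the calculation above confirms that a single-vertex reduction combined with a constant-size side instance is enough to meet condition (ii).
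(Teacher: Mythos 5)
Your branching is the same as the paper's (OR of $(G-N[v],k-1)$ and $(G-v,k)$), but your accounting differs in one meaningful way: you hand the $g(k)$-vertex instance $(G[V(G)\setminus N[v]],k-1)$ off as a sub-instance of the Turing-reduction, whereas the paper solves it outright by brute force in time $g(k)^{k-1}$ and outputs a Boolean. The paper's choice is not cosmetic. By solving the tiny instance in-line, its running-time contribution is a function of $k$ only, which the paper can absorb into $f(k)(n+1)$; your version instead leaves a term $f(k-1)g(k)^c$ in the sum of item~(ii), and crucially $c$ appears in the exponent of $g(k)$. Your "take $n$ large enough, brute-force the small $n$" patch is not really available here, because the threshold on $n$ depends on $k$, and brute-forcing those cases does not produce the single uniform $f(k)n^c$ bound that Definition~\ref{def:TuringReduction2} and Lemma~\ref{lem:TuringHered2} need. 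The gap is reparable without changing your decomposition: since $G$ contains a vertex with $g(k)$ non-neighbors, $n \geqslant g(k)+1$, hence $n^c-(n-1)^c \geqslant c(n-1)^{c-1} \geqslant c\, g(k)^{c-1}$, and the term $f(k-1)g(k)^c$ is dominated once $f(k)\geqslant \tfrac{2}{c}g(k)f(k-1)$ (with the remaining half of the slack, together with $c\geqslant\gamma+2$, handling $h(k)n^{\gamma}$). With that observation your argument goes through and is equivalent to the paper's; as written, though, the paper's in-line brute-force of the kernelized branch is the cleaner and more straightforwardly verifiable path.
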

\begin{proof}
  We can find the vertex $v$ in time $ng(k)$ with $n := |V(G)|$, and we assume $n \geqslant 2$.
  By branching on $v$, we define two instances $(G - N[v],k-1)$ and $(G-\{v\},k)$ (which corresponds to including $v$ to the solution, or not).
  The first instance can be further reduced in time $g(k)^{k-1}$ (by actually solving it).
  So the two instances output by the Turing-reduction are $\text{Bool}$ and $(G-\{v\},k)$, where $Bool$ is the result of solving $(G - N[v],k-1)$.
  The formula $\phi$ is just $\text{Bool} \lor \text{yes}(G-\{v\},k)$.
  Let $c_0 := 2$ and $f_0(k) := g(k)^{k-1}$.
  For all $c \geqslant c_0$ and $f \in \Omega(f_0)$, $ng(k)+g(k)^{k-1}+f(k)(n-1)^c \leqslant nf(k)+f(k)+f(k)(n-1)^c \leqslant f(k)(n+1+(n-1)^c) \leqslant f(k)n^c$.
\end{proof}

\section{Almost disconnected and almost join graphs}\label{sec:almostUnionOrSum}

We say that a graph is a join or a \emph{complete sum}, if there is a non-trivial bipartition $(A,B)$ of its vertex set (\ie $A$ and $B$ are non-empty) such that every pair of vertices $(u,v) \in A \times B$ is linked by an edge.  
Equivalently, a graph is a complete sum if its complement is disconnected.
In the following subsection, we define a series of variants of complete sums and disjoint unions in the presence of a \emph{parameterized noise}.

\subsection{Definition of the classes}

In all the following definitions, we say that a tripartition $(A, B, R)$ is \emph{non-trivial} if $A$ and $B$ are non-empty and $|R| < \min(|A|,|B|)$.
Notice that we do not assume $R$ is non-empty.

\begin{definition}\label{def:almostDisconnected}
Graphs in a class $\mathcal C$ are \emph{$(g,d)$-almost disconnected} if there exist two computable functions $g$ and $d$, such that for every $G \in \mathcal C$ and $k \geqslant \alpha(G)$, there is a non-trivial tripartition $(A,B,R)$ of $V(G)$ satisfying:
  \begin{itemize}
  \item $|R| \leqslant g(k)$, and
  \item $\forall v \in A$, $|N(v) \cap B| \leqslant d(k)$ and $\forall v \in B$, $|N(v) \cap A| \leqslant d(k)$.
  \end{itemize}
\end{definition}

Similarly, we define a generalization of a complete sum.
\begin{definition}\label{def:almostComplete}
Graphs in a class $\mathcal C$ are \emph{$(g,d)$-almost bicomplete} if there exist two computable functions $g$ and $d$, such that for every $G \in \mathcal C$ and $k \geqslant \alpha(G)$, there is a non-trivial tripartition $(A,B,R)$ of $V(G)$ satisfying:
  \begin{itemize}
  \item $|R| \leqslant g(k)$, and
  \item $\forall v \in A$, $|B \setminus N(v)| \leqslant d(k)$ and $\forall v \in B$, $|A \setminus N(v)| \leqslant d(k)$.
  \end{itemize}
\end{definition}

By extension, if $\mathcal{C}$ only contains graphs which are almost disconnected (resp. $(g, d)$-almost disconnected, almost bicomplete, $(g, d)$-almost bicomplete), then we say that $\mathcal{C}$ is almost disconnected (resp. $(g, d)$-almost disconnected, almost bicomplete, $(g, d)$-almost bicomplete).
Note that we do not require an almost disconnected or an almost bicomplete class to be hereditary.
For $G \in \mathcal C$, we call a satisfying tripartition $(A,B,R)$ a \emph{witness of almost disconnectedness} (resp. \emph{witness of almost bicompleteness}).

We define the one-sided variants.
\begin{definition}\label{def:almostDisconnected1sided}
Graphs in a class $\mathcal C$ are \emph{one-sided $(g,d)$-almost disconnected} if there exist two computable functions $g$ and $d$, such that for every $G \in \mathcal C$ and $k \geqslant \alpha(G)$, there is a non-trivial tripartition $(A,B,R)$ of $V(G)$ satisfying:
  \begin{itemize}
  \item $|R| \leqslant g(k)$,
  \item $|B| > kd(k)$, and
  \item $\forall v \in A$, $|N(v) \cap B| \leqslant d(k)$.
  \end{itemize}
\end{definition}

In the above definition, the second condition is purely a technical one.
Observe, though, that any tripartition $(A,B,R)$ with $|R| < |B| \leqslant d(k)$ trivially satisfies the third condition (provided $|R| < d(k)$).
So a condition forcing $B$ to have more than $d(k)$ vertices is somehow needed.
Now, we set the lower bound on $|B|$ a bit higher to make Lemma~\ref{lem:almostDisjoint} work.
Similarly, we could define the one-sided generalization of a complete sum.

\begin{definition}\label{def:almostComplete1sided}
Graphs in a class $\mathcal C$ are \emph{one-sided $(g,d)$-almost bicomplete} if there exist two computable functions $g$ and $d$, such that for every $G \in \mathcal C$ and $k \geqslant \alpha(G)$, there is a non-trivial tripartition $(A,B,R)$ of $V(G)$ satisfying:
  \begin{itemize}
  \item $|R| \leqslant g(k)$,
  \item if there is an independent set of size $k$, there is one that intersects $A$, and
  \item $\forall v \in B$, $|A \setminus N(v)| \leqslant d(k)$.
  \end{itemize}
\end{definition}
Again, the second condition is there to make Theorem~\ref{thm:withoutBiclique} work.

\subsection{Improving and decreasing FPT Turing-reductions}

The following technical lemma will be used to bound the running time of recursive calls on two almost disjoint parts of the input.
\begin{lemma}\label{lem:almostSplit}
  Suppose $\gamma \geqslant 0$ and $c \geqslant \max(2,\gamma+2)$ are two constants, and $n_1,n_2,n,u$ are four positive integers such that $n_1+n_2+u = n$ and $\min(n_1,n_2)>u$.
  Then, $$n^\gamma+(n_1+u)^c+(n_2+u)^c < n^c.$$
\end{lemma}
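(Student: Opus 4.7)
The plan is to reduce the claim to a clean univariate upper bound on $(n_1+u)^c+(n_2+u)^c$ and then show it falls strictly below $n^c-n^\gamma$. Set $a := n_1+u$, $b := n_2+u$, and $M := \max(a,b)$. The hypothesis $n_1+n_2+u=n$ gives $a+b = n+u$, while $\min(n_1,n_2)>u$ forces $\max(n_1,n_2) \leq n-2u-1$ and hence $M \leq n-u-1$.

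The first step is the elementary bound $x^c \leq x \cdot M^{c-1}$ valid for $0 \leq x \leq M$; summing it over $x \in \{a,b\}$ yields
\[ (n_1+u)^c + (n_2+u)^c \;\leq\; (a+b)\,M^{c-1} \;\leq\; (n+u)(n-u-1)^{c-1}. \]
The second step is to estimate $(n-u-1)^{c-1}$ via the telescoping identity
\[ n^{c-1} - (n-u-1)^{c-1} \;=\; (u+1)\sum_{k=0}^{c-2} n^{c-2-k}(n-u-1)^k \;\geq\; (u+1)\,n^{c-2}, \]
where the final inequality keeps only the $k=0$ summand. Multiplying the resulting estimate $(n-u-1)^{c-1} \leq n^{c-1}-(u+1)n^{c-2}$ by $(n+u)$ and expanding (the cross-term $(n+u)(u+1)n^{c-2}$ contributes $(u+1)n^{c-1} + u(u+1)n^{c-2}$, which cancels the $un^{c-1}$ from $(n+u)n^{c-1}=n^c+un^{c-1}$ down to a leftover of $-n^{c-1}-u(u+1)n^{c-2}$) gives
\[ (n+u)(n-u-1)^{c-1} \;\leq\; n^c - n^{c-1} - u(u+1)\,n^{c-2} \;\leq\; n^c - n^{c-1}. \]

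Combining the two displays yields $n^c - (n_1+u)^c - (n_2+u)^c \geq n^{c-1}$. Since $c \geq \gamma+2$ and $n \geq 3u+2 \geq 5 > 1$, we have $n^{c-1} \geq n^{\gamma+1} > n^\gamma$, which is the claimed strict inequality.

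The main obstacle is that the most natural upper bounds on $a^c+b^c$ under the constraint $a+b$ fixed---for instance $2\max(a,b)^c$ or the convex maximum $(2u+1)^c + (n-u-1)^c$---can, in some regimes, already exceed $n^c$ outright (since $a+b = n+u$ overshoots $n$; e.g.\ with $u=1$, $c=2$, $n=10$ one has $2(n-u-1)^c = 128 > 100 = n^c$). The proof must therefore exploit the two constraints $a+b = n+u$ and $\max(a,b) \leq n-u-1$ simultaneously, and the mixed bound $(a+b)\max(a,b)^{c-1}$ is the smallest handle doing so while still admitting the one-line telescoping estimate needed to beat $n^c$ by a margin of $n^{c-1}$.
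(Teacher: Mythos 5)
Your proof is correct, and it follows a genuinely different route from the paper's. The paper's proof works by computing the exact difference at exponent~$2$, observing that
\[
n^2 - (n_1+u)^2 - (n_2+u)^2 \;=\; 2n_1n_2 - u^2 \;>\; u^2 \;\geqslant\; 1,
\]
and then lifting to exponent~$c$ by multiplying through by $n^{c-2}$ and using $n>n_i+u$ and $n^{c-2}\geqslant n^\gamma$. Your proof, instead, caps each term linearly via $x^c \leqslant x\,M^{c-1}$ with $M=\max(n_1+u,n_2+u)\leqslant n-u-1$, sums under the constraint $a+b=n+u$, and then beats $n^c$ by a margin of $n^{c-1}$ after the algebraic expansion. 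Both are elementary and of comparable length; the paper's identity is a bit more surgical, your bound is more of a generic convexity-flavored estimate applied with some care to the fact that $a+b$ overshoots $n$.

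One technical point you should repair: your Step 2 justifies $(n-u-1)^{c-1} \leqslant n^{c-1}-(u+1)n^{c-2}$ via the telescoping sum
$n^{c-1}-(n-u-1)^{c-1}=(u+1)\sum_{k=0}^{c-2}n^{c-2-k}(n-u-1)^k$,
which is only meaningful when $c$ is an integer. The lemma's $c$ is an arbitrary real constant with $c\geqslant\max(2,\gamma+2)$ (it appears as an exponent in running-time bounds elsewhere). Fortunately the inequality you need is in fact immediate for every real $c\geqslant 2$: rewrite the right-hand side as $n^{c-2}(n-u-1)$, divide both sides by $n-u-1>0$, and the claim reduces to $(n-u-1)^{c-2}\leqslant n^{c-2}$, which holds because $n-u-1<n$ and $c-2\geqslant 0$. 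Replacing your telescoping argument by this one-line factorization makes the proof valid for general real $c$ and brings it fully in line with the lemma as stated.
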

\begin{proof}
  First we observe that $n^2-((n_1+u)^2+(n_2+u)^2)=n_1^2+n_2^2+u^2+2(n_1n_2+n_1u+n_2u)-(n_1^2+2n_1u+u^2+n_2^2+2n_2u+u^2)=2n_1n_2 - u^2 > 2u^2-u^2 = u^2 \geqslant 1$.
  Now, $n^c=n^{c-2}n^2 \geqslant n^{c-2}(1+(n_1+u)^2+(n_2+u)^2) \geqslant n^{c-2}(n^{\gamma-c+2}+(n_1+u)^2+(n_2+u)^2) = n^\gamma + n^{c-2}(n_1+u)^2 + n^{c-2}(n_2+u)^2 > n^\gamma + (n_1+u)^c + (n_2+u)^c$.
  The last inequality holds since $n > n_1+u$ and $n > n_2+u$.
\end{proof}

We start with an improving FPT Turing-reduction on almost bicomplete graphs.
It finds a kernel for solutions intersecting both $A$ and $B$, solves recursively on $A \cup R$ and $B \cup R$ for the other solutions, and uses Lemma~\ref{lem:almostSplit} to bound the overall running time.

\begin{lemma}\label{lem:completeSum}
  Let $\mathcal C$ be a $(g,d)$-almost bicomplete class of graphs.
  Suppose for every $G \in \mathcal C$, a witness $(A,B,R)$ of almost bicompleteness can be found in time $h(k)|V(G)|^\gamma$.
  Then, \mis admits an improving FPT Turing-reduction in $\mathcal C$.
  In particular, \mis can be solved in FPT time if both $(G[A \cup R],k)$ and $(G[B \cup R],k)$ can.
\end{lemma}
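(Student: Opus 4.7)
The plan is to build an improving FPT Turing-reduction in the sense of Definition~\ref{def:TuringReduction2}: given $(G,k)$, I first spend $h(k)|V(G)|^\gamma$ time to compute a witness tripartition $(A,B,R)$ of almost bicompleteness, then output the two natural subinstances $(G[A\cup R],k)$ and $(G[B\cup R],k)$ and handle the remaining ``cross'' case directly inside the reduction, without spawning further recursive calls.

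The key structural observation is a trichotomy on any alleged size-$k$ independent set $I$: either (a) $I\subseteq A\cup R$, (b) $I\subseteq B\cup R$, or (c) $I$ meets both $A$ and $B$. Cases (a) and (b) are exactly what the two recursive subinstances resolve. For case (c) I would invoke almost-bicompleteness: fixing \emph{any} $u\in I\cap A$ and $v\in I\cap B$ (necessarily a non-edge), one has $I\cap A\subseteq A\setminus N(v)$ and $I\cap B\subseteq B\setminus N(u)$, each of cardinality at most $d(k)$, so $I\subseteq (A\setminus N(v))\cup(B\setminus N(u))\cup R$, a set on at most $2d(k)+g(k)$ vertices.

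To discharge case (c) I would iterate over all pairs $(u,v)\in A\times B$, build $G_{u,v}:=G[(A\setminus N(v))\cup(B\setminus N(u))\cup R]$, and brute-force its maximum independent set in $2^{O(d(k)+g(k))}$ time---a total of $O(|V(G)|^2)\cdot 2^{O(d(k)+g(k))}$. The verification formula $\phi$ then returns True iff some $G_{u,v}$ contains an independent set of size $\geqslant k$, or one of the two subinstances is True; correctness is immediate from the trichotomy.

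Finally, for the time budget: the reduction costs $H(k)|V(G)|^{\gamma'}$ with $\gamma':=\max(\gamma,2)$ and $H(k):=h(k)+2^{O(d(k)+g(k))}$. Writing $n_1:=|A|$, $n_2:=|B|$, $u:=|R|$, non-triviality of $(A,B,R)$ yields $u<\min(n_1,n_2)$, so Lemma~\ref{lem:almostSplit} gives $|V(G)|^{\gamma'}+(n_1+u)^c+(n_2+u)^c<|V(G)|^c$ for all $c\geqslant \gamma'+2$. Taking $f_0:=H$ and $c_0:=\gamma'+2$ then verifies condition~(ii) of Definition~\ref{def:TuringReduction2}, and the ``in particular'' statement drops out by Lemma~\ref{lem:TuringHered2}. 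The only real subtlety is step~(c): realising that cross-case independent sets need not trigger recursion, because committing to any single $u\in I\cap A$ and $v\in I\cap B$ already confines $I$ to a subgraph of only $2d(k)+g(k)$ vertices, small enough for brute force to contribute a pure $f(k)$ factor.
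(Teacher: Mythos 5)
Your proof follows essentially the same route as the paper's: the same trichotomy on where a size-$k$ independent set can live, the same idea of guessing one vertex on each side to collapse the cross case to a kernel of at most $2d(k)+g(k)$ vertices, the same two recursive subinstances $(G[A\cup R],k)$ and $(G[B\cup R],k)$, and the same use of Lemma~\ref{lem:almostSplit} to verify condition~(ii) of Definition~\ref{def:TuringReduction2}. The only cosmetic differences are that the paper uses the (equivalent) kernel $G-(N(a)\cup N(b))$ and enumerates $k$-sets in it in $(2d(k)+g(k))^{k-2}$ time rather than a full $2^{O(d(k)+g(k))}$ brute force; neither difference changes the argument.
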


\begin{proof}
  We can detect a potential solution $S$ intersecting both $A$ and $B$ in time $n^2(2d(k)+g(k))^k = n^2s(k)$, with $n := |V(G)|$, by setting $s(k) := (2d(k)+g(k))^{k-2}$.
  We exhaustively guess one vertex $a \in S \cap A$ and one vertex $b \in S \cap B$.
  For each of these quadratically many choices, there are at most $d(k)$ non-neighbors of $a$ in $B$ and at most $d(k)$ non-neighbors of $b$ in $A$.
  So the remaining instance $G - (N(a) \cup N(b))$ has at most $2d(k)+g(k)$ vertices; hence the running time.

  We are now left with potential solutions intersecting $A$ but not $B$, or $B$ but not $A$.
  These are fully contained in $A \cup R$ or in $B \cup R$.
  Let $n_1 := |A|$ and $n_2 := |B|$ (so $n = n_1+n_2+|R|$).
  The two last branches just consist of recursively solving the instances $(G[A \cup R],k)$ and $(G[B \cup R],k)$.
  Let $c_0 := \max(4,\gamma+2)$ and $f_0 := h+s$.
  For all $c \geqslant c_0$ and $f \in \Omega(f_0)$, 
  $$h(k)n^\gamma+s(k)n^2+f(k)(n_1+g(k))^c+f(k)(n_2+g(k))^c$$
  $$\leqslant f(k)n^{\max(\gamma,2)} + f(k)(n_1+g(k))^c+f(k)(n_2+g(k))^c \leqslant f(k)n^c.$$
  The last inequality holds by Lemma~\ref{lem:almostSplit}, since $\max(\gamma,2)+2 \leqslant c$ and $\min(n_1,n_2) > g(k)$.
  The conclusion holds by Lemma~\ref{lem:TuringHered2}.
\end{proof}

If we only have \emph{one-sided} almost bicompleteness, we need some additional conditions on the solution: at least one solution should intersect $A$ (see Definition~\ref{def:almostComplete1sided}).
We recall that $H \subset_i G$ means that $H$ is a proper induced subgraph of $G$.
\begin{lemma}\label{lem:completeSum1sided}
  Let $\mathcal C$ be a one-sided $(g,d)$-almost bicomplete class of graphs.
  Suppose for every $G \in \mathcal C$, a witness $(A,B,R)$ of one-sided almost bicompleteness can be found in time $h(k)|V(G)|^\gamma$.
  Then, \mis admits an improving FPT Turing-reduction in~$\mathcal C$.
  In particular, \mis can be solved in FPT time if $(G[A \cup R],k)$ and $\forall k' \leqslant k-1$, $\forall H \subset_i G$, $(H,k')$ all can.
\end{lemma}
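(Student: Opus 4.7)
The plan mirrors the case analysis of \cref{lem:completeSum} but, since the symmetric kernelisation used there is not available in the one-sided setting, replaces it with a branching step on $B$. The second clause of the witness gives the following dichotomy for any candidate IS $S$ of size $k$: either $S \subseteq A \cup R$, which is handled by the subinstance $(G[A \cup R], k)$, or $S \cap B \neq \emptyset$, in which case for any $b \in S \cap B$ we have $|S \cap A| \le |A \setminus N(b)| \le d(k)$ and $S \setminus \{b\}$ is an independent set of size $k-1$ in the proper induced subgraph $G - N[b]$.

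After computing the witness $(A,B,R)$ in time $h(k) |V(G)|^\gamma$, I would emit the subinstance $(G[A \cup R], k)$ together with $(G - N[b], k-1)$ for every $b \in B$; the formula $\phi$ is the disjunction of the oracle answers. Correctness is immediate from the dichotomy. The key size estimate for condition~(ii) of \cref{def:TuringReduction2} is $|G - N[b]| \le |B| + |R| + d(k) - 1$, valid since every $b \in B$ has at least $|A| - d(k)$ neighbours in $A$, all of which are removed with $N[b]$; this is complementary to the size $|A|+|R|$ of the first subinstance (their sum is $n + O(d(k))$), so the two $c$-th powers can be traded off against $n^c$ along the lines of \cref{lem:almostSplit}.

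The main obstacle is the $|B|$ multiplicity in the parameter-$(k-1)$ branch: the gap $f(k-1) \ll f(k)$ together with the $c$-th power shrinkage has to absorb it. This is delicate when $|A|$ is close to $d(k)$, so I would handle that boundary separately by a side subroutine: if $|A| + |R|$ is bounded by a suitable $D(k)$, enumerate all at most $2^{D(k)}$ non-empty independent subsets $T$ of $A \cup R$ that meet $A$ and recurse on $(G - (N[T] \cup T), k - |T|)$ for each---a decreasing FPT Turing-reduction covered by \cref{lem:TuringHered}. In the complementary regime $|A| + |R| > D(k)$, $|A|$ is large enough that $|G - N[b]|/n$ is bounded below $1$ by a quantity depending only on $k$, so choosing $c_0$ and $f_0$ large enough in terms of $D$ makes the improving Turing-reduction inequality hold. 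The ``in particular'' conclusion is then \cref{lem:TuringHered2}.
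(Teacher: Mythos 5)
The decomposition into the two cases $S\subseteq A\cup R$ and $S\cap B\neq\emptyset$ is correct, and so is your size estimate $|G-N[b]|\leqslant |B|+|R|+d(k)-1$ for $b\in B$. The problem is the running-time amortisation: you emit $|B|$ subinstances of parameter $k-1$, each of size up to $|B|+|R|+d(k)-1$. Even with the side subroutine for $|A|+|R|\leqslant D(k)$ and the savings of $\approx|A|-d(k)$ vertices per subinstance, the bound you would need in the complementary regime is $|B|\cdot f(k-1)\big(|B|+|R|+d(k)-1\big)^c\leqslant f(k)\,n^c$. But $|A|+|R|>D(k)$ does \emph{not} force $|A|/n$ to be bounded away from $0$ by a function of $k$ alone: take $|A|=\sqrt{n}$, $|R|=0$, $|B|=n-\sqrt{n}$. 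Then $|G-N[b]|/n\geqslant 1-2/\sqrt{n}\to 1$, so no choice of $c_0,f_0$ depending only on $k$ can absorb the extra factor of $|B|\approx n$ (roughly you get $n^{c+1}$ versus $n^c$), and the inequality in Definition~\ref{def:TuringReduction2}(ii) fails for large $n$. There is no $D(k)$ for which the two regimes partition correctly, so the gap is not a boundary issue but a missing idea.

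The paper avoids this multiplicity by branching on \emph{candidate independent sets $I_1\subseteq A$} (those lying in the non-neighbourhood of some $b\in B$) for each split $k_1+k_2=k$, and solving for $k_2$ more vertices in $B\setminus N(I_1)$. The crucial saving is that
$\sum_{I_1\text{ candidate}}|B\setminus N(I_1)|\leqslant \binom{d(k)}{k_1} d(k)\,|B|$,
because each non-edge between $A$ and $B$ is counted in at most $\binom{d(k)}{k_1}$ candidate sets, and there are at most $d(k)|B|$ such non-edges by the one-sided condition. Then the superadditivity of $x\mapsto x^c$ (the ``Cauchy--Schwarz''-like step) gives
$\sum_{I_1}|B\setminus N(I_1)|^c\leqslant \big(\sum_{I_1}|B\setminus N(I_1)|\big)^c\leqslant\big(O_k(1)\,|B|\big)^c$,
so the total work on the $B$ side stays at $f(k)\,|B|^c$ rather than the $n\cdot n^c$ you would accumulate. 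This averaging over the branches is exactly what lets the $c$-th powers trade off against $n^c$ via Lemma~\ref{lem:almostSplit}, and is the piece missing from your argument.
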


\begin{proof}
  Let $S$ be an unknown solution.
  Let $k_1 := S \cap A$ and $k_2 := S \cap B$.
  Let us anticipate on an FPT running time $f(k)n^c$ for instances of size $n$ and parameter $k$ (the definition of $f$ will be given later).
  For instance, covering the case $k_2=0$ takes time $f(k)|A \cup R|^c$, since it consists in solving $(G[A \cup R],k)$.
  By assumption, we do not have to consider the case $k_1=0$.
  For each pair $k_1,k_2$ such that $k_1 \geqslant 1, k_2 \geqslant 1, k_1+k_2 \leqslant k$, we do the following.
  
  An independent set of size $k_1$ in $G[A]$ is \emph{candidate} if it is in the non-neighborhood of at least one vertex $v \in B$.
  Since $k_2 \geqslant 1$, we can restrict the search in $A$ to candidate independent sets of size $k_1$.
  Indeed, any independent set in $A$, not in the non-neighborhood of any vertex of $B$, cannot be extended to $k_2$ ($\geqslant 1$) more vertices of $B$.
  For each candidate independent set $I_1$ of size $k_1$, we compute an independent set of size $k_2$ in $B \setminus N(I_1)$.
  This takes time \[ \sum\limits_{\substack{I_1~\text{candidate} \\ |I_1|=k_1}} f(k_2)|B \setminus N(I_1)|^c=f(k_2) \sum\limits_{\substack{I_1~\text{candidate} \\ |I_1|=k_1}} |B \setminus N(I_1)|^c \leqslant f(k_2) \left( \sum\limits_{\substack{I_1~\text{candidate} \\ |I_1|=k_1}} |B \setminus N(I_1)| \right) ^c \]
  by Cauchy-Schwarz inequality (since $c \geqslant 2$).
  Now, since $k_1 > 0$,
  \[
  \sum\limits_{\substack{I_1~\text{candidate} \\ |I_1|=k_1}} |B \setminus N(I_1)| \leqslant \sum\limits_{\substack{I_1~\text{candidate} \\ |I_1|=k_1}} |I_1| \cdot |B \setminus N(I_1)| \leqslant {d(k) \choose k_1}d(k)|B|.
  \]
  The last inequality holds since $\sum_{I_1~\text{candidate}, |I_1|=k_1} |I_1| \cdot |B \setminus N(I_1)|$ counts the number of non-edges between $A$ and $B$ with multiplicity at most ${d(k) \choose k_1}$.
  Indeed a same non-edge $uv$ (with $u \in A$, $v \in B$) is counted for at most ${d(k) \choose k_1}$ candidate independent sets (since they have to be in the non-neighborhood of $v$).
  Since, by assumption, vertices in $B$ have at most $d(k)$ non-neighbors in $A$, the total number of non-edges is $d(k)|B|$.
  Let $c_0 \geqslant \gamma + 2$ and $f_0 := \max(h,k \mapsto k^{2k}{d(k) \choose k}^{ck}d(k)^{ck})$.
  For any $c \geqslant c_0$ and $f \in \Omega(f_0)$,
  $$h(k)|V(G)|^\gamma + f(k)|A \cup R|^c + \sum\limits_{\substack{k_1 \in [k-1], k_2 \in [k-k_1]}} f(k_2) \left( \sum\limits_{\substack{I_1~\text{candidate} \\ |I_1|=k_1}} |B \setminus N(I_1)| \right) ^c$$ $$\leqslant h(k)|V(G)|^\gamma + f(k)|A \cup R|^c + k^2f(k-1) {d(k) \choose k}^cd(k)^c|B|^c$$ $$\leqslant f(k)|V(G)|^\gamma + f(k)|A \cup R|^c + f(k)|B|^c \leqslant f(k)|V(G)|^c$$
    since $f(k) \geqslant k^2{d(k) \choose k}^cd(k)^c f(k-1)$.
    The last inequality holds by Lemma~\ref{lem:almostSplit}.
    The conclusion holds by Lemma~\ref{lem:TuringHered2}.
\end{proof}

We now turn our attention to almost disconnected classes.
For these classes, we obtain decreasing FPT Turing-reductions, \ie, where the produced instances have a strictly smaller parameter than the original instance.

\begin{lemma}\label{lem:almostDisjoint}
  Let $\mathcal C$ be a one-sided $(g,d)$-almost disconnected class of graphs.
  Suppose for every $G \in \mathcal C$, a witness $(A,B,R)$ of one-sided almost disconnectedness can be found in time $h(k)|V(G)|^\gamma$.
  Then, \mis admits a decreasing FPT Turing-reduction in $\mathcal C$.
  In particular, \mis can be solved in FPT time if $\forall k' \leqslant k-1$ and $\forall H \subseteq_i G$, instances $(H,k')$ can. 
\end{lemma}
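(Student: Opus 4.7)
The plan is to construct a decreasing FPT Turing-reduction on $\mathcal C$. Given $(G,k)$ with $G \in \mathcal C$, I first compute the witness $(A,B,R)$ in time $h(k)|V(G)|^\gamma$ using the hypothesis, and let $S$ denote a hypothetical independent set of size $k$ in $G$. The first family of sub-instances comes from standard branching on $R$: for each $v \in R$ I emit $(G - N[v], k-1)$. Since $|R| \leqslant g(k)$, this produces at most $g(k)$ sub-instances of parameter $k-1$, and covers every hypothetical $S$ that meets $R$.

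For the remaining case $S \cap R = \emptyset$, the key structural observation is a \emph{shifting argument} powered by the extra condition $|B| > kd(k)$: if $S \subseteq A$, then $|N(S) \cap B| \leqslant kd(k) < |B|$, so some $v \in B \setminus N(S)$ exists, and $(S \setminus \{u\}) \cup \{v\}$ for any $u \in S$ is still an independent set of size $k$, now intersecting $B$. Hence we may assume $S \cap B \neq \emptyset$ in the remainder.

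For this residual case I plan to iterate over the split $(k_1, k_2) := (|S \cap A|, |S \cap B|)$ with $k_2 \geqslant 1$, and for each pair produce sub-instances in the spirit of the proof of Lemma~\ref{lem:completeSum1sided}: enumerate candidate partial solutions $I_1 \subseteq A$ of size $k_1$, bound their number using the sparsity $|N(a) \cap B| \leqslant d(k)$ together with a Cauchy--Schwarz-style estimate on the $|B \setminus N(I_1)|$ terms, and then query $(G[B \setminus N(I_1)], k_2)$, which has parameter $k_2 \leqslant k-1$ and is an induced subgraph of $G$. When $k_1 \geqslant 1$ the parameter strictly decreases and the counting is clean. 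The main obstacle is the boundary case $k_1 = 0$ (i.e.\ $S \subseteq B$), where the obvious recursion on $G[B]$ would keep the parameter at $k$ and thus fail to be decreasing. I plan to eliminate that case by iterating the shifting argument: double-counting the $A$--$B$ edges (whose total is at most $|A|d(k)$) produces either a vertex $a \in A$ with sufficiently few neighbours in $S$ to be cheaply swapped into the solution (forcing $S \cap A \neq \emptyset$ and reducing to the $k_1 \geqslant 1$ case already handled), or else the sparsity is so tight that a non-neighbour of $S$ in $A$ is found and $\alpha(G) \geqslant k+1$ is witnessed outright; both outcomes return us to a genuine decreasing sub-instance. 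Making this last step fit inside $g'(k)$ output sub-instances while respecting the strict parameter drop is the crux of the proof.
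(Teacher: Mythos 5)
Your opening moves match the paper: computing the witness $(A,B,R)$, branching on the at most $g(k)$ vertices of $R$, and then using $|B| > kd(k)$ to rule out $S \subseteq A$ with $|S|=k$ are all correct and essentially what the paper does. The trouble starts afterwards, and it is both a technical gap and a mismatch with what the lemma actually claims.

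First, you transplant the Cauchy--Schwarz / candidate enumeration from the proof of Lemma~\ref{lem:completeSum1sided}, but that machinery is tailored to the \emph{complementary} sparsity. In the one-sided almost bicomplete case, each $v \in B$ has at most $d(k)$ non-neighbours in $A$, so a ``candidate'' $I_1 \subseteq A$ can only live inside a $d(k)$-element non-neighbourhood and the quantity $|I_1|\,|B\setminus N(I_1)|$ double-counts non-edges, of which there are only $d(k)|B|$. Here the sparsity is the opposite: every $a \in A$ has at most $d(k)$ \emph{neighbours} in $B$ and nothing bounds the non-neighbourhoods. Consequently $|B \setminus N(I_1)|$ is typically close to $|B|$, the number of sets $I_1 \subseteq A$ of size $k_1$ is $\Theta(|A|^{k_1})$ rather than bounded by any non-neighbourhood, and the double-count now runs over the roughly $|A||B|$ non-edges. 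The inequality $\sum_{I_1} |B \setminus N(I_1)| \leqslant \binom{d(k)}{k_1} d(k)|B|$, which is the engine of Lemma~\ref{lem:completeSum1sided}, simply has no analogue, so the ``counting'' you call ``clean'' fails.

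Second, even if the estimate did go through, you would at best obtain an \emph{improving} FPT Turing-reduction: one recursive call per candidate $I_1$, hence a number of calls that scales with $n$ and is only controlled in aggregate running time. The lemma claims a \emph{decreasing} FPT Turing-reduction, which by Definition~\ref{def:TuringReduction} must emit at most $g(k)$ sub-instances each with parameter $\leqslant k-1$. Your construction would not satisfy that syntactic constraint, so the lemma would not follow even conditionally.

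The idea that you are missing — and it is the crux of the paper's proof — is to \emph{repeatedly extract disjoint independent sets of size $k_2$ from $B$} by issuing recursive calls with parameter $k_2 \leqslant k-1$, stopping after at most $d(k)k_1+1$ extractions. If the process stalls, the sought solution must hit the union of the extracted sets, which has at most $k_2\,(d(k)k_1)$ vertices (a quantity bounded in $k$), so one can branch on that union with parameter $k-1$. If the process completes, then a single independent set $A_1$ of size $k_1$ in $A$ (one more recursive call with parameter $k_1 \leqslant k-1$) satisfies $|N_B(A_1)| \leqslant k_1 d(k)$, and by pigeonhole one of the extracted $I_j$ avoids $N_B(A_1)$, giving the solution $A_1 \cup I_j$ outright. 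The case $k_1 = 0$ is handled by the same extraction with threshold $d(k)+1$ and size $k-1$: either branch on a set of at most $(k-1)d(k)$ vertices, or any fixed $w \in A$ avoids at least one extracted set. This turns the sparse side of the cut into a bounded-in-$k$ branching set and yields the decreasing reduction. Your ``iterated shifting / double-counting'' sketch for $k_1 = 0$ does not substitute for this: a vertex $a \in A$ having at most $d(k)$ neighbours in $S$ does not let you swap it into an unknown $S$ without removing up to $d(k)$ vertices of $S$, and you cannot certify $\alpha(G) \geqslant k+1$ without actually producing the witness, which is the very problem being solved.
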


\begin{proof}
  Let $S$ be an unknown but supposed independent set of $G$ of size $k$.
  In time $h(k)n^c$ with $n := |V(G)|$, we compute a witness $(A,B,R)$.
  For each $u \in R$, we branch on including $u$ to our solution.
  This represents at most $g(k)$ branches with parameter $k-1$.
  Now, we can focus on the case $S \cap R = \emptyset$.

  We first deal separately with the special cases of $|S \cap A| = k$, $|S \cap B|= 0$ (a), and of $|S \cap A| = 0$, $|S \cap B|= k$ (b).  
  As by assumption $|B| > kd(k)$, no maximal independent set has $k$ vertices in $A$ and zero in $B$.
  Indeed, by the one-sided almost disconnectedness, any $k$ vertices in $A$ dominate at most $k^2$ vertices in $B$.
  Hence at least one vertex of $B$ could be added to this independent set of size $k$.
  So case (a) is actually impossible.

  For case (b), we proceed as follows.
  We compute an independent set of size $k-1$ in $G[B]$.
  We temporary remove it from the graph, without removing its neighborhood.
  We compute a second independent set of size $k-1$ in $G[B]$ (without the first independent set); then a third one (in the graph deprived of the first two).
  We iterate this process until no independent set of size $k-1$ is found or we reach a total of $d(k)+1$ (disjoint) independent sets of size $k-1$ excavated in $B$.
  If we stop because of the former alternative, we know that an independent set of size $k$ (actually even of size $k-1$) in $B$ has to intersect the union of at most $d(k)$ independent sets of size $k-1$; so at most $(k-1)d(k)$ vertices in total.
  In that case, we branch on each vertex of this set of size at most $(k-1)d(k)$ with parameter $k-1$.
  If we stop because of the latter condition, we can include an arbitrary vertex $w$ of $A$ in the solution.
  By assumption, $w$ has at least one neighbor in at most $d(k)$ independent sets of size $k-1$ in $B$.
  So at least one independent set of size $k-1$ of the collection is not adjacent to $w$, and forms with $w$ a solution.

  Now we are done with cases (a) and (b), we can assume that $k_1 := |S \cap A|$, $k_2 := |S \cap B| = k-k_1$ are both non-zero.
  Equivalently, $1 \leqslant k_1 \leqslant k-1$. 
  We try out all the $k-1$ possibilities.
  For each, we perform a similar trick to the one used for case (b).
  We compute an independent set $I_1$ of size $k_2$ in $G[B]$.
  We then compute an independent set $I_2$ of size $k_2$ in $G[B \setminus I_1]$.
  Observe that there may be edges between $I_1$ and $I_2$.
  We compute an independent set $I_3$ in $G[B \setminus (I_1 \cup I_2)]$, and so on.
  We iterate this process until no independent set of size $k_2$ is found or we reach a total of $d(k)k_1+1$ (disjoint) independent sets of size $k_2$ excavated in $B$.

  Say, we end up with the sets $I_1, \ldots, I_s$.
  Let $I := \bigcup_{j \in [s]} I_j$.
  If $s \leqslant f(k)k_1$, then we stopped because there was no independent set of size $k_2$ in $G[B \setminus I]$.
  This means that $S$ intersects $I$.
  In that case, we branch on each vertex of $I$.

  The other case is that $s=f(k)k_1+1$ and we stopped because we had enough sets $I_j$.
  In that case, we compute one independent set $A_1$ of size $k_1$ in $G[A]$.
  By assumption, $|N_B(A_1)| \leqslant k_1d(k)$.
  In particular, there is at least one $I_j$ which does not intersect $N_B(A_1)$.
  And $A_1 \cup I_j$ is our independent of size $k$.

  Our algorithm makes at most $$g(k)+d(k)+1+\sum\limits_{k_1 \in [k-1]}(d(k)k_1+1)+1 \leqslant g(k)+d(k)+2+k^2d(k)+k$$ recursive calls to instances with parameter $k-1$, and we conclude by Lemma~\ref{lem:TuringHered}.
\end{proof}

Let $\mathcal B(A,B)$ be the bipartite graph between two disjoint vertex-subsets $A$ and $B$ (ignoring the edges internal to $A$ and to $B$).
We can further generalize the previous result to tripartitions $(A,B,R)$ such that $\mathcal B(A,B)$ is $K_{d(k),d(k)}$-free.

\begin{definition}\label{def:withoutBiclique}
Graphs in a class $\mathcal C$ are $(g,d)$-\emph{\wc} if there exist two computable functions $g$ and $d$, such that for every $G \in \mathcal C$ and $k \geqslant \alpha(G)$, there is a non-trivial tripartition $(A,B,R)$ of $V(G)$ satisfying:
  \begin{itemize}
  \item $|R| \leqslant g(k)$,
  \item $|A|, |B| > \lceil d(k)^{d(k)}k^{2d(k)-1} \rceil + 1$, and
  \item $\mathcal B(A,B)$ is $K_{d(k),d(k)}$-free.
  \end{itemize}
\end{definition}

Again, if we do not require $|A|$ and $|B|$ to be larger than $d(k)$, such a tripartition may trivially exist.
We force $A$ and $B$ to be even larger than that to make the next theorem work.
We show this theorem by combining ideas of the proof of Lemma~\ref{lem:almostDisjoint} with the extremal theory result, known as K\H{o}v\'ari-S\'os-Tur\'an's theorem, that $K_{t,t}$-free $n$-vertex graphs have at most $t n^{2-\frac{1}{t}}$ edges \cite{kovari1954problem}.

\begin{theorem}\label{thm:withoutBiclique}
  Let $\mathcal C$ be a $(g,d)$-\wc class of graphs.
  Suppose for every $G \in \mathcal C$, a witness $(A,B,R)$ of \wcness can be found in time $h(k)|V(G)|^\gamma$. 
  Then, \mis admits a decreasing FPT Turing-reduction in $\mathcal C$.
  In particular, \mis can be solved in FPT time if $\forall k' \leqslant k-1$ and $\forall H \subseteq_i G$, the instance $(H,k')$ can.
\end{theorem}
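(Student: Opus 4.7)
The plan is to follow the strategy of Lemma~\ref{lem:almostDisjoint}, replacing the linear-in-$d(k)$ per-vertex neighborhood bound (which is available only in the one-sided almost disconnected setting) by the K\H{o}v\'{a}ri--S\'{o}s--Tur\'{a}n bound on the number of edges in a $K_{d(k),d(k)}$-free bipartite graph. Fix $G\in\mathcal C$ and compute in time $h(k)|V(G)|^\gamma$ a witness $(A,B,R)$ of \wcness; set $t:=d(k)$ and let $S$ denote a hypothetical independent set of size $k$ in $G$. First I would branch on $R$: for each $v\in R$ produce the subinstance $(G-N[v],k-1)$, thereby contributing $g(k)$ subinstances of parameter $k-1$ and reducing to the case $S\cap R=\emptyset$.

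Since Definition~\ref{def:withoutBiclique} is symmetric in $A$ and $B$, the triple $(B,A,R)$ is also a valid witness, so when enumerating all partitions $k=k_1+k_2$ with $k_1=|S\cap A|$ and $k_2=|S\cap B|$, I may assume without loss of generality that $k_2\ge 1$. In the sub-case $k_1=0$, I greedily extract disjoint independent sets $I_1,I_2,\ldots$ of size $k-1$ from $B$, stopping either when none can be found in the remainder of $B$ or when reaching a threshold $T_0$. If the process terminates early, then every IS of size $k-1$ in $B$ meets $\bigcup_j I_j$, whence $S$ itself meets it (remove any vertex); I branch on the $O(T_0 k)$ vertices of the union with parameter $k-1$. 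If the process reaches $T_0$, then the bipartite graph between $A$ and $\bigcup_j I_j$ is $K_{t,t}$-free, and for $T_0$ chosen appropriately the K\H{o}v\'{a}ri--S\'{o}s--Tur\'{a}n bound on the number of its edges is strictly less than $|A|T_0$, forcing a pair $(w,j)$ with $w\in A$ having no neighbor in $I_j$, so that $\{w\}\cup I_j$ is an IS of size $k$ and we return YES.

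In the main sub-case $k_1,k_2\ge 1$, I perform the analogous greedy packing on both sides, producing up to $T$ disjoint IS $A^{(1)},A^{(2)},\ldots$ of size $k_1$ in $A$ and up to $T$ disjoint IS $I_1,I_2,\ldots$ of size $k_2$ in $B$. If either packing terminates early, I branch on the $O(Tk)$ vertices of the corresponding union with parameter $k-1$. If both reach $T$, then the bipartite graph between $\bigcup_i A^{(i)}$ (size $Tk_1$) and $\bigcup_j I_j$ (size $Tk_2$) is $K_{t,t}$-free; since each bad pair $(A^{(i)},I_j)$ (one crossed by at least one edge) contributes at least one edge, K\H{o}v\'{a}ri--S\'{o}s--Tur\'{a}n upper-bounds the number of bad pairs by the edge count, and $T$ is chosen so that this upper bound is strictly less than the total $T^2$ pairs. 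Hence some pair is edge-free, and the union of its two sets is an IS of size $k$. Summing over all branches gives a polynomial-in-$k$ number of subinstances of parameter $k-1$, and we conclude by Corollary~\ref{cor:TuringHered}.

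The main obstacle is the delicate interplay between the threshold $T$ and the size requirement on $A$ and $B$: the threshold must be large enough for the K\H{o}v\'{a}ri--S\'{o}s--Tur\'{a}n inequality to force a good pair, yet small enough that $Tk_1$ disjoint vertices actually fit inside $A$ and $Tk_2$ inside $B$. This is exactly what the calibrated lower bound $|A|,|B|>\lceil d(k)^{d(k)}k^{2d(k)-1}\rceil+1$ in Definition~\ref{def:withoutBiclique} provides, and the key technical point is to control the lower-order term of the K\H{o}v\'{a}ri--S\'{o}s--Tur\'{a}n bound on the smaller side of the bipartition, which plays the role of the simple union bound $|N_B(A_1)|\le k_1d(k)$ used in Lemma~\ref{lem:almostDisjoint}.
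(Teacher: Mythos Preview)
Your proposal is correct and follows essentially the same route as the paper: greedily pack disjoint independent sets on each side, branch if the packing stalls, and otherwise use K\H{o}v\'{a}ri--S\'{o}s--Tur\'{a}n to find an edge-free pair $(A^{(i)},I_j)$.

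Two minor differences are worth noting. First, you branch on $R$ explicitly at the outset; the paper's writeup silently assumes $S\cap R=\emptyset$ (writing $k_2=k-k_1$), so your version is actually cleaner here. Second, in the boundary sub-case $k_1=0$ you run the K\H{o}v\'{a}ri--S\'{o}s--Tur\'{a}n argument against \emph{all} of $A$, which forces you to invoke the asymmetric form of the bound (edges $\lesssim n\,m^{1-1/t}$ with $m$ the smaller side) and to control its lower-order term, as you say. The paper instead selects an arbitrary $A'\subseteq A$ of size exactly $s=\lceil d(k)^{d(k)}k^{2d(k)-1}\rceil+1$ and applies the symmetric bound $t(sk)^{2-1/t}<s^2$; this is where the calibrated lower bound $|A|>s$ in Definition~\ref{def:withoutBiclique} is actually used (so that such an $A'$ exists), rather than to guarantee that $Tk_1$ vertices ``fit'' in $A$ as you suggest---if they don't fit, the packing simply terminates early and you branch. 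Finally, since $\mathcal C$ is not assumed hereditary, the concluding reference should be to Lemma~\ref{lem:TuringHered} rather than Corollary~\ref{cor:TuringHered}.
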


\begin{proof}
  Let $S$ be an unknown solution with $k_1 := S \cap A$ and $k_2 := S \cap B = k - k_1$.
  As previously, we try out all the $k+1$ values for $k_1$, setting $k_2$ to $k-k_1$. 
  Let us first consider the $k-1$ branches in which $k_1 \neq 0$ and $k_2 \neq 0$.

  Let $s := \lceil d(k)^{d(k)}k^{2d(k)-1} \rceil + 1$.
  Using the same process as in Lemma~\ref{lem:almostDisjoint}, we compute $s$ disjoint independent sets $A_1, \ldots, A_s$ of size $k_1$ in $G[A]$ and $s$ disjoint independent sets $B_1, \ldots, B_s$ of size $k_2$ in $G[B]$.
  Again, if the process stops before we reach $s$ independent sets, we know that a solution (with $k_1$ vertices of $A$ and $k_2$ vertices of $B$) intersects a set of size at most $k_1(s-1)$ or $k_2(s-1)$ and we can branch (since $s$ is bounded by a function of $k$).
  
  Now we claim that there is at least one pair $(A_i,B_j)$ (among the $s^2$ pairs) without any edge between $A_i$ and $B_j$; hence $A_i \cup B_j$ is an independent of size $k$.
  Suppose that this is not the case.
  Then, there is at least one edge between each pair $(A_i,B_j)$.
  Therefore the bipartite graph $\mathcal B := \mathcal B(\bigcup_{i \in [s]} A_i,\bigcup_{i \in [s]} B_i)$ has at least $s^2$ edges, and $sk_1+sk_2=sk$ vertices.
  As $\mathcal B$ is also $K_{d(k),d(k)}$-free, it has, by K\H{o}v\'ari-S\'os-Tur\'an's theorem, at most $d(k)(sk)^{2-\frac{1}{d(k)}}$ edges.
  But, by the choice of $s$, $s^2 > d(k)(sk)^{2-\frac{1}{d(k)}}$, a contradiction.

  We now deal with the case $k_1=0$.
  We show that if a solution exists with $k_1=0, k_2=k$, then the branch $k_1=1, k_2=k-1$ also leads to a solution.
  Let us revisit that latter branch.
  We compute $s$ disjoint independent sets $B_1, \ldots, B_s$ of size $k-1$ in $G[B]$.
  Again, if this process stops before we reach $s$ independent sets, we can branch on each vertex of a set of size at most $(k-1)(s-1)$.
  This branching also covers the case $k_2=k$, since clearly, an independent set of size $k$ in $G[B]$ intersects those at most $(k-1)(s-1)$ vertices.
  Now, let $A'$ be any set of $s$ vertices in $A$ and $\mathcal B := \mathcal B(A', \bigcup_{i \in [s]} B_i)$.
  By applying K\H{o}v\'ari-S\'os-Tur\'an's theorem to $\mathcal B$ as in the previous paragraph, there should be at least one pair $(u,B_j) \in A' \times \{B_1, \ldots, B_s\}$ such that $u$ is not adjacent to $B_j$.
  
  We handle the case $k_2=0$ similarly, the conclusion being that we do not need to explore these branches.
  So we have described a decreasing FPT Turing-reduction creating less than $k(k+2)s$ instances (each with parameter $k'\leqslant k-1$), and we conclude by Lemma~\ref{lem:TuringHered}.
\end{proof}

A class of \emph{co-graphs with parameterized noise} is a hereditary class in which all the graphs are almost bicomplete or almost disconnected. 
The following is a direct consequence of the previous lemmas.
\begin{corollary}\label{cor:coGraphsNoise}
  Given an FPT oracle finding the corresponding tripartitions, \mis is FPT in co-graphs with parameterized noise.
\end{corollary}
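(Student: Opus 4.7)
The plan is a two-level induction: an outer induction on the parameter $k$ combined with an inner induction on $n = |V(G)|$. The hereditariness of $\mathcal{C}$ is crucial because it guarantees that every induced subgraph produced along the recursion still lies in $\mathcal{C}$, so the FPT oracle remains available at every step. The base cases are immediate: $k \leqslant 1$ is trivial, and whenever $n$ is bounded by a computable function of $k$, \mis is solved by brute force in FPT time.

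For the inductive step, I would first call the oracle on $(G,k)$ to obtain the tripartition $(A,B,R)$ and use it to distinguish two cases according to whether $G$ is almost bicomplete or almost disconnected. If $G$ is $(g,d)$-almost bicomplete, I would invoke Lemma~\ref{lem:completeSum}: it yields an improving FPT Turing-reduction to the two instances $(G[A \cup R],k)$ and $(G[B \cup R],k)$. By non-triviality of the tripartition, $|R| < \min(|A|,|B|)$, so both subinstances are strictly smaller than $G$. Since $\mathcal{C}$ is hereditary, the inner induction on $n$ (the parameter is unchanged) applies to them, and Lemma~\ref{lem:TuringHered2} then lifts this to an FPT algorithm for $(G,k)$.

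If instead $G$ is $(g,d)$-almost disconnected, I would first ensure that the one-sided hypothesis of Lemma~\ref{lem:almostDisjoint} is met. If both $|A| \leqslant kd(k)$ and $|B| \leqslant kd(k)$, then $n \leqslant 2kd(k) + g(k)$ and we are in the base case; otherwise I relabel $A$ and $B$ (the two-sided definition is symmetric) so that $|B| > kd(k)$, producing a witness of one-sided $(g,d)$-almost disconnectedness. Lemma~\ref{lem:almostDisjoint} then provides a \emph{decreasing} FPT Turing-reduction whose subinstances have parameter at most $k-1$ and are induced subgraphs of $G$. These subgraphs sit in $\mathcal{C}$, and by the outer inductive hypothesis on $k$ they are FPT; Lemma~\ref{lem:TuringHered} (equivalently, Corollary~\ref{cor:TuringHered}) concludes.

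The delicate part of the argument is bookkeeping the interaction between the two kinds of reductions: the almost-bicomplete branch does not decrease the parameter, only the size, while the almost-disconnected branch does decrease the parameter. A single induction on $k$ or on $n$ would not suffice because a recursive call created by one branch may fall into the other regime (an induced subgraph of an almost bicomplete graph may very well be almost disconnected, and vice versa). The nested induction $(k, n)$ in lexicographic order — within each $k$ we run down $n$, and only when we invoke the decreasing reduction do we consume one unit of $k$ — resolves this and, together with the running-time bounds already established in Lemmas~\ref{lem:completeSum} and~\ref{lem:almostDisjoint}, gives the claimed FPT algorithm.
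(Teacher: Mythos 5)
Your argument is correct and takes essentially the same route as the paper, which proves the corollary by simply invoking "the previous lemmas" (Lemmas~\ref{lem:completeSum} and~\ref{lem:almostDisjoint}, together with Lemmas~\ref{lem:TuringHered} and~\ref{lem:TuringHered2}); you have merely spelled out the details the paper leaves implicit. In particular, your observation that a two-sided almost-disconnected witness yields a one-sided one after relabeling so that the large side plays the role of $B$ (and that otherwise $n$ is bounded by $2kd(k)+g(k)$ and we brute-force), and your lexicographic $(k,n)$ induction to handle the alternation between size-decreasing and parameter-decreasing branches, are exactly the right bookkeeping to turn the lemmas into a complete FPT algorithm.
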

The corollary still holds by replacing \emph{almost disconnected} by \emph{one-sided almost disconnected}, or even by \emph{weakly connected}. 

\subsection{Summary and usage}

Figure~\ref{fig:summaryLemmas} sums up the four FPT Turing-reductions that we obtained on almost disconnected and almost join graphs.

\begin{figure}[h!]
\begin{subfigure}[b]{0.2\textwidth}
 \begin{tikzpicture} 
   \node (Rlb) at (0,0) {} ;
   \node (Rrt) at (2.5,0.3) {} ;
   \node[draw, rounded corners, rectangle, fit=(Rlb) (Rrt)] {} ;
   \node at (1.25,0.15) {$|R| \leqslant g(k)$} ;

   \node (Alb) at (0,0.9) {} ;
   \node (Art) at (0.7,3.2) {} ;
   \node[draw, rounded corners, rectangle, fit=(Alb) (Art)] (A) {$A$} ;

   \node (Blb) at (1.8,0.9) {} ;
   \node (Brt) at (2.5,3.2) {} ;
   \node[draw, rounded corners, rectangle, fit=(Blb) (Brt)] (B) {$B$} ;

   \draw[very thick] (A) -- (B) ;

   \node[draw,circle] (u) at (0.5,1.3) {} ;
   \foreach \i in {-0.2,-0.1,0,0.1,0.2}{
     \draw[dashed] (u) --++(1.2,\i) ;
   }
   \node (nu) at (2.22,1.3) {$\leqslant d(k)$} ;

   \node[draw,circle] (v) at (2,2.8) {} ;
   \foreach \i in {-0.2,-0.1,0,0.1,0.2}{
     \draw[dashed] (v) --++(-1.2,\i) ;
   }
   \node (nv) at (0.3,2.8) {$\leqslant d(k)$} ;
 \end{tikzpicture} 
 \caption{Almost bicomplete tripartition, $A \neq \emptyset$, $B \neq \emptyset$.}
\end{subfigure}
\qquad
\begin{subfigure}[b]{0.2\textwidth}
 \begin{tikzpicture} 
   \node (Rlb) at (0,0) {} ;
   \node (Rrt) at (2.5,0.3) {} ;
   \node[draw, rounded corners, rectangle, fit=(Rlb) (Rrt)] {} ;
   \node at (1.25,0.15) {$|R| \leqslant g(k)$} ;

   \node (Alb) at (0,0.9) {} ;
   \node (Art) at (0.7,3.2) {} ;
   \node[draw, rounded corners, rectangle, fit=(Alb) (Art)] (A) {$A$} ;

   \node (Blb) at (1.8,0.9) {} ;
   \node (Brt) at (2.5,3.2) {} ;
   \node[draw, rounded corners, rectangle, fit=(Blb) (Brt)] (B) {$B$} ;

   \draw[very thick] (A) -- (B) ;

   \node[draw,circle] (v) at (2,2.8) {} ;
   \foreach \i in {-0.2,-0.1,0,0.1,0.2}{
     \draw[dashed] (v) --++(-1.2,\i) ;
   }
   \node (nv) at (0.3,2.8) {$\leqslant d(k)$} ;
 \end{tikzpicture}
 \caption{One-sided almost bicomplete, $S \cap A \neq \emptyset$ promise.}
\end{subfigure}
\qquad
\begin{subfigure}[b]{0.2\textwidth}
 \begin{tikzpicture} 
   \node (Rlb) at (0,0) {} ;
   \node (Rrt) at (2.5,0.3) {} ;
   \node[draw, rounded corners, rectangle, fit=(Rlb) (Rrt)] {} ;
   \node at (1.25,0.15) {$|R| \leqslant g(k)$} ;

   \node (Alb) at (0,0.9) {} ;
   \node (Art) at (0.7,3.2) {} ;
   \node[draw, rounded corners, rectangle, fit=(Alb) (Art)] {$A$} ;

   \node (Blb) at (1.8,0.9) {} ;
   \node (Brt) at (2.5,3.2) {} ;
   \node[draw, rounded corners, rectangle, fit=(Blb) (Brt)] {$B$} ;
   
   \node[draw,circle] (u) at (0.5,1.3) {} ;
   \foreach \i in {-0.2,-0.1,0,0.1,0.2}{
     \draw (u) --++(1.2,\i) ;
   }
   \node (nu) at (2.22,1.3) {$\leqslant d(k)$} ;
 \end{tikzpicture}
 \caption{One-sided almost disconnected, $|B| > kd(k), A \neq \emptyset$.}
\end{subfigure}
\qquad
\begin{subfigure}[b]{0.2\textwidth}
 \begin{tikzpicture} 
   \node (Rlb) at (0,0) {} ;
   \node (Rrt) at (2.5,0.3) {} ;
   \node[draw, rounded corners, rectangle, fit=(Rlb) (Rrt)] {} ;
   \node at (1.25,0.15) {$|R| \leqslant g(k)$} ;

   \node (Alb) at (0,0.9) {} ;
   \node (Art) at (0.7,3.2) {} ;
   \node[draw, rounded corners, rectangle, fit=(Alb) (Art)] {$A$} ;

   \node (Blb) at (1.8,0.9) {} ;
   \node (Brt) at (2.5,3.2) {} ;
   \node[draw, rounded corners, rectangle, fit=(Blb) (Brt)] {$B$} ;

   \foreach \j in {1,...,5}{
     \coordinate (y\j) at (0.5,0.85+0.13*\j) {};
     \coordinate (z\j) at (2,0.85+0.13*\j) {};
   }
   \foreach \i in {1,...,5}{
     \foreach \j in {1,...,5}{
       \draw (y\i) -- (z\j) ;
     }
   }
   \draw[very thick, red] (1,0.8) -- (1.5,1.7) ;
   \draw[very thick, red] (1,1.7) -- (1.5,0.8) ;
   \node at (0.2,1.24) {$d(k)$} ;
   \node at (2.4,1.24) {$d(k)$} ;
 \end{tikzpicture}
 \caption{Weakly connected, $\min(|A|,|B|) > (d(k)k^2)^{d(k)}$.}
\end{subfigure}
\caption{Summary of the FPT Turing-reductions and their hypotheses, provided we can efficiently find such tripartitions. For (c) and (d), the FPT Turing-reductions are decreasing, while for (a) and (b) they are just improving.}
\label{fig:summaryLemmas}
\end{figure}
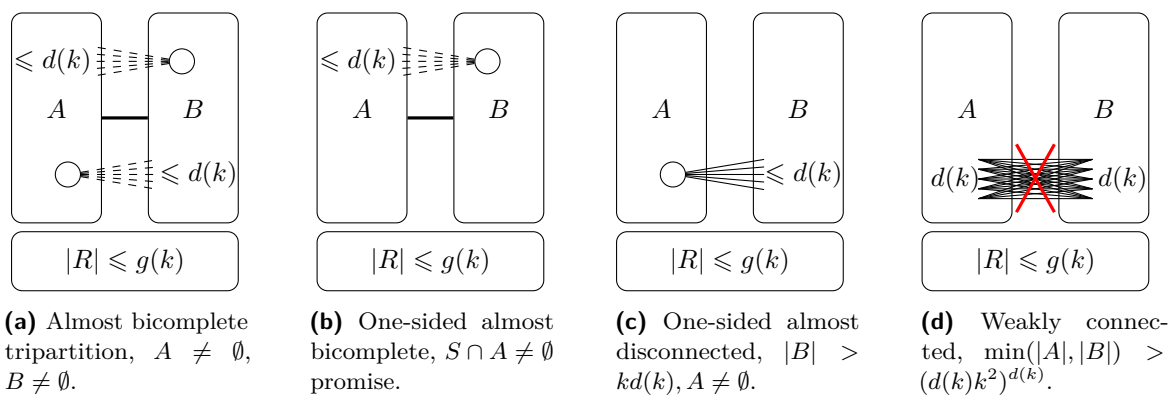

We know provide a few words in order to understand how to use these results.
An obvious caveat is that, even if such a tripartition exists, computing it (or even, approximating it) may not be fixed-parameter tractable.
What we hope is that on a class $\mathcal C$, we will manage to exploit the class structure in order to eventually find such tripartitions, in the cases we cannot conclude by more direct means.  
One of our main results, \cref{thm:p1ttt}, illustrates that mechanism, when the algorithm is centered around getting to the hypotheses of \cref{lem:completeSum1sided} or \cref{thm:withoutBiclique}.

\section{FPT algorithm in \pk{1,t,t,t}-free graphs}\label{sec:p1ttt}

We denote by \pk{a,b,c,d} the graph made by substituting the vertices of $P_4$ by cliques of size $a$, $b$, $c$, and $d$, respectively.
For instance, \pk{1,1,1,2} is $\overline P$ and \pk{1,1,2,1} is the kite.
We settle the parameterized complexity of \mis on $\overline P$-free and kite-free graphs simultaneously (see Figure~\ref{fig:remainingfour}), by showing that \mis is FPT even in the much wider class of \pk{1,t,t,t}-free graphs.

\begin{theorem}\label{thm:p1ttt}
  For every integer $t$, \mis is FPT in \pk{1,t,t,t}-free graphs.
\end{theorem}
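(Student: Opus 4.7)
The plan is to build, from $P(1,t,t,t)$-freeness, a tripartition $(A, B, R)$ of $V(G)$ on which one of the FPT Turing-reductions of Section~\ref{sec:almostUnionOrSum} can be invoked. First, I would clean up the input using Observations~\ref{obs:boundedDegree} and~\ref{obs:boundedCoDegree}, so that we may assume every vertex has both large degree and large co-degree (otherwise we branch immediately). Next, I would apply the constructive Ramsey lemma (Lemma~\ref{lem:constructiveRamsey}) to either output an independent set of size $k$ directly, or extract a clique $C \subseteq V(G)$ whose size $M = M(k,t)$ is chosen large enough to accommodate the extractions below.

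With $C$ in hand, I would classify each vertex $v \in V(G) \setminus C$ by the size of its non-neighborhood inside $C$; since $C$ is a clique, $C \setminus N(v)$ is itself a clique. Let
\begin{equation*}
  B := \{v \in V(G) \setminus C : |C \setminus N(v)| < t\}, \qquad A := V(G) \setminus (C \cup B).
\end{equation*}
Vertices in $B$ are almost-universal on $C$, whereas vertices in $A$ carry a private $K_t$ of non-neighbors inside $C$. I would then argue, using $P(1,t,t,t)$-freeness, that the bipartite graph $\mathcal{B}(A, B)$ cannot contain a biclique $K_{t',t'}$ for some $t' = t'(t)$. Roughly, given such a biclique, one tries to place a vertex $w \in B$ as $v_1$, to form $A_2$ from $t$ of $w$'s neighbors inside $C$, to route $A_3$ inside the $A$-side of the biclique after extracting a $K_t$ there by Ramsey (using the almost-universality of $B$-vertices to $C$ to ensure $A_2 \sim A_3$), and to place $A_4$ inside the private $K_t$ of non-neighbors of a suitable $A$-vertex, checking that all the non-adjacencies of $P(1,t,t,t)$ are preserved.

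Having established that $\mathcal{B}(A, B)$ is $K_{t',t'}$-free, I would set $R := C$ together with at most a bounded (function of $k, t$) number of additional vertices needed to make $A$ and $B$ satisfy the size conditions of Definition~\ref{def:withoutBiclique}, and then invoke Theorem~\ref{thm:withoutBiclique} to obtain a decreasing FPT Turing-reduction; the final conclusion follows by Lemma~\ref{lem:TuringHered} and Corollary~\ref{cor:TuringHered}. If either $|A|$ or $|B|$ is too small to meet those lower bounds, the smaller of the two is already of size bounded by a function of $k$ and $t$, so we can brute-force branch on it. The hardest step will be the biclique-to-$P(1,t,t,t)$ extraction: ensuring that every sufficiently large biclique between $A$ and $B$ really yields an induced copy of $P(1,t,t,t)$ requires carefully combining Ramsey inside $C$, the non-neighbor-clique structure of $A$-vertices, and the almost-universality of $B$-vertices to $C$, while avoiding unwanted chords. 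A secondary difficulty, which may require a second round of Ramsey-extraction inside $A$ or $B$, is handling intermediate regimes where $C$ alone does not separate the vertices cleanly and the partition $(A,B)$ must be refined before the Turing-reduction applies.
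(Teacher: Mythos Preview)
Your central claim---that $\mathcal B(A,B)$ is $K_{t',t'}$-free---does not hold, and the specific extraction you sketch already breaks at the first step. You place $w\in B$ as $v_1$ and take $A_3$ as a $K_t$ inside the $A$-side of the biclique; but by definition of a biclique, $w$ is adjacent to every vertex of that $A$-side, hence $v_1\sim A_3$, which is forbidden in $P(1,t,t,t)$. No reshuffling of the roles fixes this with a \emph{single} reference clique $C$: consider $V(G)=A\cup B\cup C$ with $A,B,C$ cliques, $B$ complete to $A\cup C$, and $A$ complete to $C\setminus D$ and anticomplete to a fixed $D\subseteq C$ with $|D|=t$. The complement of $G$ is the complete bipartite graph on $A\times D$, and a short case check shows $G$ is $P(1,t,t,t)$-free (any attempted copy forces $A_2\subseteq D$ and $A_3\subseteq A$, whence $A_2\sim A_3$ fails in $\overline G$). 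Yet $\mathcal B(A,B)$ is complete bipartite, so contains arbitrarily large bicliques. You might hope the co-degree cleanup disposes of such examples, but this only shows your partition is doing no work: you would be relying on Observation~\ref{obs:boundedCoDegree} alone, and it is easy to perturb the construction (e.g.\ duplicate the picture and cross-link sparsely) so that no vertex has bounded co-degree while $\mathcal B(A,B)$ still contains a large biclique.

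The paper's proof hinges on maintaining a collection of at least \emph{two} pairwise non-adjacent cliques (with further maximality properties), so that vertices are sorted by \emph{which} cliques they $t$-see. Having two cliques is precisely what powers both structural lemmas: for incomparable index sets $I,J$ one gets $i\in I\setminus J$ and $j\in J\setminus I$ and can route a $P(1,t,t,t)$ through $C_i$ and $C_j$ to kill large bicliques between $P(I)$ and $P(J)$; for nested $I\subsetneq J$ one instead gets near-complete adjacency, which feeds Lemma~\ref{lem:completeSum1sided} rather than Theorem~\ref{thm:withoutBiclique}. A single clique collapses this dichotomy, and the case analysis (including a delicate ``critical'' case where the collection keeps degenerating to $\{K_{3t},K_{2kt}\}$) cannot be avoided. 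Your final caveat about ``a second round of Ramsey-extraction'' is pointing at the right deficiency, but what is actually needed is not a refinement of $(A,B)$---it is a qualitatively different decomposition driven by several cliques, together with the nested/incomparable case split.
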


\begin{proof}
  Let $t$ be a fixed integer, and $(G,k)$ be an input such that $G$ is \pk{1,t,t,t}-free and $\alpha(G) \leqslant k$.
  We assume that $k \geqslant 3$, otherwise we conclude in polynomial-time.

  The global strategy is the following.
  First we extract a collection $C$ of disjoint and non-adjacent cliques with minimum and maximum size requirements, and some maximality condition.
  Then we partition the remaining vertices into equivalence classes with respect to their neighborhood in $C$.
  The maximum size imposed on the cliques of $C$ ensures that the number of equivalence classes is bounded by a function of $k$.
  Setting $C$ and the small\footnote{the ones whose size is bounded by a later-specified function of $k$} equivalence classes apart, we show that the rest of the graph is partitionable into $(A,B)$ such that either $\mathcal B(A,B)$ is $K_{d(k),d(k)}$-free, in which case we conclude with Theorem~\ref{thm:withoutBiclique}, or $\mathcal B(A,B)$ is almost a complete bipartite graph, in which case we conclude with Lemma~\ref{lem:completeSum1sided} (see Algorithm~\ref{alg:p1ttt} for the pseudo-code).

  As for the running time, we are looking for an algorithm in time $f(k)n^c$ for some fixed constant $c \geqslant 2$, and $f$ an increasing computable function.
  We see $f$ as a partial function on $[k-1]$, and extend it to $[k]$ in the recursive calls.

  \subparagraph*{Building the clique collection $C$.}
  For technical reasons, we want our collection $C$ to contain at least two cliques, at least one of which being fairly large (larger than we can allow ourselves to brute-force).
  So we proceed in the following way.
  We find in polynomial-time $n^{8t+O(1)}$ a $2K_{4t}$.
  If $G$ is $2K_{4t}$-free, an FPT algorithm already exists \cite{BonnetBCTW18}.
  We see these two $K_{4t}$ as the two initial cliques of our collection.
  Let $X$ be the set of vertices with less than $t$ neighbors in at least one of these two $K_{4t}$.
  We partition $X$ into at most $2^{8t}$ vertex-sets (later they will be called \emph{subclasses}) with the same neighborhood on the $2K_{4t}$.
  If all these sets contain less than $\Ramsey(k+1,2kt)$ vertices, $X$ is fairly small: it contains less than $2^{8t}\Ramsey(k+1,2kt)$.
  The other vertices have at least $t$ neighbors in both $K_{4t}$.
  We will show (Lemma~\ref{lem:tImpliesAll}) that this implies that these vertices are completely adjacent to both $K_{4t}$.
  Hence, vertices in the $2K_{4t}$ would have at most $2^{8t}\Ramsey(k+1,2kt)$ non-neighbors.
  In that case, we can safely remove the $2K_{4t}$ from $G$, by Observation~\ref{obs:boundedCoDegree}.
  
  So we can safely assume that (eventually) one subclass of $X$ has more than $\Ramsey(k+1,2kt)$ vertices.
  We can find in polynomial-time a clique $C_2$ of size $2kt$.
  We build a new collection with $3t$ vertices of the first $K_{4t}$, that we name $C_1$.
  We take these vertices not adjacent to $C_2$ (this is possible since vertices in $C_2$ have the same at most $t-1$ neighbors in $K_{4t}$).
  Now we have in $C$ a clique $C_1$ of size $3t$ and a clique $C_2$ of size $2kt$.
  
  We say that a clique of $C$ is \emph{large} if its size is above $kt$, and \emph{small} otherwise.
  We can now specify the requirements on the collection $C$.
  \begin{enumerate}[(1)]
  \item $C$ is a vertex-disjoint and independent\footnote{there is no edge between two cliques of the collection} collection of cliques.
  \item all the cliques have size at least $3t$ and at most $2kt$.
  \item the number of cliques is at least $2$.
  \item if we find a way to strictly increase the number of large cliques in $C$, we do it.
  \end{enumerate}
  
  As $\alpha(G) \leqslant k$, the number of cliques in $C$ cannot exceed $k$.
  This has two positive consequences.
  The first is in conjunction with the way we improve the collection $C$: by always increasing the number of large cliques by $1$.  
  Therefore, we can improve the collection $C$ at most $k-1$ times.
  In particular, the improving process of $C$ terminates (in polynomial time).
  The second benefit is that the total number of vertices of $C$ is always bounded by $2k^2t$.
  Hence, the number of subclasses (sets of vertices with the exact same neighborhood in $C$) is bounded by a function of $k$ (and the constant $t$).

  As a slight abuse of notation, $C_1, \ldots, C_s$ will always be the current collection $C$ $(s<k)$.
  We say that a vertex of $G - C$ \emph{$t$-sees} a clique $C_j$ of $C$ if it has at least $t$ neighbors in $C_j$.
  A \emph{class} is a set of vertices $t$-seeing the same set of cliques of $C$.
  A \emph{subclass} is a a set of vertices with the same neighborhood in $C$.
  Both classes and subclasses partition $G-C$.
  Observe that subclasses naturally refine classes.
  By extension, we say that a (sub)class $t$-sees a clique $C_i \in C$ if one vertex or equivalently all the vertices of that (sub)class $t$-see $C_i$.

  Let $\eta := \lceil (2\Ramsey(k+1,t))^{\Ramsey(k+1,t)}2^{2\Ramsey(k+1,t)-1} \rceil + 1$.
  We choose this value so that $\eta^2/2 > \Ramsey(k+1,t) (2 \eta)^{2-1/\Ramsey(k+1,t)}$ (it will become clear why in the proof of Lemma~\ref{lem:noOverlap}).
  We say that a subclass is \emph{big} if it has more than $\max(\Ramsey(k+1,2kt),\eta)=\eta$ vertices, and \emph{small} otherwise.
  Since $\alpha(G) \leqslant k$, here are two convenient properties on a big subclass:
  \begin{itemize}
  \item a clique of size $t$ can be found in polynomial-time, in order to build a potential \pk{1,t,t,t},
  \item a clique of size $2kt$ can be found, in order to challenge the maximality of $C$.
  \end{itemize}
  We will come back to the significance of $\eta$ later.

  We can now specify item (4) of the clique-collection requirements.
  We resume where we left off the collection $C$, that is $\{C_1=K_{3t}, C_2=K_{2kt}\}$.
  While there is a big subclass that does not $t$-see any large clique of $C$, we find a clique of size $2kt$ in that subclass, and add it to the collection.
  We then remove the small clique ($K_{3t}$) potentially left, and in each large clique of $C$, we remove from $C$ all neighbors of the subclass (they are at most $t-1$ many of them).
  This process adds a large clique to $C$, and decreases the size of the previous large cliques by at most $t-1$.
  Since the large cliques all enter $C$ with size $2kt$, and the number of improvements is smaller than $k$, a large clique will remain large throughout the entire process.
  Therefore, the number of large cliques in $C$ increases by $1$.
  Since we started with one large clique among the first two cliques, the number of cliques remains at least $2$.
  Note that, at each iteration, we update the subclasses with respect to the new collection $C$ (see Algorithm~\ref{alg:cliqueCollection} for the pseudo-code).

  \begin{algorithm}[!h]
  \caption{Routine for computing the clique collection $C$.}
  \label{alg:cliqueCollection}
  \begin{algorithmic}[1]
    \Require{$k$ is a positive integer, $G$ is \emph{not} $2K_{4t}$-free, $\alpha(G) \leqslant k$}
    \Function{BuildCliqueCollection}{$G,k$}:
    \Let{$C$}{$\{K_{4t}, K_{4t}\}$} \Comment{computed by brute-force}
    \If{$\exists$~big subclass not $t$-seeing both $K_{4t}$}
    \Let{$C_2$}{$K_{2kt}$ in the subclass} \Comment{by Ramsey}
    \Let{$C_1$}{$3t$ vertices not adjacent to $C_2$ from one of the $K_{4t}$ not $t$-seen by the subclass}
    \Let{$C$}{$\{C_1,  C_2\}$}    
    \Else{~every big subclasses $t$-see both $K_{4t}$}{\\
      ~~~~~~~~~~~~~~~~~~~~vertices in $C$ have bounded co-degree \Comment{Lemma~\ref{lem:tImpliesAll}}\\
      ~~~~~~~~~~~~~~~~~~~~we can safely delete them \Comment{Observation~\ref{obs:boundedCoDegree}}\\
      ~~~~~~~~~~~~~~~~~~~~and call BuildCliqueCollection($G',k$) with the new graph $G'$
    }
    \EndIf 
    \While{$\exists$~big subclass not $t$-seeing any large clique}
      \Let{$C_j$}{$K_{2kt}$ in the subclass} \Comment{by Ramsey}
      \Let{$C'$}{$C \setminus \{$small clique$\}$} \Comment{this is actually done at most once}
      \Let{$C''$}{map$(C',$deleteNeighborsOf$(C_j))$} \Comment{remove $C_i \cap N(C_j)$ from each $C_i \in C$}
      \Let{$C$}{$C'' \cup \{C_j\}$}    \Comment{the new $C$ contains one more large clique, $C_j$.}
    \EndWhile
      \State \Return{$C$}
      \EndFunction
   \Ensure{output $C$ is a collection of at least two (and at most $k-1$) pairwise independent cliques of size between $3t$ and $2tk$, and every big subclass $t$-sees at least one large clique (\ie, clique of $C$ of size at least $tk$).}
  \end{algorithmic}
\end{algorithm}
  
  Small subclasses are set aside as their size is bounded by a function of $k$.
  Therefore, from hereon, all the subclasses are supposed big.
  We denote by $P(I)$ the class for which $I \subseteq [s]$ represents the indices of the cliques it $t$-sees.
  A first remark is that all the subclasses of $P(\emptyset)$ are small (so we "get rid of" the whole class $P(\emptyset)$).
  \begin{lemma}\label{lem:pnull}
    If $P'$ is a subclass of $P(\emptyset)$, then $|P'| \leqslant \Ramsey(k+1,2kt)$. 
  \end{lemma}
  \begin{proof}
    $P'$ does not $t$-see any (large) clique of $C$.
    So by the maximality property of $C$, it cannot contain a clique of size $2kt$ (see Algorithm~\ref{alg:cliqueCollection}).
    In particular, it cannot have more than $\Ramsey(k+1,2kt)$ vertices.
  \end{proof}

  We turn our attention to classes $P(I)$ with $|I| \geqslant 1$ and their subclasses.
  
  \subparagraph*{Structure of the classes $P(I)$.}
  We show a series of lemmas explaining how classes are connected to $C$ and, more importantly, how they are connected to each other.
  This uses the ability to build cliques of size $t$ at will, in big subclasses.
  Avoiding the formation of \pk{1,t,t,t} will imply relatively dense or relatively sparse connections between classes $P(I)$ and $P(J)$.
  
  \begin{lemma}\label{lem:tImpliesAll}
    If a big subclass $t$-sees at least two cliques $C_i$ and $C_j$ of $C$, then all the vertices of that subclass are adjacent to all the vertices of both cliques.
  \end{lemma}
  \begin{proof}
    We find $D$, a clique of size $t$ in the subclass.
    Let $D_i$ and $D_j$ be $t$ neighbors of the subclass in $C_i$ and $C_j$, respectively.
    Assume that the subclass has a non-neighbor $v \in C_i$.
    Then $vD_iDD_j$ is a \pk{1,t,t,t}.
  \end{proof}
  In light of the previous lemma, if $|I| \geqslant 2$, the cliques of $C$ that the class $P(I)$ $t$-sees are completely adjacent to $P(I)$.

   \begin{lemma}\label{lem:Inclusion}
     Let $I \subsetneq J \subseteq [s]$.
     Then, every vertex of $P(I)$ is adjacent to every vertex of $P(J)$ except at most $\Ramsey(k+1,t)$.
  \end{lemma}
  \begin{proof}
    Let $i \in I$ and $j \in J \setminus I$.
    By Lemma~\ref{lem:tImpliesAll}, all vertices of $P(J)$ are adjacent to all vertices of $C_i \cup C_j$.
    Suppose, by contradiction, that there is a vertex $u \in P(I)$  with more than $\Ramsey(k+1,t)$ non-neighbors in $P(J)$.
    We find a clique $D$ of size $t$ in $G[P(J) \setminus N(u)]$.
    Let $D_i$ be $t$ neighbors of $u$ in $C_i$.
    Let $D_j \subset C_j$ be $t$ neighbors of $P(J)$ which are not neighbors of $u$.
    Such a set $D_j$ necessarily exists since $u$ has at most $t-1$ neighbors in $C_j$, while $P(J)$ is completely adjacent to $C_j$, and $|C_j| \geqslant 3t$.
    Then $uD_iDD_j$ is a \pk{1,t,t,t}.
  \end{proof}

  We say that two sets $I, J$ are \emph{incomparable} if $I$ is not included in $J$, and $J$ is not included in $I$.
  Recall that $\mathcal B(A,B)$ stands for the bipartite graph between vertex-set $A$ and vertex-set~$B$.
  Let $p(t,k) := 2^{2k^2t}$ be a crude upper bound on the total number of subclasses.
   \begin{lemma}\label{lem:disjointClass}
    Let $I,J \subseteq [s]$ be two incomparable sets, and $P_\ell(I), P_{\ell'}(J)$ be any pair of subclasses of $P(I)$ and $P(J)$, respectively.
    Then, $\mathcal B(P_\ell(I),P_{\ell'}(J))$ is $K_{\Ramsey(k+1,t),\Ramsey(k+1,t)}$-free.
    Hence, $\mathcal B(P(I),P(J))$ is $K_{p(t,k)\Ramsey(k+1,t),p(t,k)\Ramsey(k+1,t)}$-free. 
   \end{lemma}

   \begin{proof}
     Let $i \in I \setminus J$ and $j \in J \setminus I$.
     We first assume that one of $I, J$, say $I$, has at least two elements.
     Suppose, by contradiction, that there is a set $B_I \subseteq P_\ell(I)$ and a set $B_J \subseteq P_{\ell'}(J)$ both of size $\Ramsey(k+1,t)$, such that there is no non-edge between $B_I$ and $B_J$.
     Let $u$ be a vertex of $C_j$ which is adjacent to $P_{\ell'}(J)$ but not to $P_\ell(I)$.
     We find $D_I$, a clique of size $t$ in $G[B_I]$, and $D_J$, a clique of size $t$ in $G[B_J]$.
     Let $D_i$ be $t$ neighbors of $P_\ell(I)$ in $C_i$ that are not adjacent to $P_{\ell'}(J)$.
     Those $t$ vertices exist since, by Lemma~\ref{lem:tImpliesAll}, $P_\ell(I)$ is completely adjacent to $C_i$ (by assumption $|I| \geqslant 2$).
     And $P_{\ell'}(J)$ has more than $t$ non-neighbors in $C_i$.
     Then, $uD_JD_ID_i$ is a \pk{1,t,t,t}.

     We now have to settle the remaining case: $|I|=|J|=1$ ($I=\{i\}$ and $J=\{j\}$).
     If $P_\ell(I)$ has at least $2t$ neighbors in $C_i$ or $P_{\ell'}(J)$ has at least $2t$ neighbors in $C_j$, we conclude as in the previous paragraph.
     So we assume that it is not the case.
     We distinguish two cases.
     
     Either $P_\ell(I)$ has at least one neighbor in $C_j$, say $u$.
     Let $D_I$ be a clique of size $t$ in $P_\ell(I)$, $D_i \subseteq C_i$ be $t$ neighbors of $P_\ell(I)$, and $D'_i \subseteq C_i$ be $t$ non-neighbors of $P_\ell(I)$.
     $D_i$ and $D'_i$ exist since $P_\ell(I)$ has between $t$ and $2t-1$ neighbors in $C_j$, and $|C_j| \geqslant 3t$.
     Then, $uD_ID_iD'_i$ is a \pk{1,t,t,t}.
     
     Or $P_\ell(I)$ has no neighbor in $C_j$.
     Let $u$ be a non-neighbor of $P_{\ell'}(J)$ in $C_j$, and $D_j \subseteq C_j$ be $t$ neighbors of $P_{\ell'}(J)$.
     If there is a set $B_I \subseteq P_\ell(I)$ and a set $B_J \subseteq P_{\ell'}(J)$ both of size $\Ramsey(k+1,t)$, such that $B_I$ and $B_J$ are completely adjacent to each other.
     We can find $D_I$, a clique of size $t$ in $G[B_I]$, and $D_J$, a clique of size $t$ in $G[B_J]$.
     Then, $uD_jD_JD_I$ is a \pk{1,t,t,t}.
     This implies that, in any case, there cannot be a $K_{p(t,k)\Ramsey(k+1,t),p(t,k)\Ramsey(k+1,t)}$ in $\mathcal B(P(I),P(J))$.
   \end{proof}

   We say that the sets $I$ and $J$ \emph{overlap} if all three of $I \cap J$, $I \setminus J$, $J \setminus I$ are non-empty.
   \begin{lemma}\label{lem:noOverlap}
     Let $I,J \subseteq [s]$ be two overlapping sets.
     Then, at least one of $P(I)$ and $P(J)$ have only small subclasses.
   \end{lemma}
   \begin{proof}
     Suppose, by contradiction, that there is a big subclass $P_\ell(I)$ of $P(I)$, and a big subclass $P_{\ell'}(J)$ of $P(J)$.
     Observe that, for $I$ and $J$ to overlap, their size should be at least $2$.
     Let $i \in I \setminus J$, $j \in J \setminus I$, $h \in I \cap J$.
     By the arguments of Lemma~\ref{lem:Inclusion} applied to the restriction to $P(I)$, $P(J)$, $C_h$, and $C_j$, a vertex in $P(I)$ has at most $\Ramsey(k+1,t)$ non-neighbors in $P(J)$.
     Let us consider $\eta$ vertices in $P_\ell(I)$ and $\eta$ vertices in $P_{\ell'}(J)$.
     Since $\eta \geqslant 2\Ramsey(k+1,t)$, the previous observation implies that the number of edges between them is at least $\eta^2/2$. 
     But by Lemma~\ref{lem:disjointClass}, the bipartite graph linking them should be $K_{\Ramsey(k+1,t),\Ramsey(k+1,t)}$-free.
     By K\H{o}v\'ari-S\'os-Tur\'an's theorem, this number of edges is bounded from above by $\Ramsey(k+1,t) (2 \eta)^{2-1/\Ramsey(k+1,t)} < \eta^2/2$, a contradiction.
   \end{proof}
   
   Hence, the remaining (not entirely made of small subclasses) classes define a laminar\footnote{where two sets are nested or disjoint} set-system.
   We denote by $R$ the union of the vertices in all the small subclasses and $C$.
   We now add a new condition to be a small subclass (condition that we did not need thus far).
   A subclass is also small if it has at most $|R|$ vertices.
   Note that this condition can snowball.
   But eventually $R$ has size bounded by $g(k) := 2^{p(t,k)}(p(t,k)\eta + 2k^2t)$.
   A class is \emph{remaining} if it contains at least one big subclass.
   By Lemma~\ref{lem:pnull}, $P(\emptyset)$ cannot be remaining.
   If no class is remaining, then the whole graph is a kernel.
   So we can assume that there is at least one remaining class.
   Let $P(I)$ be a remaining class in $G-R$ such that $I$ is maximal among the remaining classes.
   We distinguish two cases: either there is at least one other remaining class $P(J)$ ($I \neq J$), or $P(I)$ is the unique remaining class.
   
   \subparagraph*{At least two remaining classes $P(I)$ and $P(J)$.}
   By Lemma~\ref{lem:noOverlap}, any other class $P(J)$ satisfies $J \subsetneq I$ or $I \cap J = \emptyset$.
   Let $\iota, \delta \leqslant 2^k$ be the number of remaining classes such that $J \subsetneq I$ and such that $I \cap J = \emptyset$, respectively.
   Again, we distinguish two cases: $\delta > 0$, and $\delta = 0$.
   If $\delta > 0$, we build the partition $(A,B,R)$ of $V(G)$ such that $A$ contains the $\iota+1$ classes whose set is included in $I$ and $B$ contains the $\delta$ classes whose set is disjoint from $I$.
   By Lemma~\ref{lem:disjointClass}, the bipartite graph between any of the $(\iota+1)\delta$ pairs of classes made of one class whose set is contained in $I$ and one class whose set is disjoint from $I$ is $K_{p(t,k)\Ramsey(k+1,t),p(t,k)\Ramsey(k+1,t)}$-free.
   Hence, the bipartite graph between $A$ and $B$ is $K_{2^kp(t,k)\Ramsey(k+1,t),2^kp(t,k)\Ramsey(k+1,t)}$-free.
   Thus we conclude by Theorem~\ref{thm:withoutBiclique} with $d(k)=2^kp(t,k)\Ramsey(k+1,t)$.

   We now tackle the case $\delta=0$, that is, all the remaining classes $P(J)$ satisfy $J \subseteq I$.
   We first assume that there are two remaining classes with disjoint sets.
   A laminar set-system with a unique maximal set can be represented as a rooted tree, where nodes are in one-to-one correspondence with the sets, and the parent-to-child arrow represents the partial order of inclusion.
   Here, the root is labeled by $I$ (since $I$ is the unique maximal set), and all the nodes are labeled by a subset of $[s]$ corresponding to a remaining class.
   Let $I=I_1 \supsetneq I_2 \supsetneq \ldots \supsetneq I_h$ be the path from the root to the first node with out-degree at least $2$.
   Observe that $C$ contains at most $k$ cliques, so $h \leqslant k$.
   Let $J_1, J_2, \ldots, J_\ell$ be the $\ell$ children of $I_h$ (with $\ell \geqslant 2$).
   Let $\mathcal P_1$ be the remaining classes whose set is included in $J_1$, and $\mathcal P_{2+}$ be the remaining classes whose set is included in one $J_i$ for some $i \in [2,\ell]$.
   Let $A := \bigcup_{q \in [h]} P(I_q)$, and $B := V(G) \setminus (A \cup R)$.
   By Lemma~\ref{lem:Inclusion}, vertices of $B$ have at most $h\Ramsey(k+1,t) \leqslant k \Ramsey(k+1,t)$ non-neighbors in $A$.
   We apply Lemma~\ref{lem:completeSum1sided} with the tripartition $(A,B,R)$ and $d_1(k) = k \Ramsey(k+1,t)$.
   Only we did not cover the case in which the solution does not intersect $A$.
   We do so by applying Theorem~\ref{thm:withoutBiclique} to the tripartition $(\mathcal P_1,\mathcal P_{2+},R)$ with $d_2(k) = 2^kp(t,k)\Ramsey(k+1,t)$.
   A priori, what we just did is not bounded by $f(k)|V(G)|^c$, hence not legal.
   Let us go back to the last lines of Lemma~\ref{lem:completeSum1sided} and of Theorem~\ref{thm:withoutBiclique}.
   Our running time is bounded by $f(k)|A \cup R|^c + k^2{d_1(k) \choose k}d_1(k)^cf(k-1)|B|^c + k(k+2)(\lceil d_2(k)^{d_2(k)} k^{2d_2(k)-1} \rceil +1)f(k-1)|B|^c$, where the two first terms come from the application of Lemma~\ref{lem:completeSum1sided}, and the third term, from Theorem~\ref{thm:withoutBiclique}.
   This is at most $f(k)|A \cup R|^c+f(k)|B|^c \leqslant f(k)|V(G)|^c$ by Cauchy-Schwarz inequality, with $f(k) := (k^2{d_1(k) \choose k}d_1(k)^c + k(k+2)(\lceil d_2(k)^{d_2(k)} k^{2d_2(k)-1} \rceil +1))f(k-1)$.

   Let now assume that all the remaining classes have nested sets (no two sets are disjoint).
   Let $I=I_1 \supsetneq I_2 \supsetneq \ldots \supsetneq I_h$ the sets of \emph{all} the remaining classes ($h \leqslant k$).
   Suppose $h \geqslant 3$.
   We apply Lemma~\ref{lem:completeSum1sided} to the tripartition $(P(I_1) \cup P(I_2), \bigcup_{j \in [3,h]}P(I_j),R)$ with $d(k) = 2 \Ramsey(k+1,t)$.
   Indeed, by Lemma~\ref{lem:Inclusion}, vertices of $\bigcup_{j \in [3,h]}P(I_j)$ have at most $\Ramsey(k+1,t)$ non-neighbors in $P(I_1)$ and at most $\Ramsey(k+1,t)$ non-neighbors in $P(I_2)$.
   We deal with the case in which the solution does not intersect $P(I_1) \cup P(I_2)$ in the following way.
   Let $C_q$ be the clique of $C$ only $t$-seen by $P(I_1)$ and $C_{q'}$ the clique of $C$ only $t$-seen by $P(I_1) \cup P(I_2)$.
   One of these two cliques has to be large (since there is at most one small clique).
   We branch on the at least $tk$ and at most $2tk$ vertices of that large clique, say $C'$.
   A maximal independent set cannot be fully contained in $\bigcup_{j \in [3,h]}P(I_j)$.
   Indeed, any choice of at most $k$ vertices in this set dominates at most $k(t-1)$ vertices of $C'$.
   Thus, we cannot miss a solution.
   Let us turn to the running time.
   Once again, we cannot use Lemma~\ref{lem:completeSum1sided} as a total black-box.
   Our running time is bounded by $f(k)|A \cup R|^c + k^2{d(k) \choose k}d(k)^c f(k-1)|B|^c + 2tkf(k-1)|B \cup R|^c \leqslant f(k)|A \cup R|^c + f(k)|B \cup R|^c$ with $f(k) := (k^2{d(k) \choose k}d(k)^c+2tkf)f(k-1)$, and $f(k)|A \cup R|^c + f(k)|B \cup R|^c \leqslant f(k)|V(G)|^c$, by Lemma~\ref{lem:almostSplit}.
   Here we need that $|A|>|R|$ and $|B|>|R|$ which is the case: recall that we added that requirement to be a big subclass.

   The last case is the following.
   There are exactly two remaining classes associated to sets $I=I_1 \supsetneq I_2$.
   If a clique not $t$-seen by $P(I_2)$ is large or if $P(I_2)$ is $2K_{4t}$-free, we conclude with Lemma~\ref{lem:completeSum1sided} (recall that this finds a solution if there is one intersecting $P(I_1)$.
   In both cases, if the solution does not intersect $P(I_1)$, we can find it with only a small overhead cost.
   If a clique not $t$-seen by $P(I_2)$ is large, we branch on the at most $2kt$ vertices of that clique.
   If $P(I_2)$ is $2K_{4t}$-free, an independent set of size $k$ can be found in $G[P(I_2)]$ in FPT time \cite{BonnetBCTW18}.

   Finally, we can assume that $G[P(I_2)]$ contains a $2K_{4t,4t}$ and does not $t$-see a small clique in $C$.
   Note that this implies that $C$ is made of two cliques $K_{3t}$ and $K_{2kt}$.
   We call \emph{critical} such a case where $C=\{K_{3t},K_{2kt}\}$ and a $2K_{4t}$ can be found in a class not $t$-seeing $K_{3t}$. 
      
   For this very specific case (that may also arise with a unique remaining class, see below), we perform the following refinement of the clique-collection computation.
   We compute a new clique collection, say $C^2$, in $G-C$, starting with a $2K_{4t,4t}$ found in the class not $t$-seeing the previous $K_{3t}$. 
   If $C^2$ is not of the form $\{K_{3t},K_{2kt}\}$, we add $C$ to the bounded-in-$k$ set $R$, and we follow our algorithm (that is, a non-critical case).
   If $C^2=\{K_{3t},K_{2kt}\}$, we compute a new clique collection $C^3$ in $G-(C^1 \cup C^2)$ (with $C^1=C$), again starting with a $2K_{4t,4t}$ found in the class not $t$-seeing the previous $K_{3t}$, and so on.
   Let us assume that we are always in a critical case, with $C^h=\{C^h_1=K_{3t},C^h_2=K_{2kt}\}$.
   We stop after $\zeta := \Ramsey_{2^{(3t)^2}}(4kt)$ iterations, leading to disjoint (though not independent) clique collections $C=C^1, C^2, \ldots, C^\zeta$.
   In particular, $|\bigcup_{h \in \zeta} C^h|$ is still bounded by a function of $k$, namely $\zeta(3t+2kt)$.
   We claim that we can find a $2K_{2kt,2kt}$ in $G[\bigcup_{h \in \zeta} C^h_1]$.

   Because of the number of iterations, one can extract $4kt$ cliques $C^h_1$ (of size $3t$) with the same bipartite graph linking any pair of $C^h_1$ (with a fixed but arbitrary ordering of each $C^h_1$).
   This common bipartite graph has to be empty, complete, or a half-graph.
   Let us show that it can only be a half-graph.
   For any $i \in [3t]$, the $i$-th vertices in the $C^h_1$ should be adjacent (otherwise they form an independent set of size $2kt$).
   That excludes the empty bipartite graph.
   Let $h_1$ be the smallest index such that we have extracted $C^{h_1}_1$.
   The common bipartite graph cannot be complete either, since all the vertices of $G-(\bigcup_{h \in [h_1]})$ have at most $t-1$ neighbors in $C^{h_1}_1$.
   This was one of the condition of a critical case.
   So the bipartite graph is a half-graph.
   Then we find our $2K_{2kt,2kt}$ as the first vertex (or last vertex) of the first $2kt$ extracted cliques, and the last vertex (or first vertex) of the last $2kt$ extracted cliques.
   Now we finally have a clique collection with two independent \emph{large} cliques, depending on the orientation of the half-graph.
   So we can start again without reaching the problematic case.
   
   \subparagraph*{Unique remaining $P(I)$.}
   If $|I| \geqslant 2$, by Lemma~\ref{lem:tImpliesAll}, $P(I)$ is completely adjacent to one clique $C_i$ (with $i \in I$).
   Any vertex of $C_i$ has at most $g(k)$ non-neighbors.
   This case is handled by Observation~\ref{obs:boundedCoDegree}.
   So we now suppose that $|I|=1$ (and $I=\{i\}$).
   If $P(I)$ does not $t$-see a large clique $C_j$, we can branch on the at most $2kt$ vertices of that clique.
   Indeed, there is a solution that intersects it, since $k-1$ vertices in $G-R$ can dominate at most $(k-1)(t-1)<kt$ vertices.
   Thus, we can further assume that $P(I)$ $t$-sees all the large cliques.
   This forces that there is at most one large clique, since $|I|=1$.
   There cannot be at least three cliques in $C$.
   Indeed, the way the collection is maintained, that would imply that there are at least two large cliques.
   So, $C=\{C_1=K_{3t}, C_2=K_{2kt}\}$ and $I=\{2\}$.
   This is a \emph{critical} case, which we handle as in the previous paragraph (with two remaining classes).
     \begin{algorithm}[h!]
  \caption{FPT algorithm for \mis on \pk{1,t,t,t}-free graphs}
  \label{alg:p1ttt}
  \begin{algorithmic}[1]
    \Require{$G$ is \pk{1,t,t,t}-free, $k \geqslant \alpha(G)$}
    \Function{Stable}{$G,k$}:
    \If{$k \leqslant 2$} solve in $n^2$ by brute-force \EndIf \Comment{now $k \geqslant 3$}
    \If{$G$ is $2K_{4t}$-free} solve in FPT time \EndIf \Comment{see~\cite{BonnetBCTW18}}
    \Let{$C$}{BuildCliqueCollection$(G,k)$}
    \Let{$R$}{$C~\cup~$ subclasses of size less than $\eta$}   \Comment{small subclasses are set aside}
    \While{$\exists$~subclass $Q$ of size at most $|R|$}
        \Let{$R$}{$R \cup Q$}
    \EndWhile
    \Let{$\mathcal P$}{remaining classes}
    \If{$\mathcal P=\emptyset$} input is a kernel \EndIf  
    \Let{$P(I)$}{remaining class with $I$ maximal for inclusion}
    \If{$|\mathcal P| \geqslant 2$}{
      \If{$\exists P(J) \in \mathcal P$ such that $I \cap J = \emptyset$}\\
      ~~~~~~~~~~~~~~~$(A,B,R)$ with $\mathcal B(A,B)$ $K_{d(k),d(k)}$-free \Comment{Theorem~\ref{thm:withoutBiclique}}
      \EndIf
      \If{$\forall P(J) \in \mathcal P$, $J \subseteq I$}\\
      ~~~~~~~~~~~~~~~$(A,B,R)$ with $\forall v \in B$, $v$ has co-degree $\leqslant d_1(k)$ in $A$  \Comment{Lemma~\ref{lem:completeSum1sided}}\\
      ~~~~~~~~~~~~~~~and $(B_1,B_2,R)$ in $G[B \cup R]$ with $\mathcal B(B_1,B_2)$ $K_{d_2(k),d_2(k)}$-free, \Comment{Theorem~\ref{thm:withoutBiclique}}\\
      ~~~~~~~~~~~~~~~or branching on $2tk$ vertices,\\
      ~~~~~~~~~~~~~~~or critical case, when repeated, yields a $2K_{2kt,2kt}$
      \EndIf
    }
    \EndIf
        \If{$\mathcal P =\{P(I)\}$}{
    a vertex of $C$ has small co-degree,                  \Comment{see Observation~\ref{obs:boundedCoDegree}}\\
    ~~~~~~~~~~~~~~~~~~~~~~~~~~~~~~~~or branching on $2tk$ vertices, \\
    ~~~~~~~~~~~~~~~~~~~~~~~~~~~~~~~~or critical case, when repeated, yields a $2K_{2kt,2kt}$
    }
    \EndIf
      \EndFunction
  \end{algorithmic}
\end{algorithm}
\end{proof}

\section{Randomized FPT algorithms in dart-free and cricket-free graphs}\label{sec:dartAndCricket}
In this section, we consider the case of dart-free and cricket-free graphs, and prove that there is a randomized FPT algorithm for \mis in both graph classes.
To this end, we use the technique of iterative expansion together with a Ramsey extraction.

Let us first define the Ramsey extraction. Notice that for all results presented in this paper, the cliques will be of size $1$ or $2$ only, which greatly simplifies the construction. However, we give the complete version of the definition, as introduced in~\cite{BonnetBCTW18}.

\begin{definition}
Given a graph $G$ and a set of $k-1$ vertex-disjoint cliques of $G$, $\C = \{C_1, \dots, C_{k-1}\}$, each of size $q$, we say that $\C$ is a set of \emph{Ramsey-extracted cliques of size $q$} if the conditions below hold. Let $C_r = \{c_j^r : j \in \{1, \dots, q\}\}$ for every $r \in \{1, \dots, k-1\}$.
\begin{itemize}
	\item For every $j \in [q]$, the set $\{c_j^r : r \in \{1, \dots, k-1\}\}$ is an independent set of $G$ of size $k-1$.
	\item For any $r \neq r' \in \{1, \dots, k-1\}$, one of the four following case can happen:
	\begin{enumerate}[(i)]
		\item for every $j,j' \in [q]$, $c_j^rc_{j'}^{r'} \notin E(G)$ 
		\item for every $j,j' \in [q]$, $c_j^rc_{j'}^{r'} \in E(G)$ iff $j \neq j'$
		\item for every $j,j' \in [q]$, $c_j^rc_{j'}^{r'} \in E(G)$ iff $j < j'$
		\item for every $j,j' \in [q]$, $c_j^rc_{j'}^{r'} \in E(G)$ iff $j > j'$
	\end{enumerate}
	In the case $(i)$ (resp. $(ii)$), we say that the relation between $C_r$ and $C_{r'}$ is \emph{empty} (resp. \emph{full}\footnote{Remark that in this case, the graph induced by $C_r \cup C_{r'}$ is the complement of a perfect matching.}). In case $(iii)$ or $(iv)$, we say the relation is \emph{semi-full}.
\end{itemize}
\end{definition}

We then define the following problem:

\begin{definition}\label{def:faugramsey}
The \faugramsey{$f$} problem takes as input an integer $k$ and a graph $G$ whose vertices are partitioned into non-empty sets $X_1 \cup \dots \cup X_k \cup C_1 \cup \dots \cup C_{k-1}$, where:
\begin{itemize}
	\item  $\{C_1, \dots, C_{k-1}\}$ is a set of $k-1$ Ramsey-extracted cliques of size $f(k)$
	\item any independent set of size $k$ in $G$ is contained in $X_1 \cup \dots \cup X_k$ 
	\item $\forall i \in \{1, \dots, k\}$, $\forall v,w \in X_i$ and $\forall j \in \{1, \dots, k-1\}$, $N(v) \cap C_j=N(w) \cap C_j=\emptyset$ or $N(v) \cap C_j =N(w) \cap C_j=C_j$
	\item the following bipartite graph $\B$ is connected: $V(\B) = B_1 \cup B_2$, $B_1=\{b_1^1, \dots, b_k^1\}$, $B_2=\{b_1^2, \dots, b_{k-1}^2\}$ and $b_j^1b_r^2 \in E(\B)$ iff $X_j$ and $C_r$ are adjacent.
	\end{itemize}
The objective is the following:
\begin{itemize}
	\item if $G$ contains an independent set $S$ such that $S \cap X_i \neq \emptyset$ for all $i \in \{1, \dots, k\}$, then the algorithm must answer ``YES''. In that case the solution is called a \emph{rainbow independent set}.
	\item if $G$ does not contain an independent set of size $k$, then the algorithm must answer ``NO''.
\end{itemize}
\end{definition}

As well as the following result from~\cite{BonnetBCTW18}, which allows to focus on \faugramsey{$f$}.

\begin{theorem}{\cite{BonnetBCTW18}}\label{thm:iterative-ramsey}
Let $\G$ be a hereditary graph class. If \faugramsey{$f$} is $FPT$ in $\G$ for some computable function $f$, then \mis is $FPT$ in $\G$.
\end{theorem}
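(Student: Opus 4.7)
The plan is to proceed by induction on $k$, reducing \mis on $\G$ to a bounded-in-$k$ number of \faugramsey{$f$} instances per recursive call. The base case $k = 1$ is immediate; for the inductive step I assume that \mis on $\G$ with parameter $k' < k$ is already FPT, and I use this hypothesis both as a decision oracle and as a subroutine for extracting independent sets of size $k-1$.

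The first phase constructs the raw material. Greedily, I extract pairwise-disjoint independent sets $I_1, I_2, \ldots$ of size $k-1$ by repeatedly invoking the inductive \mis algorithm on the subgraph obtained after deleting the vertices of the previously extracted sets. If fewer than some threshold $N(k)$ such sets are produced, then the residual subgraph contains no independent set of size $k-1$, so any independent set of size $k$ of $G$ must meet the union of the $I_j$'s plus a Ramsey-bounded residue, and can be located by brute-force FPT enumeration. Otherwise $N(k)$ disjoint independent sets are in hand, and I perform iterated Ramsey extraction: color each ordered pair $(I_r, I_{r'})$ by the isomorphism type of the bipartite graph between them (boundedly many types in $k$), extract a monochromatic sub-collection, and then iterate the argument inside each set over edges and orderings to obtain exactly the four canonical patterns $(i)$--$(iv)$ required by the definition of Ramsey-extracted cliques. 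The outcome is a family $C_1, \ldots, C_{k-1}$ of cliques of size $f(k)$ and, from a further extracted batch of independent sets, the witnesses $X_1, \ldots, X_k$.

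The second phase enforces the remaining structural conditions of \faugramsey{$f$}. Neighborhood-uniformity (for every $v, w \in X_i$ and every $j$, $N(v) \cap C_j = N(w) \cap C_j \in \{\emptyset, C_j\}$) is obtained by partitioning each $X_i$ into at most $2^{k-1}$ classes according to their trace on $C_1 \cup \ldots \cup C_{k-1}$ and keeping a pigeonhole-large subclass; starting from sufficiently large $X_i$'s, by inflating $N(k)$ upstream, makes this step cheap. The constraint that no independent set of size $k$ lies outside $X_1 \cup \ldots \cup X_k$ is enforced by absorbing ambiguous vertices into suitable $X_i$'s or branching on small sets and recursing on smaller-parameter subinstances, handled by induction. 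Finally, if the witness bipartite graph $\B$ is disconnected, the rainbow independent set problem splits naturally along components into strictly smaller-parameter \mis instances, solved by induction; if $\B$ is connected, a single call to the assumed FPT algorithm for \faugramsey{$f$} closes the case. The main obstacle is the calibration: $N(k)$ and the intermediate size thresholds must be chosen as a tower in $k$ so that every successive Ramsey extraction preserves the required lower bounds while keeping the overall runtime of the form $h(k)\,n^{O(1)}$. Hereditariness of $\G$ is essential throughout, guaranteeing that all subinstances produced along the way remain in $\G$ and are accessible to the inductive algorithm.
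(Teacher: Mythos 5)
This theorem is not proved in the paper you were given: it is imported verbatim from \cite{BonnetBCTW18}, as the label and citation indicate, and is used as a black box. There is therefore no in-paper proof to compare against line by line, but the approach in the cited reference is indeed the one you are gesturing at: induction on $k$, iterative expansion (using the induction hypothesis to harvest disjoint independent sets of size $k-1$), color coding to restrict to rainbow solutions, and Ramsey-type extraction to impose the uniform clique and bipartite-pattern structure. So the high-level strategy is correctly identified.

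That said, the proposal glosses over exactly the parts that constitute the technical content of the reduction, and a couple of the glosses point in a direction that would not work. First, you never explain why you can arrange the columns $\{c^r_j : j\}$ to be cliques at all: having many pairwise disjoint $(k-1)$-independent sets and taking a monochromatic sub-collection under ``bipartite type'' coloring does not by itself produce cliques as columns; you need an additional Ramsey step on each column (using that a large independent set inside a column already answers the question), and you must make these steps coherent across all $k-1$ columns simultaneously and also align the four inter-clique patterns $(i)$--$(iv)$. Second, the sets $X_1,\dots,X_k$ are not ``a further extracted batch of independent sets''; they are color classes on the remaining vertices, which is where the color coding of the statement's proof actually lives and why only rainbow solutions need be detected. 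Third, the promise that \emph{every} independent set of size $k$ is contained in $X_1\cup\dots\cup X_k$ cannot be ``enforced by absorbing ambiguous vertices into suitable $X_i$'s'': it is a structural guarantee that must be arranged before handing the instance to the \faugramsey{$f$} oracle, typically by first branching away the case where some size-$k$ independent set meets $\bigcup C_j$. Finally, the treatment of the connectivity of $\B$ as ``if disconnected, the problem splits'' is not correct as stated: disconnectedness of the auxiliary bipartite graph $\B$ does not disconnect $G$, so there is no clean decomposition of the rainbow problem along its components. In the cited reference the connectivity of $\B$ is an invariant obtained during the construction (by discarding or merging parts that would create an isolated block), not a case that is dispatched by recursion. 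As a reconstruction of the proof behind the citation, your sketch names the right ingredients but leaves the load-bearing Ramsey coherence argument, the source and role of the $X_i$'s, the ``no stray size-$k$ independent set'' promise, and the $\B$-connectivity all at the level of slogans.
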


Hence, in the following, by \emph{positive instance} we mean an instance having a rainbow independent set, while \emph{negative instance} denotes instances not containing an independent set of size $k$. If the input graph contains at least one independent set of size $k$, but none of them is rainbow, then we are allowed to output ``No'' (this case is handled by the color coding technique in the proof of the previous result).

For both dart-free and cricket-free graphs, our algorithms will consist of a bounded search tree solving \faugramsey{$f$}. Let us first describe some common ingredients of both algorithms. Throughout the proof, $I = (G, k)$ denote our input of \faugramsey{$f$}. 

 The \emph{set-graph} $R_G$ is the graph on $k$ vertices obtained by contracting all $X_i$'s to single vertices, putting an edge between $X_i$ and $X_j$ if there exists an edge between these two sets.

Our proof consists of a bounded search tree. To this end, we will rely on branchings, which will decrease the parameter $\kappa(G, k)$ defined in the following lexicographic order:
\begin{enumerate}
	\item $k$
	\item $|E(R_G)|$
	\item $\sum_{i=1}^k \alpha(G[X_i])$
\end{enumerate} 

Notice that the number of possible values of $\kappa(G, k)$ is bounded by a function of $k$ only. The idea of our algorithm is to restrict more and more the structure of an instance by the mean of branchings. To this end, our branchings will either strictly decrease our parameter $\kappa(I)$, or create instances with particular properties. This is formalized by the following definition.

\begin{definition}
Given an instance $I$, we say that we can FPT-reduce (or simply \emph{reduce}) to some set of instances $\mathcal{I}$ if there is a computable function $g : \mathbb{N} \rightarrow \mathbb{N}$ such that we can output a set of instances $I_1$, $\dots$, $I_{g(k)}$ such that for every $p \in [g(k)]$, either $\kappa(I_p) < \kappa(I)$, or $I_p \in \mathcal{I}$. This step must run in time polynomial in $\sum_{p=1}^{g(k)} |I_p|$.
\end{definition}

At some places, we will sometimes make use of \emph{random reductions}: the instance $I$ will be transformed into another instance $I'$ with a random rule. We will then ensure that $\kappa(I') < \kappa(I)$, and:
\begin{itemize}
	\item if $I$ is a positive instance (the graph admits a rainbow independent set), then $I'$ is a positive instance with probability at least $h(k)$ for some computable function $h$
	\item if $I$ is a negative instance, then $I'$ is a negative instance.
\end{itemize}
It is easy to see that since the parameter $\kappa$ decreases, the resulting algorithm is a one-sided error Monte Carlo algorithm with a success probability depending on $k$ only. Hence, both algorithms will be \emph{randomized FPT algorithms}.

We begin with a branching whose purpose is to ``clean'' some carefully chosen adjacencies of the set graph.

\begin{definition}
We say that a couple $(i, j) \in [k]^2$ is \emph{clean} if the following conditions are satisfied:
\begin{itemize}
	\item for every $x \in X_i$, $x$ has a non-neighbor in $X_j$ \label{item:nonneighbor}
	\item for every $x \in X_i$, $x$ has a neighbor in $X_j$ \label{item:neighbor}
	\item $G[X_i]$ is connected. \label{item:connected} 
\end{itemize}
\end{definition}

Observe that if a couple $(i, j)$ is clean, then there exists $xy$ in $E(G[X_i])$ and $z \in X_j$ such that $xz \in E(G)$ but $yz \notin E(G)$. We say that a set of couples $\mathcal{P} \subseteq [k]^2$ is \emph{acyclic} if the oriented graph whose vertices are $[k]$ and arcs are $\mathcal{P}$ is acyclic.

\begin{lemma}\label{lemma:clean}
Given an acyclic set of couples $\mathcal{P} \subseteq [k]^2$ corresponding to some edges of the set graph, one can FPT-reduce to instances where every couple $(i, j) \in \mathcal{P}$ is clean.
\end{lemma}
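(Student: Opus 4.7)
The plan is to process the couples of $\mathcal{P}$ in an order that leverages acyclicity so that each new cleaning leaves the already-handled couples clean. I would fix a topological order $v_1 < v_2 < \cdots < v_k$ of the digraph $([k], \mathcal{P})$ and sweep the vertices in reverse order $v_k, v_{k-1}, \ldots, v_1$. At step $p$, I handle all outgoing couples $(v_p, v_a) \in \mathcal{P}$. The virtue of this order is that every couple $(v_q, v_p) \in \mathcal{P}$ has $v_q$ appearing earlier in topological order (so processed later), hence the set $X_{v_p}$ is never touched again after $v_p$'s pass, and whatever cleanliness I install for the couples $(v_p, \cdot)$ will persist.

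To handle a single couple $(v_p, v_a)$, I proceed in three phases. First, for condition (1), I apply the safe reduction that deletes every $x \in X_{v_p}$ with $N(x) \supseteq X_{v_a}$; such an $x$ cannot belong to any rainbow independent set since it blocks every choice in $X_{v_a}$. Second, for condition (2), I let $B := \{x \in X_{v_p} : N(x) \cap X_{v_a} = \emptyset\}$ and branch on whether the rainbow vertex of position $v_p$ lies in $B$. In the first subbranch I set $X_{v_p} := B$: the edge $v_pv_a$ disappears from $R_G$, $|E(R_G)|$ strictly drops, $\kappa$ decreases, and this instance is emitted as a smaller-$\kappa$ output. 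In the second subbranch I set $X_{v_p} := X_{v_p} \setminus B$, thereby installing condition (2) for $(v_p, v_a)$, and continue. Third, once all outgoing couples of $v_p$ have been processed, I enforce condition (3): if $G[X_{v_p}]$ is disconnected, I branch on its connected components (at most $k$, since $\alpha(G[X_{v_p}]) \leq k$), each restriction decreasing $\sum_i \alpha(G[X_i])$ and hence $\kappa$, and these branches are output.

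The crucial invariant that makes the ordering work is that \emph{shrinking} $X_{v_p}$ preserves conditions (1) and (2) for every other outgoing couple $(v_p, v_b)$, because both conditions are monotone under taking subsets of $X_{v_p}$; only condition (3) can be broken by shrinking, which is exactly why it is enforced once, at the very end of $v_p$'s pass. The resulting branching tree has a single \emph{continuing thread} of depth at most $|\mathcal{P}| + k$, along which at most one smaller-$\kappa$ instance is produced per couple (from the condition-(2) branching) and at most $k$ more per vertex (from the final condition-(3) branching), for a total of $O(k^3)$ outputs, which is bounded by a function of $k$ as required. If the continuing thread reaches the end, the surviving instance satisfies conditions (1), (2), and (3) for every couple in $\mathcal{P}$ and thus belongs to the target set $\mathcal{I}$.

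The main obstacle is the apparently circular constraint that each modification of $X_{v_p}$ must not undo the cleanliness of other couples that touch $v_p$. This is resolved by the combination of (a) acyclicity of $\mathcal{P}$, which forbids opposite arcs and enables topological sorting, (b) the reverse topological sweep, which guarantees that $X_{v_p}$ is finalized before any couple $(v_q, v_p)$ is processed, and (c) the subset-monotonicity of conditions (1) and (2), which confines the only real obstruction—loss of connectedness—to a single clean-up at the end of each vertex's pass. Everything else—the safe reduction for condition (1) and the single branchings for (2) and (3)—is routine bookkeeping.
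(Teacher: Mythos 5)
Your proof is correct and follows essentially the same route as the paper's: order the couples along a topological sort of the acyclic digraph, sweep in reverse, and for each couple split $X_i$ into the all-neighbors set (safely removed), the no-neighbors set (a branch with one fewer set-graph edge, hence smaller $\kappa$), and connected components of the remainder (several components strictly decrease $\sum_i \alpha(G[X_i])$, a single component is the clean continuation). Your reorganization---processing all outgoing arcs of a vertex before enforcing connectivity once at the end of that vertex's pass, and appealing explicitly to the monotonicity of conditions (1) and (2) under shrinking $X_i$---is a correct and slightly more careful bookkeeping of the same idea, which in particular handles several arcs sharing a source more transparently than the paper's one-line justification.
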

\begin{proof}
Let $\mathcal{P} = \{p_1, \dots, p_t\}$ considered in a total ordering (recall that $\mathcal{P}$ is acyclic). Iteratively for $\ell$ from $t$ downto $1$, we FPT-reduce to an instance where couples $p_{\ell}$, $\dots$, $p_t$ are clean. Let $p_{\ell} = (i, j)$. If $X_i$ and $X_j$ are already clean, we are done. Otherwise, let $U \subseteq X_i$ be the vertices $x$ of $X_i$ such that $xz \in E(G)$ for every $z \in X_j$, let $S \subseteq X_i$ be the vertices $x$ of $X_i$ such that $xz \notin E(G)$ for every $z \in X_j$, and let $\Omega_1$, $\dots$, $\Omega_q$ be the connected components of $G[X_i \setminus (U \cup S)]$ (we may assume $q < k$, otherwise there is an independent set of size $k$ in $G$). We output the following new instances:
\begin{itemize}
	\item $I_S$, where $X_i$ is replaced by $S$
	\item $I_{\Omega_r}$, where $X_i$ is replaced by $\Omega_r$, for every $r \in [q]$.
\end{itemize}
In every new instance, the graph is an induced subgraph of the former one, hence, if $I$ is negative, all new instances are negative as well.
Moreover, if there is a rainbow independent set in $I$, it must intersect either $S$ or $\Omega_r$ for some $r \in [q]$ (it cannot intersect $U$), hence if $I$ is positive then one of the new instances is positive.
Now, observe that $\kappa(I_S) < \kappa(I)$ since there is no edge between $S$ and $X_j$ (while there was an edge between $X_i$ and $X_j$).
Moreover, if $q > 1$, then $\kappa(I_{\Omega_r}) < \kappa(I)$ for every $r \in [q]$, since the size of a maximum independent set in $G[\omega_r]$ has decreased.
Finally, if $q=1$, in $I_{\Omega_1}$, the couples $p_{\ell}$, $\dots$, $p_t$ are clean, since we only modified $X_i$, and $i$ does not appear in $p_s$ for every $s > \ell$ (recall that we consider couples of $\mathcal{P}$ in an inverted total ordering).
\end{proof}

One simple case is where the set graph is a cycle or a path:

\begin{lemma}[Particular set graph: cycle or path]\label{lem:pathcycle}
If the set graph is a path or a cycle, then \faugramsey{$f$} is polynomial-time solvable for every computable function $f$.
\end{lemma}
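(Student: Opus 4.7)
The plan is to observe that, since $R_G$ is a path or a cycle, each class $X_i$ has potential edges only to the classes adjacent to it in $R_G$, namely at most two of them. Combined with the fact that every independent set of size $k$ in $G$ must lie inside $X_1 \cup \cdots \cup X_k$, this chain-like (or cyclic) structure is exactly the setting where a standard left-to-right dynamic program solves the problem in polynomial time, regardless of the function $f$ (the cliques $C_1, \dots, C_{k-1}$ need not be touched at all).

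First, I would treat the case where $R_G$ is a path, ordered as $X_1, X_2, \ldots, X_k$ along the path. For $i \in [k]$ and $v \in X_i$, I define
\[
T_i(v) \;=\; \text{true} \iff \exists\, v_1 \in X_1, \dots, v_{i-1} \in X_{i-1} \text{ such that } \{v_1,\dots,v_{i-1},v\} \text{ is independent in } G.
\]
The plan is to compute $T_i(v)$ inductively: set $T_1(v) = \text{true}$ for every $v \in X_1$, and use the transition
\[
T_i(v) \;=\; \bigvee_{\substack{u \in X_{i-1} \\ uv \notin E(G)}} T_{i-1}(u).
\]
The key correctness point is that if $j < i-1$, then $X_j$ and $X_i$ are non-adjacent in $R_G$, hence by definition of the set graph there is no edge between $X_j$ and $X_i$ in $G$; thus the global independence condition really does reduce to successive non-adjacency with $X_{i-1}$. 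The instance is a yes-instance iff $T_k(v) = \text{true}$ for some $v \in X_k$, and a witness is recovered by the usual backtracking. This runs in time $O(k \cdot N^2)$ where $N = \max_i |X_i|$.

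If $R_G$ is a cycle $X_1 - X_2 - \cdots - X_k - X_1$, I would enumerate a vertex $v_1 \in X_1$, remove from $X_k$ all neighbors of $v_1$, and then run the path dynamic program above on $\{v_1\}, X_2, \ldots, X_k \setminus N(v_1)$; this forces the extra ``wrap-around'' non-adjacency between $X_1$ and $X_k$. Accepting iff some $v_1 \in X_1$ yields a yes-instance gives an $O(k \cdot N^3)$ algorithm. The only obstacle is really the bookkeeping needed to confirm that non-adjacency in $R_G$ translates to a complete absence of edges in $G$ between the corresponding classes (which is immediate from the definition of $R_G$) and to handle the one extra edge of the cycle, both of which are routine.
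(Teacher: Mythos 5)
Your proposal is correct and follows essentially the same route as the paper: both observe that non-adjacency in the set graph means no edges in $G$ between the corresponding classes, so a left-to-right dynamic program over the path (or path after fixing and deleting the neighborhood of a vertex of one class, for the cycle) suffices. Your formulation via the predicate $T_i(v)$ is a minor reindexing of the paper's recursion $P(i,x_{i-1})$ and is, if anything, slightly cleaner about the boundary case at $i=k$.
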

\begin{proof}
Assume the set graph is a path, with edges between $X_i$ and $X_{i+1}$, $i \in [k-1]$. In that case, observe that there is a solution if and only if the following dynamic programming returns $true$ on input $P(2, x_1)$ for some $x_1 \in X_1$:

$$
P(i,x_{i-1}) = \left\{
    \begin{array}{ll}
    	true & \mbox{ if } i=k \\
        false & \mbox{ if } X_i \subseteq   N(x_{i-1}) \\
        \bigvee_{x_i \in X_i \setminus N(x_{i-1})} P(i+1, x_i) & \mbox{ otherwise.}
    \end{array}
\right.
$$

Clearly this dynamic programming runs in $O(mnk)$ time, where $m$ and $n$ are the number of edges and vertices of the graph induced by $\cup_{i=1}^k X_i$, respectively. Similar ideas can be used when the set graph is a cycle.
\end{proof}

We are now ready to present our algorithms for dart-free and cricket-free graphs.

\subsection{The dart}

\begin{theorem}
There is a randomized FPT algorithm for \mis in dart-free graphs parameterized by the size of the solution.
\end{theorem}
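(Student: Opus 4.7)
The plan is to design a bounded-search-tree algorithm for \faugramsey{$f$} on dart-free graphs, which by Theorem~\ref{thm:iterative-ramsey} then yields an FPT algorithm for \mis. At each node of the search tree we either (i) recognize a base case (the set graph $R_G$ is a path or a cycle, in which case we dispatch via Lemma~\ref{lem:pathcycle}), or (ii) produce a bounded-in-$k$ family of child instances whose parameter $\kappa$ is strictly smaller than that of the current instance, possibly with the help of one-sided random reductions that succeed with probability $h(k)$.

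First, I would apply Lemma~\ref{lemma:clean} to a spanning acyclic orientation of the edges of $R_G$. This normalizes each oriented edge $(i,j)$ so that every vertex of $X_i$ has both a neighbor and a non-neighbor in $X_j$ and $G[X_i]$ is connected. Consequently every remaining branching witness, namely an edge $xy$ inside some $X_i$ together with $z\in X_j$ seeing exactly one of $x,y$, produces an induced $P_3$ that we can further interact with the remaining sets and Ramsey-extracted cliques.

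The core structural step exploits that, in a dart-free graph, no vertex can have a neighborhood containing $P_3 \cup K_1$ as an induced subgraph. Assuming $R_G$ is neither a path nor a cycle, either some $X_i$ has degree at least three in $R_G$, or there exist two cleanly oriented couples $(i,j)$ and $(i,\ell)$ sharing $X_i$. Picking a vertex $v\in X_i$ and two cleaning witnesses (a $P_3$ through an edge of $X_i$ and a neighbor in $X_j$, plus a non-neighbor in $X_\ell$), I expect to identify, inside the closed neighborhood of some carefully chosen vertex, a candidate $P_3 \cup K_1$. Dart-freeness then forces one of a small number of rigid bipartite patterns between the involved sets; each pattern either fixes which $X_i$ the rainbow solution must hit (and we reduce to a smaller instance by removing a closed neighborhood) or forbids an edge of $R_G$, shrinking $|E(R_G)|$ and hence $\kappa$. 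The Ramsey-extracted cliques $C_r$ play a supporting role here: by the four possible relations (empty, full, and two semi-full), they contribute predictable neighborhoods into the $X_i$'s, which I would use to pad the missing vertex of a prospective dart or conversely to certify that no dart can arise.

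The randomized ingredient appears when the above structural analysis only yields that \emph{some} vertex of a bounded-size set belongs to the rainbow solution, without telling us which one: as in \cite{BonnetBCTW18}, we guess uniformly at random and absorb the $1/\mathrm{poly}(k)$ loss into $h(k)$. The main obstacle will be the interaction between the cleaning step and the adjacencies coming from the Ramsey-extracted cliques: because a semi-full relation contributes edges with a skewed pattern, the $P_3 \cup K_1$ we seek inside a neighborhood may be partly hidden in $C_r$ and partly in the $X_j$'s, so that case analysis must carefully distinguish which vertex of the prospective dart lives in which component of the partition $X_1 \cup \dots \cup X_k \cup C_1 \cup \dots \cup C_{k-1}$. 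I expect this technical bookkeeping, rather than any single structural insight, to be the delicate part, and the total number of cases to be pushed down to a finite, $k$-controlled list by first reducing to the situation where $R_G$ is neither a path nor a cycle.
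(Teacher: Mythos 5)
Your proposal sets up the right framework—iterative expansion via Theorem~\ref{thm:iterative-ramsey}, the potential $\kappa$, the cleaning Lemma~\ref{lemma:clean}, and random reductions—and you correctly characterize dart-freeness as ``no vertex whose neighborhood induces $P_3\cup K_1$.'' But what you have is an outline, not a proof, and it has two concrete gaps.

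\emph{The base case is incomplete.} You declare the search terminates when the set graph is a path or a cycle (Lemma~\ref{lem:pathcycle}). The paper in fact proves (Lemma~\ref{lem:structpaw}) that a connected $\{$claw, paw$\}$-free graph is a path, a cycle, \emph{or the complement of a matching}, and the last possibility is nontrivial: the set graph does not shrink further under your reductions there. Handling it requires a dedicated argument (Lemma~\ref{lem:complementmatching}), in which one exploits that some $X_i$ sees two Ramsey vertices $c_j,c_{j'}$, that consequently every remaining $X_r$ must also be adjacent to one of them, and that the neighborhoods of $c_j$ and $c_{j'}$ among the surviving vertices are $P_3$-free, yielding a bounded partition into cliques and a reduction to a bipartite set graph. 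Your plan does not account for this branch at all.

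\emph{The structural reduction is not carried out, and part of your plan is misdirected.} The heart of the paper's proof is showing one can FPT-reduce to a claw-free and paw-free set graph, via two $P_3$-cleaning lemmas (\ref{lem:P3}, \ref{lem:P3bis}) followed by a long paw-removal case analysis mixing branching, a module decomposition, and several random reductions with success probability $1/2^{O(k)}$. Your sketch only gestures at this (``I expect to identify \ldots'') without naming the target class or giving the case split, so there is no way to check it closes. Separately, the role you assign to semi-full relations between Ramsey cliques cannot materialize: for the dart the paper takes $f\equiv 1$, so each $C_j$ is a single vertex $c_j$ and the empty/full/semi-full trichotomy collapses to edge vs.\ non-edge. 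The cliques therefore appear in the darts only as single vertices $c_j$, and the ``partly hidden in $C_r$'' bookkeeping you anticipate does not arise; conversely, by not fixing $f$ you have not actually reduced \faugramsey{$f$} to something analyzable.
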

\begin{proof}
As said previously, we solve \faugramsey{$f$} in dart-free graphs. Here, we define $f(x) = 1$ for every $x \in \mathbb{N}$. Hence, for every $j \in [k-1]$, $C_j = \{c_j\}$. The strategy is to use FPT branching and random reductions in order to simplify the structure of the set graph. More precisely, the goal is to reduce to the case where the set graph is $\{$paw, claw$\}$-free\footnote{Recall that the paw is the graph obtained by adding one edge to the claw.}.
  
\begin{lemma}[Removal of some $P_3$'s: part I]\label{lem:P3}
One can FPT-reduce to instances where for every triple $(X_1, X_2, X_3)$ of the set graph inducing a $P_3$, no vertex of $\{c_1, \dots, c_{k-1}\}$ is adjacent to $X_1$, $X_2$ and $X_3$.
\end{lemma}
\begin{proof}
Suppose that this is the case, and let $c_j$ be such a vertex. Then apply the branching of Lemma~\ref{lemma:clean} with couples $\{(1, 2), (2, 3)\}$. We end up with an instance where there exists an edge $xy$ induced by $X_1$ and a vertex $z \in X_2$ such that $xz \in E(G)$ but $yz \notin E(G)$, and a vertex $u \in X_3$ such that $zu \notin E(G)$. But observe that $\{c_j, x, y, z, u\}$ induces a dart, which is impossible.
\end{proof}

\begin{lemma}[Removal of some $P_3$'s: part II]\label{lem:P3bis}
One can FPT-reduce to instances where for every triple $(X_1, X_2, X_3)$ of the set graph inducing a $P_3$, no vertex of $\{c_1, \dots, c_{k-1}\}$ is adjacent to $X_2$ but not to $X_1$ nor $X_3$.
\end{lemma}
\begin{proof}
Suppose that this is the case: let $c_j$ be such a vertex. Apply Lemma~\ref{lemma:clean} with couples $\{(2, 1), (2, 3)\}$. We end up with an instance where every vertex $x \in X_2$ has a neighbor in $X_1$ and $X_3$. But observe that for every edge $xy$ induced by $X_2$, $x$ and $y$ must be a module with respect to $X_1$ and $X_3$. Indeed, if there is a vertex $u \in X_1$ both adjacent to $x$ and $y$, then for every neighbor $z \in X_3$ of $x$, $z$ must also be a neighbor of $y$, for otherwise $\{c_j, x, y, u, z\}$ would induce a dart. But then it means that $z \in X_3$ is adjacent to both $x$ and $y$, and for the same reasons, the neighborhood of $x$ and $y$ in $X_1$ must be the same. 
Hence, we can partition $X_2$ into subsets $\{M_i\}_{i=1}^{p}$, each of which being both a maximal module with respect to $X_1$ and $X_3$. 
For every $i \in [p]$, denote by $N_i$ the set of common neighbors of $M_i$ in $X_1$. Note that the previous paragraph ensures that if there is an edge between $M_i$ and $M_j$, then $N_i$ and $N_j$ must be disjoint. 
Now let us consider any vertex $x \in X_1$. If $x$ is adjacent to both a vertex of $M_i$ and $M_j$ (with $i \neq j$), then, since $x$ is complete to these sets, it means that there is no edge between $M_i$ and $M_j$ since $N_i \cap N_j \neq \emptyset$. In particular, $x$ is only adjacent to at most $k-1$ modules $M_i$ (since otherwise we would be able to find in polynomial time an independent set of size $k$ in $X_2$).

We now proceed to a random reduction: for every $i \in [p]$, we delete all vertices of $M_i$ with probability $1/2$, and after that, we remove every vertex $x \in X_1$ if it has a remaining neighbor in $X_2$. Assume the instance is positive: let $s_1 \in X_1$ and $s_2 \in X_2$ be the elements of a rainbow independent set. Observe that the probability that $s_2$ has not been removed is at least $1/2$, while the probability that $s_1$ has not been removed is at least $1/2^{k-1}$ (since $s_1$ is adjacent to at most $k-1$ modules, and $s_1$ is kept if all its possible neighbors have been deleted). Hence, after this removal step, the obtained instance is positive with probability at least $1/2^k$ (and if the instance was negative, the reduced one shall be negative as well, since the reduced graph is an induced subgraph of the former). Moreover, in the reduced instance there is no edge between $X_1$ and $X_2$, hence the parameter $\kappa$ decreases.
\end{proof}

\begin{lemma}[No claw in the set graph]
One can FPT-reduce to instances where the set graph is claw-free.
\end{lemma}
\begin{proof}
We apply Lemmas~\ref{lem:P3} and \ref{lem:P3bis}, and claim that we end up with instances where the set graph is claw-free. Suppose it is not: let $X_1$ be the center of the claw, and let $c_j \in \{c_1, \dots, c_{k-1}\}$ adjacent to $X_1$. There are two cases: $c_j$ is also adjacent to at least two neighbors of $X_1$, but this is impossible by Lemma~\ref{lem:P3}. The other case is also impossible by Lemma~\ref{lem:P3bis}.
\end{proof}


\begin{lemma}[Removal of paws in the set graph]
One can FPT-reduce to instances where the set graph is paw-free.
\end{lemma}
\begin{proof}
Let $X_1,X_2,X_3,X_4$ be a paw of the set graph such that the non-edges are between $X_4$ and $X_2\cup X_3$. Let $c_j \in \{c_1, \dots, c_{k-1}\}$ be a neighbor of $X_1$ (recall that each $X_i$ is a module \wrt $\{c_1, \dots, c_{k-1}\}$).
We first apply Lemmas~\ref{lem:P3} and \ref{lem:P3bis}, and Lemma~\ref{lemma:clean} for the couples $\{(1, 4), (1, 2)\}$. Hereafter, there are only two cases: either $c_j$ is also adjacent to $X_2$, $X_3$ but not to $X_4$, or is adjacent to $X_4$ but not to $X_2$, $X_3$.\\

\noindent\textit{Case 1:}  $c_j$ is adjacent to $X_2$, $X_3$ but not to $X_4$. Let $u$ be an arbitrary vertex of $X_1$, and assume the instance is positive: let $s_a \in X_a$ be the elements of a rainbow independent set for $a \in \{1, 2, 3\}$. We have the following:
\begin{itemize}
	\item $u$ cannot be adjacent to both $s_2$ and $s_3$ since, in this case, $\{c_j, s_2, s_3, u\}$ together with a neighbor of $u$ in $X_4$ (remember that we applied Lemma~\ref{lemma:clean} for the couple $(1, 4)$, hence such a neighbor must exist) induce a dart.
	\item $u$ cannot be adjacent to $s_1$ and to exactly one vertex among $\{s_2, s_3\}$, since, in this case, $\{u, s_1, s_2, s_3, c_j\}$ induces a dart.
\end{itemize}
 We now create four branches which will correspond to the remaining different possibilities of adjacencies between $u$ and $\{s_1, s_2, s_3\}$: the first three branches (first item below) represent the case where $u$ is adjacent to only one vertex among $\{s_1, s_2, s_3\}$, while the second item represents the case where $u$ is adjacent to none of $\{s_1, s_2, s_3\}$:
 \begin{itemize}
 	\item for every $a \in \{1, 2, 3\}$, create a branch where:
 	\begin{itemize}
 		\item $X_a$ is replaced by $X_a \cap N[u]$
 		\item $X_b$ is replaced by $X_b \setminus N(u)$, for $b \neq a$.
 	\end{itemize}
 	And apply Lemma~\ref{lemma:clean} with a couple $(b, c)$ with $b, c \neq a$ (chosen arbitrarily). 	
	\item in the fourth branch, replace $X_a$ by $X_a \setminus N(u)$ for every $a \in \{1, 2, 3\}$, and apply Lemma~\ref{lemma:clean} with the couple $(2, 3)$.
 \end{itemize}
 Now, observe that:
 \begin{itemize}
 	\item if $u$ is adjacent to only $s_a$, then $\{s_1, s_2, s_3\}$ are still in the reduced graph of the corresponding first item above, and, moreover, there is no edge between $X_b$ and $X_c$, since, because of Lemma~\ref{lemma:clean}, if there was an edge between $X_b$ and $X_c$, we would be able to find an edge $xy$ in $X_b$ and a vertex $z$ in $X_v$ such that $xz$ is an edge but $yz$ is not, and $\{x, y, z, u, c_j\}$ would induce a dart (recall that $u$ is not adjacent to $x$, $y$ and $z$).
 	\item if $u$ is adjacent to none of $\{s_1, s_2, s_3\}$, then these vertices are still in the reduced graph of the second item above, and, moreover, there is no edge between $X_2$ and $X_3$, using similar arguments as previously.
 \end{itemize}
 
\noindent\textit{Case 2:} $c_j$ is adjacent to $X_4$ but not to $X_2$, $X_3$. In this case, we claim that there exists $j' \neq j$ such that $c_{j'}$ is adjacent to $X_1$ or $X_4$ (or both): if this is not the case and the instance is positive, there would be an independent set of size $k$ which intersects $\{c_1, \dots, c_{k-1}\}$ (by considering $\{c_1, \dots, c_{k-1}\} \setminus \{c_j\}$ together with the vertices of the rainbow independent set intersecting $X_1$ and $X_4$), which is impossible in an instance of \faugramsey{$f$}.
However, we may assume that $c_{j'}$ is not adjacent to $X_1$: since we applied Lemma~\ref{lemma:clean} on the couple $(1, 2)$, there exists an edge $xy$ in $X_1$ and a vertex $z$ in $X_2$ such that $xz$ is an edge and $yz$ is not, but in this case $\{x, y, z, c_j, c_{j'}\}$ would induce a dart. Hence $X_1$ must induce an independent set: if it is of size at least $k$ we are done, otherwise we may branch to decide which vertex should be in the solution.
Hence, $c_{j'}$ is adjacent to $X_4$ and not to $X_1$.

\begin{claim} 
We can FPT-reduce to the case where $X_1$ induces a clique.
\end{claim}
\begin{claimproof}
First observe that the neighborhood in $X_1$ of any vertex of $X_4$ must be a clique, for otherwise there would be a dart using $c_j$ and $c_{j'}$. We choose an arbitrary vertex $u \in X_4$. Assume the instance is positive: let $s_1 \in X_1$ and $s_4 \in X_4$ be the elements of a rainbow independent set. Notice that there are three possibilities: either $u = s_4$, or $us_4 \in E(G)$ or $us_4 \notin E(G)$. We perform a branching corresponding to these possibilities: in the first branch $u$ is taken in the solution, and $\kappa$ decreases. In the second branch we remove $N(u)$ from $X_4$, and thus the size of a maximum independent set in $G[X_4]$ decreases, and the same holds for $\kappa$. In the third branch we remove from $X_4$ the non-neighbors of $u$ and perform another branching in order to guess whether $u$ is a neighbor of $s_1$ or not. In the first branch we thus replace $X_1$ by $X_1 \cap N(u)$ which is a clique, as desired. In the second branch we replace $X_1$ by $X_1 \setminus N(u)$, but in this case $X_1$ is now $P_3$-free: indeed, if $a$, $b$, $c$ induces a $P_3$, then $\{a, b, c, u, c_{j}\}$ induces a dart. Hence $X_1$ induces a disjoint union of cliques: if there are at least $k$ of them we are done, and otherwise we branch once again and replace $X_1$ by a clique.
\end{claimproof}

Let us now summarize the situation: $c_j$ is adjacent to $X_1$ and $X_4$ but not to $X_2$ nor $X_3$, and $X_1$ induces a clique.
There must be two vertices $c_{j_2}, c_{j_3} \in \{c_1, \dots, c_{k-1}\}$ ($j_2 \neq j_3$) dominating $X_2 \cup X_3$ (otherwise, we would be able to find an independent set of size $k$ intersecting $\{c_1, \dots, c_{k-1}\}$, as previously).
If one of them, w.l.o.g. $c_{j_2}$, is also adjacent to $X_1$, then we can apply the same argument with $c_{j_2}$ instead of $c_j$. We are then in Case 1 and we can reduce. So we can assume that $c_{j_2}$ and $c_{j_3}$ are not adjacent to $X_1$. 
If both $c_{j_2}$ and $c_{j_3}$ dominate $X_2$, then we claim that we can branch to decrease the number of edges in the set graph. To do so, we claim that if a vertex $x$ of $X_1$ is adjacent to a connected component of $X_2$, it is adjacent to the whole component. Indeed otherwise, there exists an edge $yz$ of $X_2$ such that $xz$ is an edge and not $xy$. And then we obtain a dart with universal vertex $z$ using $c_{j_1},y,c_{j_2}$ and $x$. So we can branch over the connected components of $X_2$ to guess in which connected component is selected the vertex of $X_2$ and then delete the neighbors of this component in $X_1$. In the resulting graph, the edge set of the set graph decreased.

So from now on we assume that $c_{j_2}$ is adjacent to $X_2$ and not to $X_1,X_3$ and $c_{j_3}$ is adjacent to $X_3$ and not to $X_1,X_2$ (and we do not assume anything about the adjacencies of $c_{j_2}$ and $c_{j_3}$ to $X_4$).
We say that an edge between $X_2$ and $X_3$ is \emph{strong} if it is contained in a triangle using a vertex of $X_1$. If $ab$ is a strong edge, we claim that $\{a, b\}$ is a module \wrt $X_1$: otherwise, let $x \in X_1$ such that $abx$ is a triangle, let $x' \in X_1$ such that $ax'$ is an edge but $bx'$ is not: $\{a, b, x, x', c_{j_2}\}$ induces a dart (remember that $xx'$ is an edge, since $X_1$ induces a clique). 
Hence, every connected component $C$ of the subgraph consisting of strong edges is a module \wrt $X_1$. In particular, every such connected component $C$ is in the neighborhood of some vertex $x \in X_1$ which is adjacent to $c_j$ (while $c_j$ is not adjacent to any vertex of $C$). 
Hence, the (connected) subgraph induced by $C$ must be $P_3$-free (otherwise, such a $P_3$ together with $x$ and $c_j$ would induce a dart), it is thus a clique. Both $X_2$ (resp. $X_3$) can therefore be partitioned into $X_2^1$, $\dots$, $X_2^p$ (resp. $X_3^1$, $\dots$, $X_3^p$) such that every vertex of $X_2^a$ is connected with strong edges to every vertex of $X_3^a$, for every $a \in \{1, \dots, p-1\}$, and every vertex of $X_2^p$ (resp. $X_3^p$) is not incident to any strong edge.
We now perform a random reduction : pick at random a non trivial subset $A$ of $[p-1]$, remove from $X_2$ all vertices from $\cup_{a \in A} X_2^a$ and remove from $X_3$ all vertices from $\cup_{a \in [p-1] \setminus A} X_3^a$.
 Assume the instance is positive: let $(s_2, s_3) \in X_2 \times X_3$ be the elements of a rainbow independent set. Since strong edges are edges of $G$, there exist $a_2, a_3 \in [p]$ such that $(s_2, s_3) \in X_2^{a_2} \times X_3^{a_3}$ with $a_2 \neq a_3$ or $a_2=a_3=p$. Hence, the probability that $s_2$ and $s_3$ have not been deleted is at least $1/2$ (and if the instance is negative, the reduced one is negative as well, since the reduced graph is an induced subgraph of the former). 
 Moreover, in the reduced instance, there is no strong edge between $X_2$ and $X_3$ (but, there still might be edges). However, if two vertices $x$, $x'$ in $X_1$ have a common neighbor in $X_2$ (resp. $X_3$), then they must be twins in $X_3$ (resp $X_2$), otherwise we would be able to form a dart together with $c_j$. This means that all edges between $X_1$ and $X_2$ can actually be partitioned into (not necessarily induced) complete bipartite graphs. We can thus perform a random reduction similar to the previous one, except that in the reduced instance, there will not be any edge between $X_1$ and $X_2$. Hence, the parameter $\kappa$ decreases, which concludes the proof.
\end{proof}

From now on, we may assume that the set graph is $\{$claw, paw$\}$-free. In that case, we prove that it has a simple structure: it is either a path, a cycle or the complement of a matching. The first two cases will then be handled in Lemma~\ref{lem:complementmatching} and the last one in Lemma~\ref{lem:pathcycle}.

\begin{lemma}\label{lem:structpaw}
If a connected graph $G$ is $\{$claw, paw$\}$-free, it is either a path, a cycle or the complement of a (not necessarily perfect) matching.
\end{lemma}
\begin{proof}
By contradiction. Let $P$ be a induced path or cycle of maximal size of $G$ (if a path of the same length exists, we choose the path).
First note that if $P \in \{ P_2, C_3 \}$ then $G$ is a clique and the conclusion holds. 
So we can assume that $P$ is $P_k$ with $k \ge 3$ or $C_{k'}$ with $k' \ge 4$.
We prove that all vertices of $V \setminus P$ are either complete or anti-complete to $P$. Indeed let $x$ be a vertex adjacent to a vertex of $P$. If $x$ is adjacent to an internal vertex $y$ of $P$, then since $G$ is claw-free, it must also be adjacent to a neighbor of $y$. If it is not complete to $P$, let $y_1,y_2,y_3$ be consecutive vertices of $P$ such that $x$ is adjacent to both $y_1,y_2$ but not $y_3$. Then $x,y_1,y_2,y_3$ is a induced paw since $P \ne C_3$. If $x$ is not adjacent to the interior of $P$, we can either increase the length of the path or create a cycle with an additional vertex, a contradiction.
Since $G$ is connected, all the vertices that are not adjacent to $P$ have to be adjacent to a vertex dominating $P$, which creates a paw. So such vertices do not exist.

Now remark that if the set of dominating vertices is not empty, then $P$ is either a $P_3$ or a $C_4$ since otherwise there is a paw. Also note that every vertex of $V \setminus P$ has co-degree at most one in $V \setminus P$ since otherwise there is a paw or a claw using a vertex of $P$. So $V \setminus P$ induces the complement of a (not necessarily perfect) matching, and is complete with either $P_3$ or $C_4$, which are themselves the complement of a matching. 
\end{proof}

Because of the previous result and Lemma~\ref{lem:pathcycle}, the only remaining case is where the set graph is the complement of a (non necessarily perfect) matching.

\begin{lemma}[Particular set graph: complement of matching]\label{lem:complementmatching}
If the set graph is the complement of a (not necessarily perfect) matching, then we can output $O(h(k)|V(G)|^2)$ instances in which the set graph is bipartite, for some computable function $h$.
\end{lemma}
\begin{proof}
Assume w.l.o.g. that $X_1$ is a vertex of the set graph of degree at least two in $\{c_1, \dots, c_{k-1}\}$. Such a vertex exists since the bipartite graph between the sets $X_1$ and $\{c_1, \dots, c_{k-1}\}$ is connected. Let $c_j$ and $c_{j'}$ be two neighbors of $X_1$.
We apply Lemma~\ref{lemma:clean} to the set of couples $\{(i, j): i < j\}$, and now claim that the instances we obtain are FPT (without applying any new FPT-reduction).
First observe that for every $r \neq 1$ such that $X_r$ is adjacent to $X_1$, $X_1$ is adjacent to at least one of $\{c_j, c_{j'}\}$. Indeed, otherwise, pick an edge $xy$ in $X_1$ together with a vertex $z$ in $X_r$ such that $xz$ is an edge and $yz$ is not (this configuration is possible since all couples $(1, r)$ are clean for all $r \neq 1$): $\{x, y, z, c_j, c_{j'}\}$ induces a dart.
Hence, $\{c_j, c_{j'}\}$ is adjacent to all the vertices of the set graph but at most one (since $X_1$ has at most one non-neighbor in the set graph).

Assume w.l.o.g. that $X_k$ is the unique (if it exists) non-neighbor of $X_1$. 
We now try all possible choices for $s_1 \in X_1$ and $s_k \in X_k$ (if $X_k$ does not exist, we just try all choices for $s_1$), and for every $\ell \neq 1, k$, we remove from $X_\ell$ all neighbors of $s_1$ and $s_k$.
If some of these sets become empty, then we can answer ``No''. 
Otherwise, observe that every remaining vertex is in the neighborhood of either $c_j$ or $c_{j'}$ and in the non-neighborhood of $s_1$, and $s_1$ is a neighbor of $c_j$ and $c_{j'}$. Hence, the neighborhood of $c_j$ (resp. $c_{j'}$) in the remaining vertices must be $P_3$-free. Thus, all remaining vertices can be partitioned into two disjoint union of cliques $A_1$, $\dots$, $A_p$ and $B_1$, $\dots$, $B_q$. If $p \ge k-2$ or $q \ge k-2$, then we are done ($p \ge k-1$ or $q \ge k-1$ if $X_k$ does not exist). Otherwise, we branch in order to guess which of these cliques contain a solution of a rainbow independent set. The remaining cliques now play the role of the sets $X_{\ell}$, and we thus end up with $O(4^kn^2)$ instances of \faugramsey{$f$} whose set graph is a bipartite graph, as desired.
\end{proof}
Observe that all FPT branchings of our algorithm either end on the case where the set graph is a path or a cycle, in which case we conclude by Lemma~\ref{lem:pathcycle}, or where the set graph is the complement of a matching, in which case we have $O(h(k)n^2)$ instances where the set graph is bipartite. But in the latter case, since every new branching only removes edges of the set graph, it must remain bipartite, which ensures that the branching of Lemma~\ref{lem:complementmatching} will be performed only once, except if the set graph is both the complement of a matching and bipartite, but in this case it must be of size at most $4$, and we conclude by an exhaustive guess instead of applying the above lemma.
\end{proof}

\subsection{The cricket}

\begin{theorem}
There is a randomized FPT algorithm for \mis in cricket-free graphs parameterized by the size of the solution.
\end{theorem}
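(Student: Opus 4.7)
The plan is to mirror the dart-free algorithm closely. We solve \faugramsey{$f$} with $f \equiv 1$, so each $C_j = \{c_j\}$ is a single vertex, and use FPT branchings and random reductions to simplify the structure of the set graph, eventually reducing to instances handled by Lemmas~\ref{lem:pathcycle} and~\ref{lem:complementmatching}.

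The central observation exploiting cricket-freeness is the following: if $c_j$ is completely adjacent to two sets $X_i$ and $X_\ell$ whose vertices in the set graph are non-adjacent (so no edge of $G$ links $X_i$ to $X_\ell$), then one of $G[X_i]$, $G[X_\ell]$ must be a clique. Otherwise, taking an edge $a_1 a_2 \in E(G[X_\ell])$ and a non-edge $\{b_1, b_2\} \subseteq X_i$, the set $\{c_j, a_1, a_2, b_1, b_2\}$ induces a cricket: $c_j$ is adjacent to all four, $a_1 a_2$ is an edge, $b_1 b_2$ is a non-edge, and the absence of $X_i X_\ell$-edges guarantees $b_t \not\sim a_s$ for all $s, t$. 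This is the cricket-free analogue of the local restrictions underlying Lemmas~\ref{lem:P3} and~\ref{lem:P3bis}.

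Building on this, I would establish analogues of Lemmas~\ref{lem:P3} and~\ref{lem:P3bis}: for any induced $P_3$ $(X_1, X_2, X_3)$ in the set graph, we reduce to instances where no $c_j$ is adjacent to $X_1$ and $X_3$ simultaneously. After applying Lemma~\ref{lemma:clean} to the couple $(1, 2)$, $G[X_1]$ contains an edge; since $X_1$ and $X_3$ are non-adjacent in the set graph, the central observation forces $G[X_3]$ to be a clique, at which point we branch exhaustively on the single vertex of $X_3 \cap S$, strictly decreasing $\kappa$. A symmetric argument handles the remaining $c_j$-adjacency patterns on the $P_3$. Extending these arguments to claws and paws of the set graph as in the dart-free proof, I would reach instances whose set graph is $\{$claw, paw$\}$-free, hence by Lemma~\ref{lem:structpaw} a path, cycle, or complement of a matching, which are handled by Lemmas~\ref{lem:pathcycle} and~\ref{lem:complementmatching}.

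The main obstacle is the paw-elimination step: because the cricket is symmetric under swapping its two pendants, several non-isomorphic placements of a cricket can arise from one paw in the set graph, depending on whether the edge $a_1 a_2$ of the cricket lies within one set $X_i$ or uses $c_j$ together with a neighbor in some $X_i$. Each placement requires its own random reduction, typically partitioning one $X_i$ into maximal modules with respect to another set and randomly choosing a subset of these modules to delete. Cricket-freeness on $N(c_j)$ enforces a $(2K_1 + K_2)$-free structure that bounds the number of modules any single vertex can be adjacent to, yielding an $\Omega(1/2^{O(k)})$ success probability for each random reduction and thus an overall one-sided error randomized FPT algorithm.
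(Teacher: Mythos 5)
Your ``central observation'' is correct: with $f \equiv 1$, if $c_j$ sees two sets $X_i, X_\ell$ that are non-adjacent in the set graph, then an edge in one together with a non-edge in the other and $c_j$ induce a cricket. But the conclusion you extract --- one of $G[X_i], G[X_\ell]$ is a clique --- does not yield a valid reduction. Your proposed step, ``branch exhaustively on the single vertex of $X_3 \cap S$'', spawns $|X_3|$ branches rather than $g(k)$ branches, which is only admissible a bounded number of times along a root-to-leaf path of the search tree; the paper does this just once, in Lemma~\ref{lem:complementmatching}, under careful control, whereas your plan would reach this step every time the offending pattern appears. Moreover, you have no substitute for Lemma~\ref{lem:P3bis}: when $c_j$ is adjacent only to the centre of a $P_3$ of the set graph, your observation is silent, and the dart argument does not transfer because a dart is witnessed by a $P_3$ and an isolated vertex in $N(c_j)$, not a $K_2$ and two isolated vertices.

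More fundamentally, your intended target --- a $\{$claw, paw$\}$-free set graph --- is not the right one for crickets. The paper never establishes paw-freeness of the set graph in the cricket case; cricket-freeness does not appear to exclude paws in $R$ at all, which is precisely why the paper takes a different route. It works with $f \equiv 2$, not $f \equiv 1$; it first reduces $G$ itself to be $K_{2,3}$-free and $K_{1,5}$-free (Lemmas~\ref{lem:cr_k23} and~\ref{lem:cr_k15}, relying on Lemma~\ref{lem:dominatingset} and the almost-disconnected machinery of Section~\ref{sec:almostUnionOrSum}); then, via a type-$1$/type-$2$ classification of the $X_i$'s and the degree-bound random reduction of Lemma~\ref{lem:cr_boundeddegree}, it shows that the set graph has bounded maximum degree, is claw-free and bull-free, hence has bounded pathwidth, and finishes with dynamic programming. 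None of this appears in your sketch, and the paw-elimination you yourself flag as ``the main obstacle'' is never overcome.
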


\begin{proof}
Let us first prove that we can assume some additional information on the graph $G$.

\begin{lemma}\label{lem:dominatingset}
If $G$ has a dominating set of constant size, then we can solve \mis in FPT time.
\end{lemma}
\begin{proof}
Let $D \subseteq V(G)$ be such a dominating set (it can be found in $O(|E(G)| \cdot |V(G)|^{|D|})$ time). 
Let us partition the vertices of $V(G) \setminus D$ into at most $2^{|D|}$ sets $S_1$, $\dots$, $S_p$, depending on their exact neighborhood in $D$. Since $G$ is cricket-free, $G[S_i]$ is $O_4$-free, where $O_4$ is the graph with four vertices and one edge.
We branch in order to guess whether the solution has an intersection with $S_i$ of size $0$, $1$ or at least $2$, for every $i \in [p]$. Observe that there are at most $3^{2^{|D|}}$ such choices. Consider one of these choices.
We delete sets with intersection of size $0$, and guess one vertex in each set with intersection $1$ and decrease $k$ by one. For every set $S_i$ with intersection at least $2$, we guess two vertices and remove their neighborhood. Since $G[S_i]$ is $O_4$-free graphs, the remaining vertices must form an independent set. If one of them is of size at least $k$ we are done. Otherwise, the remaining vertices are of size at most $k \cdot 2^{|D|}$, and we can conclude by brute-force.
\end{proof}

\begin{lemma}\label{lem:cr_k23}
We can reduce to the case where $G$ is $K_{2,3}$-free.
\end{lemma}
\begin{proof}
We prove it by induction on $k$.
Let $C$ be a $K_{2,3}$. If $|N(C)|$ is bounded by a function of $k$, we branch to guess which vertex in $N[C]$ has to be selected in the solution and we decrease the invariant (indeed, for any set of vertices $Q$, any maximal independent set must intersect either $Q$ or $N(Q)$). So from now on, we assume that $N[C]$ is arbitrarily large.
Let us denote by $X$ the independent set of size $2$ in $C$, and by $Y$ the independent set of size $3$. We first consider the set $Z$ of vertices $z$ only adjacent to $X$ but not to $Y$ or to $Y$ but not to $X$. We claim that there are at most $5k$ such vertices: for every $x \in C$, we may assume $|N(x) \cap Z| \ge k$, for otherwise either $N(x) \cap Z$ induces an independent set (in which case we are done), or it contains an edge, in which case this edge together with $x$ and two neighbors of $x$ in $C$ induce a cricket. 

Let $A$ be the set of vertices adjacent to both sides of $C$. We claim that every $a \in A$ is adjacent to at least two vertices of $Y$. Indeed, otherwise $\{a\}$ union $Y$ union a vertex of $X$ adjacent to $a$ induces a cricket. In particular, every vertex adjacent to $C$ is adjacent to an edge and a non-edge of $C$.
Let $a \in A$. All but at most one neighbor of $a$ is adjacent to $C$. Indeed, otherwise $a$ has two neighbors $u,v$ non-adjacent to $C$. If $uv$ is an edge, $u,v,a$ and a non-edge of $N(a) \cap C$ induce a cricket. If $uv$ is a non-edge, then $a,u,v$ and an edge of $N(a) \cap C$ induce a cricket. 

Let $U:=C \cup A$ and $W:=V \setminus (C \cup A)$. Every vertex of $U$ is adjacent to at most $5k+1$ vertices of $W$, namely vertices of $Z$ plus at most one vertex. So $U$ is one-sided almost disconnected to $W$. If $W$ is of size bounded by a function of $k$, notice that $A$ is dominated by two vertices of $Y$, in which case we conclude by Lemma~\ref{lem:dominatingset}. So by induction and by Lemma~\ref{lem:almostDisjoint}, the problem can be decided in FPT time.
\end{proof}

\begin{lemma}\label{lem:cr_k15}
We can reduce to the case where $G$ is $K_{1,5}$-free.
\end{lemma}
\begin{proof}
By Lemma~\ref{lem:cr_k23}, we may assume that $G$ is $K_{2,3}$-free.
Let $C$ be an induced $K_{1,5}$ and let us denote by $v$ the center of the star and $v_1,\ldots,v_5$ its leaves. 

First note that there are at most $k$ vertices $A$ adjacent to $v$ but not adjacent to $\{ v_1,\ldots,v_5\}$. Indeed, otherwise $A$ would contain an edge, and this edge, $v$ and $v_1,v_2$ would induce a cricket.
If a vertex $x$ is adjacent to $v$ and at least one $v_i$, then at most one $v_j$ is not in $N(x)$. Indeed otherwise $x,v,v_i$ and two non-neighbors of $x$ in $\{ v_1,\ldots,v_5 \}$ would form a cricket. Note moreover that no vertex can be adjacent to three leaves of the star but not to $v$ since otherwise there would be a $K_{2,3}$ in $G$, a contradiction with Lemma~\ref{lem:cr_k23}. So $V \setminus C$ can be partitioned into the following sets: $X_i, Y_i, Y_{i,j}, X, Y$ where $1 \le i,j \le 5$. The set $X_i$ denotes the set of vertices whose neighborhood in $C$ is $v$ union all leaves but $v_i$. The set $Y_i$ (resp. $Y_{i,j}$ is the set of vertices whose neighborhood in $C$ is $v_i$ (resp. $\{v_i,v_j \}$). And $X$ (resp. $Y$) is the set of vertices complete (resp. anti-complete to) $C$.

The set $X \cup X_i$ is anti-complete to the sets $Y_{\ell,j}$ for all $i,j$, $Y$ and $Y_j$ for $j \ne i$. Indeed otherwise we create a triangle using this edge plus a common neighbor on $C$ and complete the cricket with two neighbors of the vertex of $X \cup X_i$ that are not neighbors of the vertex of $Y_j$ (resp. $Y_{\ell,j}$). Moreover, every vertex $x \in X \cup X_i$ has at most one neighbor in $Y_i \cup Y$. Indeed, if there is a non-edge in $N(x) \cap (Y_i \cup Y)$, we complete the cricket with the edge $v,v_j$ with $j \ne i$ adjacent to $x$. If there is an edge in $N(x) \cap (Y_i \cup Y)$,  we complete the cricket with the non-edge $v_j,v_j'$ with $j,j' \ne i$ adjacent to $x$.

So every vertex of $U=C \cup X \cup_{i \le 5} X_i$ has degree at most $k+1$ in $W=\bigcup_{i \le 5} Y_i \bigcup_{i,j \le 5} Y_{i,j} \cup Y \cup A$. Note that $U$ is dominated by $v$ and thus, if $|W|$ is bounded by a function of $k$, we can conclude by branching on $W$ and applying Lemma~\ref{lem:dominatingset}. Otherwise, by induction and by Lemma~\ref{lem:almostDisjoint}, the problem can be decided in FPT time.
\end{proof}

In order to prove that \mis is (randomized) FPT in cricket-free graphs, we first apply Lemmas~\ref{lem:cr_k23} and \ref{lem:cr_k15} above. These lemmas ensure that in order to prove our result, it is sufficient to give a (randomized) FPT algorithm for \mis in $\{cricket, K_{2,3}, K_{1,5}\}$-free graphs. To do so, we now prove that \faugramsey{$f$} is (randomized) FPT in $\{cricket, K_{2,3}, K_{1,5}\}$-free graphs, where $f(x) = 2$ for all $x \in \mathbb{N}$. That is, $V(G)$ is partitioned into $X_1 \cup \dots \cup X_k \cup C_1 \cup \dots C_{k-1}$, and, for every $j \in [k-1]$, we have $C_j = \{c_j^1, c_j^2\}$.

In the remainder of the proof, $R$ denotes the set graph.

\begin{lemma}\label{lem:cr_boundeddegree}
 Let $X_iX_j$ be an edge of the set graph. If all the vertices of $X_i$ have degree at most $k$ in $X_j$, we can reduce.
\end{lemma}
\begin{proof}
Assume that all the vertices of $X_i$ have degree at most $k$ on $X_j$. We now use a random reduction. Namely, we delete every vertex of $X_j$ with probability $1/2$. We then delete all vertices $x$ of $X_i$ having a neighbor in $X_j$. After this transformation, there is no edge anymore between $X_i$ and $X_j$ and then the invariant has decreased. 
Moreover, if the instance is positive, then the vertex of $X_j$ in the rainbow solution is still in $X_j$ with probability $1/2$ and the one of $X_i$ is still in $X_i$ with probability at least $1/2^{k}$, which completes the proof.
 \end{proof}

We consider two types of sets $X_i$ depending of the number of elements of $\{c^1_1, \dots, c^1_{k-1}\}$ they see. A vertex $X_i$ of the set graph is \emph{type 1}  if it sees only one element of this set, otherwise it is \emph{type 2}.


\begin{lemma}\label{lem:cr_type2bounded}
In a positive instance, every type 2 vertex of the set graph has degree at most $6$.
\end{lemma}
\begin{proof}
Assume that $X_i$ is of type 2, and let $c^1_j$, $c^1_{j'}$ be adjacent to $X_i$. Since $G$ is $K_{1,5}$-free by Lemma~\ref{lem:cr_k15}, the vertex $c^1_j$ (resp. $c^1_{j'}$) is adjacent to at most three $X_r$'s distinct from $X_i$ (since, if the instance is positive, five elements of a rainbow independent set would induce a $K_{1,5}$). Hence, if $X_i$ has degree at least $7$ in the set graph, it must be adjacent to some $X_r$ which is not itself adjacent to $c^1_j$ nor $c^1_{j'}$. If there is a vertex $x$ in $X_i$ of degree at least $k$ in $X_r$, then either these vertices are independent, in which case we are done, or there is an edge $ab$. But in that case $\{a, b, x, c^1_j, c^1_{j'}\}$ induces a cricket. Hence the degree of every vertex of $X_i$ in $X_r$ is at most $k$, in which case we can reduce by Lemma~\ref{lem:cr_boundeddegree}.
\end{proof}

\begin{lemma}\label{lem:cr_triangle1}
We can reduce so that the set graph does not contain any triangle of type $1$ vertices.
\end{lemma}
\begin{proof}
  W.l.o.g., assume $X_1,X_2,X_3$ is a triangle, each vertex respectively dominated by $C_1,C_2,C_3$. Note that if $C_i$ and $C_j$ with $i \ne j \in \{ 1,2,3\}$ are not distinct then we can replace $c^1_i$ by an independent set of size two in $X_i \cup X_j$ and obtain an independent set of size $k$ intersecting $C_1 \cup \dots \cup C_{k-1}$, which is impossible in an instance of \faugramsey{$f$}. So from now on, we will assume that $C_1,C_2,C_3$ are pairwise distinct.
  
  Now, observe that every vertex $x$ of $X_1$ has a neighborhood in $X_2 \cup X_3$ which is a clique, otherwise there is a cricket. By symmetry the same holds for $X_2$ and $X_3$. So the tripartite graph on $X_1,X_2,X_3$ is a disjoint union of complete tripartite graphs, and we can conclude with a random reduction. Namely, for each component $T$ of the complete tripartite graph between $X_1,X_2,X_3$, we choose to keep $T \cap X_1$ with probability $1/3$, or to keep $T \cap X_2$ with probability $1/3$ or to keep $T \cap X_3$ with probability $1/3$. With probability at least $1/27$, the vertices of the rainbow independent set in $X_1,X_2,X_3$ are still in the resulting graph. Moreover, after this operation, $X_1,X_2,X_3$ is an independent set of the set graph, so the number of edges in the set graph decreased.
\end{proof}

\begin{lemma}\label{lem:cr_claw}
The set graph $R$ is claw-free.
\end{lemma}
\begin{proof}
Assume that a vertex $X_i$ of $R$ has an independent set of size $3$ in its neighborhood, namely $X_1,X_2,X_3$. By Lemma~\ref{lem:cr_boundeddegree}, there exists a vertex $x$ of $X_i$ with at least $k$ neighbors in $X_1$. So there is an edge in its neighborhood in $X_1$. By completing this set with a neighbor of $X_i$ in $X_2$ and $X_3$, we obtain a cricket.
\end{proof}

\begin{lemma}\label{lem:cr_degree}
The maximum degree in the set graph $R$ is  $14$.
\end{lemma}
\begin{proof}
Assume that some $X_i$ has degree at least $15$. By Lemma~\ref{lem:cr_type2bounded}, it must be of type $1$. By Lemma~\ref{lem:cr_triangle1}, the neighborhood of $X_i$ cannot contain two adjacent type 1 vertices. And it cannot contain three non adjacent vertices by Lemma~\ref{lem:cr_claw}. So it has at most $2$ type-1 neighbors.
Now if $X_i$ has $13$ neighbors $\mathcal{X}$ of type 2, since each of them have maximum degree at most $6$ by Lemma~\ref{lem:cr_triangle1} (including type 1 neighbors), the subgraph induced by $\mathcal{X}$ contains an independent set of size $3$, a contradiction with Lemma~\ref{lem:cr_claw}.
\end{proof}

\begin{lemma}\label{lem:cr_bull}
We can reduce so that the set graph $R$ is bull-free.
\end{lemma}
\begin{proof}
  Let $X_1$ be the chin of the bull, and $X_2,X_3$ be the non horns and $X_4,X_5$ be the horns adjacent to respectively $X_2$ and $X_3$. 
  Using an FPT branching, we show that we may assume that for every vertex of $x$ of $X_2$ we have that $N(x) \cap X_4$ contains an edge. Indeed, for a given rainbow independent set, there are two cases: either the vertex of $X_2$ has degree at most $k$ in $X_4$, or it has degree at least $k+1$ in $X_4$. For the first branch, we remove from $X_2$ all vertices with degree greater than $k$, and, by Lemma~\ref{lem:cr_boundeddegree}, we can reduce. We thus end up with the second branch, where every vertex of $X_2$ as degree at least $k+1$ in $X_4$. Since these neighbors are not an independent set (otherwise we are done), they must induce an edge.
  Similarly, we may assume that for every vertex $y$ in $X_3$, we have that $N(y) \cap X_5$ contains an edge.

  First note that, for every edge $xy$ between $X_2,X_3$ $N(x) \cap X_1 = N(y) \cap X_1$. Indeed otherwise, we can assume by symmetry that there exists $z \in X_1$ adjacent to $x$ but not to $y$. Then, using an edge in the neighborhood of $x$ in the horn, we make a cricket.
  
 Using the same idea as previously, we may also assume that every vertex of $X_2$ has degree at least $k$ in $X_3$. 
  
  We distinguish two cases, which correspond to two branches. First assume that the bipartite graph $B$ between $X_2$ and $X_3$ (\ie not taking into account the edges induced by $X_2$ nor $X_3$) is connected. Then all the vertices of $X_2,X_3$ have exactly the same neighborhood in $X_1$. Since we are looking for an independent set containing a vertex of $X_1,X_2$ and $X_3$, all the neighbors of vertices of $X_2$ in $X_1$ can be deleted. After this modification, there is no edge anymore between $X_1$ and $X_2$ nor between $X_1$ and $X_3$.
  
  So we may assume that the bipartite graph $B$ between $X_2$ and $X_3$ is not connected.
  If the solution does not select a vertex in $X_2$ and in $X_3$ in the same component of $B$, then we can conclude using a random reduction: for each connected component $T$ of the bipartite graph, we keep $T \cap X_2$ with probability $1/2$ or we keep $T \cap X_3$ with probability $1/2$. The vertices of the rainbow independent set in $X_2$ and $X_3$ are still in the graph with probability $1/4$. After this branching, the number of edges in the set graph decreases.

  So we may finally assume that the solution selects a vertex in $X_2$ and $X_3$ in the same connected component of $B$. Remark that, for every edge $x,x' \in X_2$ where $x,x'$ lie in distinct connected components of $B$, then $x,x'$ have the same neighborhood in $X_4$. Indeed, otherwise we can assume w.l.o.g. that there is a vertex $y \in X_4$ adjacent to $x$ but not to $x'$. Then $x,x',y$ plus an edge of $N(x) \cap X_3$ induces a cricket. 
  
  Let us denote by $H$ the subgraph of $G[X_2]$ where $xy$ is an edge if $xy$ is an edge of $G$ and $x,y$ are not in the same component of $B$. We now run the following algorithm: we start with $S=\emptyset$ and $T=\emptyset$  and $W=\emptyset$. As long as there remains a component $B'$ of $B$ and $C'$ of $H$ such that $B' \notin S$ and $C' \notin T$ and there exists $x \in B' \cap C'$, we add $B$ in $S$ and $C$ in $T$ and $x \in W$. We repeat this operation as long as we can. We claim that $W$ is an independent set of $G$. Indeed assume by contradiction that $xx'$ is an edge and that $x'$ is added in $W$ after $x$. The vertex $x'$ is not in the connected component of $x$ in $B$ by definition of $S$; and it is not in the connected component of $x$ in $H$ by definition of $T$. Since every edge of $G[X_2]$ that is not in $H$ is in $X_2 \cap B'$ for some component $B'$ of $B$, we have a contradiction.
  So finally all the vertices of $X_2$ are in the connected component of $S$ in $B$ or in the connected component $T$ in $H$. We branch over all the possible choices in order to guess in which connected component of $S$ and $T$ the vertex of $X_2$ in a rainbow solution lies (in each branch, we replace $X_2$ by the corresponding connected component). Clearly there at most $2k$ choices. In the resulting branchings for $S$, all the vertices of $X_2$ now have the same neighborhood in $X_1$ and then we can reduce the number of edges in the set graph. And in the resulting branchings for $T$, all the vertices of $X_2$ now have the same neighborhood in $X_4$ and then we can reduce the number of edges in the set graph, which completes the proof.
%
%
%
%
%
\end{proof}

\begin{lemma}
  We can solve the reduced instance in polynomial time.
\end{lemma}
\begin{proof}
The idea is to prove that the pathwidth of the set graph is bounded by some constant, and then apply a dynamic programming similar to the ones of Lemma~\ref{lem:pathcycle} where the set graph is a path.

  Consider a longest induced path $P$ in the set graph $R$. Since the maximum degree of the set graph is bounded by Lemma~\ref{lem:cr_degree} and that $R$ is connected (and since we can directly conclude if $R$ is small enough), we can assume that $P$ has length at least $14$.  
  
  Note that $P$ dominates $R$. Indeed let $X$ be a vertex of $R$ not adjacent to $P$. Since $R$ is connected, we can assume that $R$ is at distance two from $P$ and let $Y$ be a neighbor of $X$ adjacent to $P$. If $Y$ is only adjacent to an endpoint of $P$, $P$ is not maximal. If it is adjacent to both endpoints but not the internal vertices, there is claw, a contradiction with Lemma~\ref{lem:cr_claw}. If it is adjacent to an internal vertex (but not its neighbors), there is a claw, a contradiction with Lemma~\ref{lem:cr_claw}. If it is adjacent to some (but not all) vertices of $P$, then there is a bull, a contradiction with Lemma~\ref{lem:cr_bull}. So $P$ dominates $R$.
  
  Let $Y$ be a vertex in the neighborhood of $P$. We claim that either $Y$ sees an endpoint of $P$ or that its neighborhood in $P$ consists of at most $4$ consecutive vertices. Indeed, assume that $Y$ is not connected to an endpoint of $P$ and let $X_i$ be its rightmost neighbor on the path $P$. If $Y$ is not adjacent to $X_{i-1}$, the vertex before $X_i$ in $P$, there is a claw, a contradiction with Lemma~\ref{lem:cr_claw}. If it is adjacent to $X_{i-1}$ but not $X_{i-2}$, there is a bull, a contradiction with Lemma~\ref{lem:cr_bull}. So $Y$ has to be adjacent to all of $X_i,X_{i-1},X_{i-2}$. But then if it is adjacent a vertex $X_j$ with $j \le i-4$, then there is a claw, a contradiction with Lemma~\ref{lem:cr_claw}.
  
  It implies that the set graph $R$ has pathwidth at most $85$. Indeed, let us consider a path $p_1,\ldots,p_\ell$ of length $\ell:=|P|$ . Every vertex of $R$ adjacent to an endpoint of $P$ is added in all the bags. For any other vertex $X_i \notin P$ whose first neighbor on $P$ is $p_j$, we add $X_i$ in the bags of $p_j,\ldots,p_{j+3}$. We finally add the vertex $X_i$ of $P$ in the $p_i$ an $p_{i-1}$. All bags contain at most $6 \cdot 14+2= 86$ sets $X_i$ by Lemma~\ref{lem:cr_degree}. So the pathwidth of the set graph is at most $85$. Note moreover that this decomposition can be found in polynomial time (we start with a maximal path by inclusion and either we improve it or we find the decomposition).
  
  So we can now find an independent set in polynomial time using dynamic programming. We order the sets $X_i$ in such a way that if $X_j \succ X_i$ the vertex $X_j$ first appear in a bag later than $X_i$. We now claim that choosing a vertex in $X_{i+1}$ knowing $X_1,\ldots,X_i$ is equivalent to choosing a vertex $X_{i+1}$ only knowing $(\cup_{j \le i} X_i) \cap \mathcal{B}$ where $\mathcal{B}$ is the first bag of the path decomposition containing $X_{i+1}$. Indeed, by definition of path decomposition, if $X_j$ with $j \le i$ is not in $\mathcal{B}$, then there is no edge in the set graph between $X_iX_j$ (since $X_j$ only appears in a subpath of $P$ and has already disappeared). 
  So in order to decide which further vertices can be selected, we only need to keep track of the vertices selected in the current bag $\mathcal{B}$ of the path decomposition. Since the size of the bags is bounded, we obtain a polynomial time dynamic programming algorithm to decide the problem in that case.
\end{proof}
That finishes the proof.
\end{proof}




\end{document}